\newif\ifprocs
\newtheorem{definition}[Definition]{Definition}
\newtheorem{assumption}[Definition]{Assumption}
\newtheorem{notation}[Definition]{Notation}
\newtheorem{claim}[lemma]{Claim}
\newtheorem{remark}[Remark]{Remark}
\newtheorem{theorem}{Theorem}[section]
\newtheorem{lemma}[theorem]{Lemma}
\newtheorem{corollary}[theorem]{Corollary}
\newtheorem{definition}[theorem]{Definition}
\newtheorem{proposition}[theorem]{Proposition}
\newtheorem{claim}[theorem]{Claim}
\newtheorem{assumption}[theorem]{Assumption}
\newtheorem{notation}[theorem]{Notation}
\newcommand{\proofof}[1]{of #1}
\newcommand{\proofof}[1]{Proof of #1}
\renewcommand{\qedhere}{}
\newcommand {\ignore} [1] {}
\def \iter {10 \log_b \calD}
\DeclareMathOperator{\supp}{supp}
\DeclareMathOperator{\diam}{diam}
\DeclareMathOperator*{\EX}{{\mathbb E}}
\DeclareMathOperator{\polylog}{polylog}
\newcommand{\etal}{{\em et al.\ }\xspace}
\providecommand{\card}[1]{\lvert#1\rvert}
\providecommand{\aset}[1]{\{#1\}}
\providecommand{\eqdef}{:=}
\def\compactify{\itemsep=0pt \topsep=0pt \partopsep=0pt \parsep=0pt}
\newcommand{\calD}{{\mathcal D}}
\newcommand{\far}{{\text{far}}}
\newcommand{\near}{{\text{near}}}
\newcommand{\fin}{{\text{fin}}}
\newcommand{\longSub}{{\text{long}}}
\DeclareMathOperator{\texp}{Texp}
\title{Cutting corners cheaply, or how to remove Steiner points%
  \thanks{A preliminary version appeared in Proceedings of the 25th Annual ACM-SIAM
Symposium on Discrete Algorithms, 2014}}
\author{Lior Kamma%
\thanks{This work was supported in part by the Israel Science Foundation (grant 
\#897/13), the US-Israel BSF (grant \#2010418), and by the Citi Foundation.
Part of this work was done while visiting Microsoft Research New England.
Email: \texttt{\{lior.kamma,robert.krauthgamer\}@weizmann.ac.il}
}
\\ The Weizmann Institute
\and Robert Krauthgamer\footnotemark[\value{footnote}]
\\ The Weizmann Institute
\and Huy L. Nguy$\tilde{\mbox{\^{e}}}$n%
\thanks{This work was supported in part by NSF CCF 0832797, and a Gordon Wu Fellowship.
Part of this work was done while interning at Microsoft Research New England.
Email: \texttt{hlnguyen@princeton.edu}
}
\\ Princeton University
}
\date{}
\begin{document}

\maketitle

\begin{abstract}
\ifprocs
  \small
\fi
Our main result is that the Steiner Point Removal (SPR) problem
can always be solved with polylogarithmic distortion,
which answers in the affirmative 
a question posed by Chan, Xia, Konjevod, and Richa (2006).
Specifically, we prove that for every edge-weighted graph $G = (V,E,w)$
and a subset of terminals $T \subseteq V$, 
there is a graph $G'=(T,E',w')$ that is isomorphic to a minor of $G$,
such that for every two terminals $u,v\in T$,
the shortest-path distances between them in $G$ and in $G'$ satisfy
$d_{G,w}(u,v) \le d_{G',w'}(u,v) \le O(\log^5|T|) \cdot d_{G,w}(u,v)$.
Our existence proof actually gives a randomized polynomial-time algorithm.

\ifprocs
  \small
\fi
Our proof features a new variant of metric decomposition.
It is well-known that every finite metric space $(X,d)$ admits 
a $\beta$-separating decomposition for $\beta=O(\log \card{X})$, 
which means that for every $\Delta>0$ there is a randomized 
partitioning of $X$ into clusters of diameter at most $\Delta$,
satisfying the following separation property:
for every $x,y \in X$, the probability they lie in different clusters 
of the partition is at most $\beta\,d(x,y)/\Delta$. 
We introduce an additional requirement in the form of a tail bound:
for every shortest-path $P$ of length $d(P) \leq \Delta/\beta$, 
the number of clusters of the partition that meet the path $P$, denoted $Z_P$,
satisfies $\Pr[Z_P > t] \le 2e^{-\Omega(t)}$ for all $t>0$.
\end{abstract}

\section{Introduction}

\emph{Graph compression} describes the transformation of a given graph $G$ 
into a small graph $G'$ that preserves certain features (quantities) of $G$,
such as distances or cut values.
Notable examples for this genre include graph spanners, distance oracles,
cut sparsifiers, and spectral sparsifiers, 
see e.g.\ \cite{PS89,TZ05,BK96,BSS08} and references therein.
The algorithmic utility of such graph transformations is clear  -- 
once the ``compressed'' graph $G'$ is computed as a preprocessing step,
further processing can be performed on $G'$ instead of on $G$,
using less resources like runtime and memory, 
or achieving better accuracy (when the solution is approximate). 
See more in Section \ref{sec:related}.

Within this context, we study \emph{vertex-sparsification},
where $G$ has a designated subset of vertices $T$,
and the goal is to reduce the number of vertices in the graph
while maintaining certain properties of $T$.
A prime example for this genre is vertex-sparsifiers
that preserve terminal versions of (multicommodity) cut and flow problems, 
a successful direction that was initiated by Moitra \cite{Moitra09} 
and extended in several followups \cite{LM10,CLLM10,MM10,EGKRTT10,Chuzhoy12}.
Our focus here is different, on preserving distances, 
a direction that was seeded by Gupta \cite{Gupta01} more than a decade ago.

Throughout the paper, all graphs are undirected and all edge weights are positive.

\paragraph{Steiner Point Removal (SPR).}
Let $G=(V,E,w)$ be an edge-weighted graph%
and let $T=\aset{t_1,\ldots,t_k}\subseteq V$ be a designated set of $k$ terminals.
Here and throughout, $d_{G,w}(\cdot,\cdot)$ denotes the shortest-path metric 
between vertices of $G$ according to the weights $w$.
The \emph{Steiner Point Removal} problem asks to construct on the terminals
a new graph $G'=(T,E',w')$ such that
(i)  
distances between the terminals are \emph{distorted} at most 
by factor $\alpha\ge 1$, formally
$$
\forall u,v\in T,\qquad d_{G,w}(u,v) \le d_{G',w'}(u,v) \le \alpha\cdot d_{G,w}(u,v);
$$
and (ii) the graph $G'$ is (isomorphic to) a minor of $G$.
This formulation of the SPR problem was proposed by
Chan, Xia, Konjevod, and Richa \cite[Section 5]{CXKR06}
who posed the problem of bounding the distortion $\alpha$
(existentially and/or using an efficient algorithm).
Our main result is to answer their open question.

Requirement (ii) above expresses structural similarity between $G$ and $G'$; 
for instance, if $G$ is planar then so is $G'$.
The SPR formulation above actually came about as a generalization 
to a result of Gupta \cite{Gupta01}, which
asserts that if $G$ is a tree, there exists a tree $G'$, which preserves terminal distances with distortion $\alpha=8$.
Later Chan \etal \cite{CXKR06} observed that this same $G'$ 
is actually a minor of the original tree $G$,
and proved the factor of $8$ to be tight.
The upper bound for trees was later extended by Basu and Gupta \cite{BG08},
who achieve distortion $\alpha=O(1)$ for the larger class of outerplanar graphs.

\paragraph{How to construct minors.}

We now describe a general methodology that is natural for the SPR problem.
The first step constructs a minor $G'$ with vertex set $T$, 
but without any edge weights,
and is prescribed by Definition \ref{defn:tcm}.
The second step determines edge weights $w'$ 
such that $d_{G',w'}$ dominates $d_{G,w}$ on the terminals $T$,
and is given in Definition \ref{defn:sr}.
These steps are illustrated in Figure \ref{fig:defs}.
Our definitions are actually more general (anticipating the technical sections),
and consider $G'$ whose vertex set is sandwiched between $T$ and $V$.
\begin{definition}
A \emph{partial partition} of a set $V$ is a collection $V_1,\ldots,V_k$ 
of pairwise disjoint subsets of $V$, referred to as {\em clusters}.
\end{definition}

\begin{figure}[ht]
  \begin{center}
    \label{fig:defs}
    \includegraphics[scale=0.45]{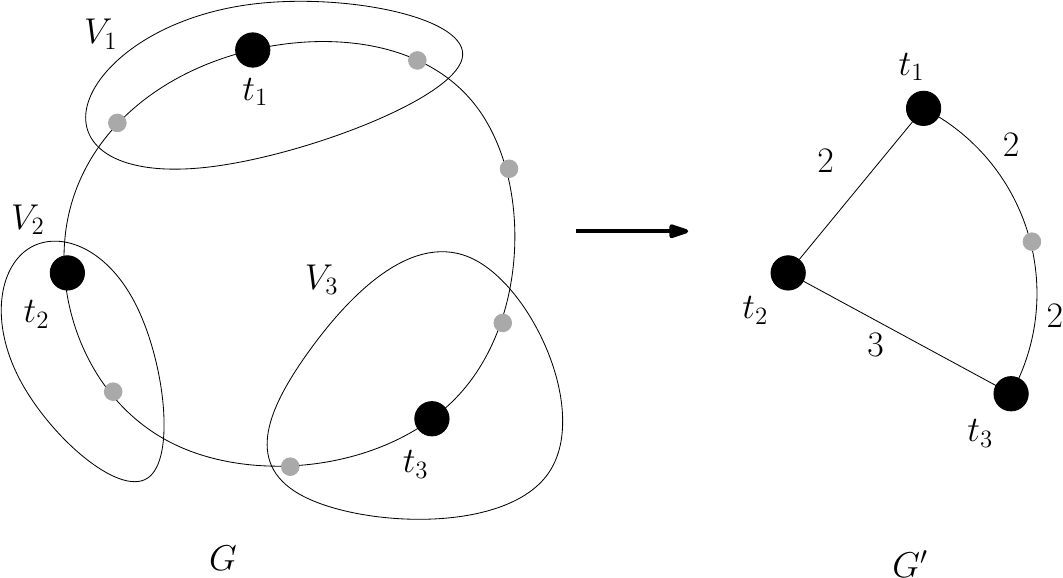}
    \caption{The graph $G$, a $9$-cycle with unit edge weights, 
is depicted on the left with $3$ terminals and disjoint subsets $V_1,V_2,V_3$. 
Its terminal-centered minor $G'$ and the standard-restriction 
edge weights are shown on the right.}
  \end{center}
\hrule
\end{figure}

\begin{definition}[Terminal-Centered Minor] \label{defn:tcm}
Let $G=(V,E)$ be a graph with $k$ terminals $T=\aset{t_1,\ldots,t_k}$,
and let $V_1,\ldots,V_k$ be a partial partition of $V$, 
such that each induced subgraph $G[V_j]$ is connected and contains $t_j$.
The graph $G'=(V',E')$ obtained by contracting each $G[V_j]$ 
into a single vertex that is identified with $t_j$, 
is called the {\em terminal-centered minor} of $G$ induced by $V_1,\ldots,V_k$.
\end{definition}

By identifying the ``contracted super-node'' $V_j$ with $t_j$, 
we may think of the vertex-set $V'$ as containing $T$
and (possibly) some vertices from $V\setminus T$,
which implies $V'\subset V$.
A terminal-centered minor $G'$ of $G$ 
can also be described by a mapping $f:V\to T\cup\{\bot\}$, 
such that $f|_T\equiv\mathrm{id}$ 
and $f^{-1}(\{t_j\})$ is connected in $G$ for all $j \in [k]$.
Indeed, simply let $V_j = f^{-1}(\{t_j\})$ for all $j \in [k]$, 
and thus $V \setminus \left(\cup_jV_j\right) = f^{-1}(\{\bot\})$. 

\begin{definition}[Standard Restriction] \label{defn:sr}
Let $G=(V,E,w)$ be an edge-weighted graph with terminal set $T$,
and let $G'=(V',E')$ be a terminal-centered minor of $G$.
(Recall we view $V'\subset V$.)
The {\em standard restriction} of $w$ to $G'$ is the edge weight $w'$
given by the respective distances in $G$, formally
$$\forall(x,y)\in E',\qquad w'_{xy}\eqdef d_{G,w}(x,y).$$
\end{definition}

This edge weight $w'$ is optimal in the sense that 
$d_{G',w'}$ dominates $d_{G,w}$ (where it is defined, i.e., on $V'$),
and the weight of each edge $(x,y)\in E'$ is minimal
under this domination condition.

\subsection{Main Result} \label{sec:results}

Our main result below gives an efficient algorithm 
that achieves $\polylog(k)$ distortion for the SPR problem.
Its proof spans Sections \ref{sec:TCM} and \ref{sec:proofDiscard},
though the former contains the heart of the matter.

\begin{theorem} \label{thm:main}
Let $G=(V,E,w)$ be an edge-weighted graph with $k$ terminals $T\subseteq V$.
Then there exists a terminal-centered minor $G'=(T,E',w')$ of $G$ that attains distance distortion $O(\log^5 k)$, i.e., 
$$
  \forall u,v \in T,\qquad
  1 \le \frac{d_{G',w'}(u,v)}{d_{G,w}(u,v)}  \le O(\log^5 k).
$$
Moreover, $w'$ is the standard restriction of $w$,
and $G'$ is computable in randomized polynomial time.
\end{theorem}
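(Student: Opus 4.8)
The plan is to construct the terminal-centered minor $G'$ by an iterative
``ball-growing'' process that assigns each non-terminal vertex of $V$ to one
of the terminals, while controlling distortion through the new tail bound on
metric decompositions announced in the abstract. Concretely, I would fix a
base $b>1$ and process distance scales $\Delta_i = b^i$ from the smallest
up to $\diam(G)$. At scale $\Delta_i$, I would apply the $\beta$-separating
decomposition with tail bound (with $\beta = O(\log k)$) to the current
``unassigned'' part of the graph, and use the resulting clusters to extend
the partial partition $V_1,\ldots,V_k$: a cluster that already contains
(part of) some $V_j$ gets merged into $V_j$, while isolated clusters are
left unassigned for a later scale. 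Because each $G[V_j]$ must stay connected
and contain $t_j$, I would actually grow the $V_j$'s along shortest paths in
$G$, so that when a vertex is absorbed into $V_j$ the whole connecting path
(restricted to currently unassigned vertices) is absorbed as well. After
$O(\log_b \diam(G))$ scales — or rather, after a number of scales that only
depends logarithmically on the aspect ratio, which a standard preprocessing
reduces to $\poly(k)$ — every vertex is assigned, giving a map
$f:V\to T$ and hence a terminal-centered minor $G'=(T,E',w')$ with $w'$ the
standard restriction.

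The lower bound $d_{G',w'}\ge d_{G,w}$ is immediate from the optimality
remark after Definition \ref{defn:sr}, so the whole content is the upper
bound. For a terminal pair $u,v$ at distance $D=d_{G,w}(u,v)$, take a
shortest path $P$ in $G$ from $u$ to $v$. I would track how $P$ is
``rounded'' into a path in $G'$: each maximal sub-segment of $P$ lying
inside a single cluster $V_j$ contributes at most the diameter-type cost of
that cluster (bounded by the scale $\Delta_i$ at which it was formed, times
the number of merge steps), and crossing from $V_j$ to $V_{j'}$ costs one
$G'$-edge, whose weight is a genuine $G$-distance. The number of such
crossings at scale $\Delta_i$ is exactly the quantity $Z_{P'}$ for the
relevant sub-path $P'$ of $P$ of length $\le \Delta_i/\beta$, so the tail
bound $\Pr[Z_{P'}>t]\le 2e^{-\Omega(t)}$ gives that in expectation (indeed
with high probability) only $O(1)$ clusters per scale are crossed. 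Summing
the geometric series $\sum_i \Delta_i$ over the $O(\log$-aspect-ratio$)$
relevant scales, and multiplying by the per-scale $O(\log k)$ factors coming
from $\beta$ and from a union bound over the $\binom{k}{2}$ terminal pairs
and over scales, yields distortion $O(\log^6 k)$. The randomized
polynomial-time claim follows since the decomposition of the abstract is
efficiently samplable and there are only polynomially many scales.

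The main obstacle I anticipate is precisely the interaction between the
\emph{connectivity} constraint on the $V_j$'s and the \emph{tail bound} on
cluster-crossings. A single application of a separating decomposition does
not respect terminal-connectivity, so the iterative scheme must re-route and
absorb whole shortest-path segments, and one must argue that this absorption
does not blow up cluster diameters super-logarithmically and does not
destroy the stochastic independence needed to apply the tail bound
scale-by-scale along $P$. Making this rigorous will require a careful
amortized accounting: charging each unit of extra length in $G'$ either to a
cluster-diameter at some scale (bounded via the tail bound, since a long
cluster-segment means many sub-clusters were merged) or to a
cluster-crossing at some scale, and ensuring the total charge to any
scale-$\Delta_i$ event telescopes against $\Delta_i$. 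The remaining pieces —
reducing the aspect ratio to $\poly(k)$ by a standard bucketing/contraction
argument, and the deferred step of discarding the leftover unassigned
vertices so that $V'=T$ exactly (Section \ref{sec:discarding}) — I expect to
be routine relative to this core estimate.
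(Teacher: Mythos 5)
Your high-level plan (grow terminal-centered regions at geometrically increasing scales, track how a fixed shortest path $P$ evolves, prove a high-probability per-pair stretch bound, then union-bound over the $\binom{k}{2}$ pairs) does match the paper's strategy. But the proposal has genuine gaps exactly where the proof is hard, and the device you lean on — applying Theorem~\ref{thm:main2} as a black box at each scale — is one that the paper explicitly disclaims: the end of Section~\ref{sec:techniques} states that the proof of Theorem~\ref{thm:main} ``cannot use the generic form described below'' because it needs terminals, strong (connected) clusters, and cross-scale consistency. The actual construction (Algorithm~\ref{alg:part}) does not invoke any off-the-shelf decomposition; it grows balls around terminals with cumulatively increasing exponential radii, so that connectivity and scale-consistency are built in rather than patched on after a ``merge clusters into $V_j$'' step, which as you describe it is not well-defined (a cluster of the unassigned part cannot ``already contain part of $V_j$'').

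More seriously, the distortion accounting in your sketch does not close. You claim that at scale $\Delta_i$ only $O(1)$ clusters are crossed because of the tail bound applied to ``the relevant sub-path $P'$ of length $\le \Delta_i/\beta$,'' but at scales $\Delta_i \ll D$ the entire current path is in play and the number of crossings is $\Theta(\beta D/\Delta_i)$, not $O(1)$; the $2e^{-\Omega(t)}$ bound only holds for subpaths shorter than $\Delta_i/\beta$, and for longer paths the bound degrades to $O(\lceil d(P)\beta/\Delta\rceil)e^{-\Omega(t)}$ (Lemma~\ref{l:longShortest}). Conversely, at scales $\Delta_i\gg D$ a single detour through a terminal $t_j$ costs $\Theta(\Delta_i)$, so the sum $\sum_i \Delta_i$ over all scales up to $\diam(G)$ is of order $\diam(G)$, giving unbounded distortion rather than $\polylog(k)$. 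What tames this in the paper is precisely the machinery you wave at as ``careful amortized accounting'': the laminar tree $\mathcal T$ of active subpaths with its charging scheme, the short/long dichotomy with parameter $p$, the concentration argument (Propositions~\ref{p:maxBirth}--\ref{p:cond} and the $Y_a$ indicators) showing each short subtree dies within $O(\log k)$ levels, and — crucially — the observation that only the $O(\log^2 k)$ outer iterations with $\log_b(\ell/k^2)\le i\le\log_b(\ell/p)$ carry non-negligible charge, everything earlier summing to $o(1)\ell$ and everything later lying inside short subtrees of bounded depth. None of this is anticipated in your sketch, and the $O(\log^6 k)$ you produce by ``multiplying per-scale $O(\log k)$ factors'' is a numerology, not a derivation. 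Finally, the aspect-ratio reduction is not routine bucketing: since the clusters must remain connected minors, the paper needs the recursive super-terminal construction of Section~\ref{sec:discarding} with its own stretch-propagation induction (Lemma~\ref{l:discard}), and even then the reduction only brings $\calD$ down to $2^{k^3}$, leaving polynomially many outer iterations — which is exactly why the ``only $O(\log^2 k)$ iterations matter'' step is indispensable.
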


This theorem answers a question of Chan \etal \cite{CXKR06}. %
The only distortion lower bound known for general graphs 
is a factor of $8$ (which actually holds for trees) \cite{CXKR06},
and thus it remains a challenging open question whether 
$O(1)$ distortion can be achieved in general graphs.

Our proof of Theorem \ref{thm:main} begins similarly to the proof of Englert \etal \cite{EGKRTT10}, 
by iterating over the ``distance scales'' $2^i$, 
going from the smallest distance $d_{G,w}(u,v)$ among all terminals $u,v \in T$,
towards the largest such distance.
Each iteration $i$ first employs a ``stochastic decomposition'',
which is basically a randomized procedure that finds 
clusters of $V$ whose diameter is at most $2^i$.
Then, some clusters are contracted to a nearby terminal,
which must be ``adjacent'' to the cluster;
this way, the current graph is a minor of the previous iteration's graph, 
and thus also of the initial $G$.
After iteration $i$ is executed, 
we roughly expect ``neighborhoods'' of radius proportional to $2^i$ 
around the terminals to be contracted.
As $i$ increases, these neighborhoods get larger
until eventually all the vertices are contracted into terminals,
at which point the edge weights are set according to the standard restriction.
To eventually get a minor, 
it is imperative that every contracted region is connected.
To guarantee this, we perform the iteration $i$ decomposition in the graph 
resulting from previous iterations' contractions (rather than the initial $G$),
which introduces further dependencies between the iterations.

The main challenge is to control the distortion,
and this is where we crucially deviate from \cite{EGKRTT10}
(and differ from all previous work).
In their randomized construction of a minor $G'$, 
for every two terminals $u,v\in T$ it is shown that $G'$ contains
a $uv$-path of \emph{expected length} at most $O(\log k) d_G(u,v)$.
Consequently, they design a \emph{distribution} $D$ over minors $G'$,
such that the stretch $d_{G'}(u,v)/d_G(u,v)$ between any $u,v\in T$
has expectation at most $O(\log k)$. %
Note, however, that it is possible that no $G'\in\supp(D)$ achieves a low stretch 
simultaneously for all $u,v\in T$.
In contrast, in our randomized construction of $G'$,
the stretch between $u,v\in T$ is polylogarithmic
\emph{with high probability}, say at least $1-1/k^3$.
Applying a simple union bound over the $\binom{k}{2}$ terminal pairs,
we can then obtain a single graph $G'$ achieving a polylogarithmic distortion.
Technically, these bounds follow by fixing in $G$ 
a shortest-path $P$ between two terminals $u,v\in T$,
and then tracking the execution of the randomized algorithm
to analyze how the path $P$ evolves into a $uv$-path $P'$ in $G'$.
In \cite{EGKRTT10}, the length of $P'$ is analyzed in expectation,
which by linearity of expectation, 
follows from analyzing the case where $P$ consists of a single edge;
In contrast, we provide for $P$ a high-probability bound,
which inevitably must consider (anti)correlations along the path.

The next section features a new tool that we developed in our quest 
for high-probability bounds, and which may be of independent interest.
For the sake of clarity, we provide below  
a vanilla version that excludes technical complications 
such as terminals, strong diameter, and consistency between scales.
The proof of Theorem \ref{thm:main} actually does require these complications, 
and thus cannot use the generic form described below.

\subsection{A Key Technique: Metric Decomposition with Concentration} \label{sec:techniques}

\paragraph{Metric decomposition.}

Let $(X,d)$ be a metric space, and let $\Pi$ be a partition of $X$.
Every $S \in \Pi$ is called a {\em cluster},
and for every $x\in X$, we use $\Pi(x)$ to denote the unique cluster 
$S \in \Pi$ such that $x \in S$.
In general, a stochastic decomposition of the metric $(X,d)$ 
is a distribution $\mu$ over partitions of $X$,
although we usually impose additional requirements.
The following definition is perhaps the most basic version,
often called a separating decomposition or a Lipschitz decomposition.

\begin{definition}\label{def:beta}
A metric space $(X,d)$ is called {\em $\beta$-decomposable} 
if for every $\Delta>0$, there is a probability distribution $\mu$ over 
partitions of $X$, satisfying the following requirements:
\begin{enumerate} \compactify
\renewcommand{\theenumi}{\emph{(\alph{enumi})}}
\item \label{it:DiameterBound}
Diameter bound: for all $\Pi \in \supp(\mu)$ and all $S\in \Pi$, 
  \ $\diam(S) \le \Delta$. 
\item \label{it:SeparatingProb}
Separation probability: for all $x,y \in X$,
  \ $\displaystyle\Pr_{\Pi \sim \mu}[\Pi(x) \ne \Pi(y)] \le \tfrac{\beta d(x,y)}{\Delta}$.
\end{enumerate}
\end{definition}
We note that the diameter bound holds with respect to distances in $X$;
in the case of a shortest-path metric in a graph, 
this is known as a {\em weak-diameter} bound.

Bartal \cite{Bartal96} proved that every $n$-point metric is 
$O(\log n)$-decomposable, and that this bound is tight. 
We remark that by now there is a rich literature on metric decompositions,
and different variants of this notion may involve terminals,
or (in a graphical context) connectivity requirements inside each cluster,
see e.g.\ \cite{LS93,Bartal96, CKR01, FRT04, Bartal04, LN05, GNR10, EGKRTT10, MN07, AGMW10, KR11}.

\paragraph{Degree of separation.}
Let $P=(x_0,x_1,\ldots,x_\ell)$ be a {\em shortest path},
i.e., a sequence of points in $X$ such that
$\sum_{i\in[\ell]}{d(x_{i-1},x_{i})} = d(x_0,x_\ell).$
We denote its length by $d(P)\eqdef d(x_0,x_\ell)$,
and say that $P$ {\em meets} a cluster $S\subseteq X$ 
if $S \cap P\neq \emptyset$. 
Given a partition $\Pi$ of $X$, 
define the \emph{degree of separation} $Z_P(\Pi)$ as
the number of different clusters in the partition $\Pi$ that meet $P$. 
Formally,
\begin{equation}
Z_P(\Pi) 
  \eqdef \sum_{S \in \Pi} \mathbbm{1}_{\aset{\text{$P$ meets $S$}} }.
\label{eq:dosdef}
\end{equation}

Throughout, we omit the partition $\Pi$ when it is clear from the context.
When we consider a random partition $\Pi \sim\mu$, 
the corresponding $Z_P=Z_P(\Pi)$ is actually a random variable.
If this distribution $\mu$ satisfies 
requirement \ref{it:SeparatingProb} of Definition \ref{def:beta}, then 
\begin{align}
\ifprocs
  \label{eq:ExpectedCount}
  \EX_{\Pi\sim\mu}[Z_P] 
  & \leq 1 + \sum_{i \in [\ell]} \Pr_{\Pi\sim\mu}[\Pi(x_{i-1})\neq \Pi(x_i)] 
  \\ \nonumber
  & \leq 1 + \sum_{i \in [\ell]} \frac{\beta d(x_{i-1},x_i)}{\Delta} 
  =   1 + \frac{\beta d(P)}{\Delta}.
\else
  \label{eq:ExpectedCount}
  \EX_{\Pi\sim\mu}[Z_P] 
  \leq 1 + \sum_{i \in [\ell]} \Pr_{\Pi\sim\mu}[\Pi(x_{i-1})\neq \Pi(x_i)] 
  \leq 1 + \sum_{i \in [\ell]} \frac{\beta d(x_{i-1},x_i)}{\Delta}
  =   1 + \frac{\beta d(P)}{\Delta}.
\fi
\end{align}
But what about the concentration of $Z_P$?
More precisely, can every finite metric be decomposed, such that 
every shortest path $P$ admits a tail bound on its degree of separation $Z_P$?

\paragraph{A tail bound.}
We answer this last question in the affirmative using the following theorem.
We prove it, or actually a stronger version that does involve terminals,
in Section \ref{sec:main2}.

\begin{theorem}\label{thm:main2}
For every $n$-point metric space $(X,d)$ and every $\Delta > 0$ 
there is a probability distribution $\mu$ over partitions of $X$
that satisfies, for $\beta = O(\log n)$,
requirements \ref{it:DiameterBound}-\ref{it:SeparatingProb} 
of Definition \ref{def:beta}, and furthermore
\begin{enumerate} \compactify
\renewcommand{\theenumi}{\emph{(\alph{enumi})}}
\addtocounter{enumi}{2}
\item \label{it:DegreeSeparation}
Degree of separation: For every shortest path $P$ of length $d(P)\leq \frac{\Delta}{\beta}$, 
\begin{equation}
  \label{eq:SimpleTail}
  \forall t \ge 1, \qquad 
  \Pr_{\Pi\sim\mu}\left[Z_P >  t \right] \le 2e^{- \Omega(t)}.
\end{equation}
\end{enumerate}
\end{theorem}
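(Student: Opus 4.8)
The plan is to build the distribution $\mu$ from the classical CKR/Bartal-style decomposition, but to make the concentration of $Z_P$ visible we choose the clustering radii in a way that decouples as much as possible the fates of the different points on the path $P$. Concretely, I would sample a single random radius $R$ uniformly in $[\Delta/4,\Delta/2]$ (so the diameter bound \ref{it:DiameterBound} holds with room to spare) together with a uniformly random permutation $\pi$ of $X$; each cluster is then $B(\sigma_i,R)\setminus\bigcup_{j<i}B(\sigma_j,R)$ where $\sigma=(\sigma_1,\sigma_2,\dots)$ is $X$ in the order given by $\pi$. The separation probability \ref{it:SeparatingProb} is the standard CKR analysis: for $x,y\in X$, conditioning on the center $c$ that is first in $\pi$ among all centers within distance $R+d(x,y)$ of $x$, the pair is separated only if $R$ falls in an interval of length $d(x,y)$ and $c$ is ``responsible'', and summing over candidate centers ordered by distance gives the $\beta=O(\log n)$ bound.

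For the tail bound \ref{it:DegreeSeparation}, fix a shortest path $P=(x_0,\dots,x_\ell)$ with $d(P)\le\Delta/\beta$. The key structural fact I would exploit is that $P$ has small diameter compared to $R$: since $d(P)\le\Delta/\beta\le R/ \Theta(\log n)$, the \emph{entire} path lies inside $B(c,R)$ whenever $c$ is within distance $R-d(P)$ of $x_0$, and lies entirely outside $B(c,R)$ whenever $d(x_0,c)>R+d(P)$; only centers $c$ in the ``annulus'' $\{c: R-d(P)\le d(x_0,c)\le R+d(P)\}$ can ``cut'' the path, i.e.\ be the first-in-$\pi$ center whose ball contains a proper nonempty prefix-type piece of $P$. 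I would then bound $Z_P$ by $1$ plus the number of distinct centers that are ``active'' on $P$, where a center $c$ is active if it is the $\pi$-minimal center among those whose ball meets $P$, \emph{and} when we reveal centers one at a time in $\pi$-order, $c$ captures at least one new vertex of $P$ that was not yet captured. Each newly-active center after the first must lie in the annulus described above (otherwise it either captures nothing new or captures everything remaining in one shot and no further centers become active).

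Now I would turn this into a geometric-type tail. Reveal the centers of $X$ one at a time in the random order $\pi$. Let $A$ be the (random, but measurable from $R$) set of centers lying in the annulus $\{c : R-d(P)\le d(x_0,c)\le R+d(P)\}$, and let $m=|A|$. The first center (in $\pi$-order) that hits $P$ at all either lies outside the annulus — in which case it swallows all of $P$ at once and $Z_P\le 2$ — or it lies in $A$. After the first active center, each subsequent center examined in $\pi$-order is ``the next one to possibly become active''; the crucial point is that whether it becomes active (captures a new, still-uncaptured vertex of $P$) depends on it lying in $A$ and on its position relative to the already-captured part of $P$, but in any case the number of $A$-members that will ever be examined before $P$ is fully captured, intermixed with non-$A$ centers, behaves like: each time we draw the next element of $A$ in $\pi$-order, with constant probability the previously-revealed center already covered all of $P$ (because a single ball of radius $R$ covers all of $P$ unless its center is in the thin annulus, and even within the annulus a constant fraction of the $R$-choices make one ball suffice once combined with geometry). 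More carefully, I would argue: condition on $R$; let $N$ be the number of elements of $A\cup\{$first non-$A$ center to hit $P\}$ revealed in $\pi$ up to and including the moment $P$ is fully captured. Then $Z_P\le N+1$, and $N$ is stochastically dominated by a geometric random variable with constant success probability, because at each step the conditional probability that the current revealed center already finishes off $P$ (combined with all previous ones) is bounded below by a constant — this uses that the centers of $A$ are revealed in a uniformly random relative order, so the last few of them (in distance-from-$x_0$ order, which is what controls coverage) appear early with constant probability.

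The main obstacle I anticipate is exactly this last step: showing that the number of centers that successively ``shave off'' new pieces of $P$ has an exponential tail, rather than just bounded expectation. The danger is an adversarial path that weaves through many balls so that each new center in $\pi$-order peels off one more vertex; I would defeat this by using the randomness of $R$ and of the relative order of $A$: (i) for most of the mass of $R\in[\Delta/4,\Delta/2]$, the annulus of width $2d(P)\le 2\Delta/\beta$ is thin enough that being ``the center responsible for the far end of $P$'' versus ``the near end'' is decided by a random coin, so that in $\pi$-order the very first $A$-center to appear already covers $P$ with constant probability; and (ii) iterating this, each new active center needs both to appear next in $\pi$ among $A$ and to fail to cover the remainder, and the ``fail to cover'' event has probability bounded away from $1$ by a constant independent of $t$, yielding $\Pr[Z_P>t]\le 2e^{-\Omega(t)}$. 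Packaging (i)–(ii) cleanly, so that the conditioning on $R$ and on the captured prefix does not destroy the independence one needs for the geometric domination, is the technical heart; I would formalize it by exposing $\pi$ restricted to $A$ as a uniformly random sequence and running a potential argument on ``how much of $P$ (measured as a suffix in the distance-from-$x_0$ order) remains uncaptured''.
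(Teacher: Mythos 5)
Your proposal replaces the paper's decomposition with a single-shared-radius CKR scheme (one $R\sim U[\Delta/4,\Delta/2]$ plus a random permutation $\pi$), and this is a genuinely different route — but the gap you flag at the end is fatal: the tail bound actually \emph{fails} for that scheme.

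The paper clusters in a \emph{fixed} order using \emph{independent} truncated-exponential radii $R_j$ per center, with $S_j = B(t_j,R_j)\setminus\bigcup_{m<j}B_m$. The engine of its tail proof (Lemma \ref{l:shortShortest}) is the memoryless property: conditioned on ball $j$ reaching $P$ (i.e.\ $R_j\ge d(t_j,P)$), it fails to swallow all of $P$ with probability at most $3/4$, and these events are \emph{independent} across the balls that reach $P$. Since $\{Z_P>t\}$ forces the first $\Theta(t)$ reaching balls to all fail to swallow, one gets $(3/4)^{\Theta(t)}$. In your scheme there is no per-center radius: once $R$ is sampled, which centers swallow $P$ and which are annulus centers is a \emph{deterministic} geometric fact, and the permutation randomness cannot manufacture the per-center independence. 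Your step (ii) — that ``the `fail to cover' event has probability bounded away from $1$ by a constant independent of $t$'' — has no source of randomness left to support it once you condition on $R$; and you only get to sample $R$ once, not once per step, so item (i) cannot be iterated.

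Here is a concrete obstruction. Take $P=(x_0,x_1,x_2)$ a shortest path with $d(P)=\Theta(\Delta/\log n)$, and add $m=n-3$ points $c_1,\dots,c_m$ with $d(c_i,x_j)=\rho_i+d(x_0,x_j)$ and $d(c_i,c_{i'})=\rho_i+\rho_{i'}$, where the $\rho_i$ are evenly spread over an interval of length $d(P)$ just above $\Delta/4$. With probability $\Theta(d(P)/\Delta)=\Theta(1/\log n)$, the sampled $R$ is such that $\Theta(m)$ of the $c_i$ are simultaneously annulus centers, each covering a nested prefix of $P$ of a length determined by $\rho_i$, while the only swallowers touching $P$ are the three path points themselves. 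Conditioned on this $R$, the permutation typically reveals $\Theta(m)$ annulus centers before the first swallower, and the number of ``new coverage'' events among them is the number of left-to-right minima of a random sequence, which is $\Theta(\log m)=\Theta(\log n)$ with constant probability. Hence $\Pr[Z_P>c\log n]\ge\Omega(1/\log n)$ for a small constant $c>0$, whereas $2e^{-\Omega(t)}$ with $t=c\log n$ would require this to be $n^{-\Omega(1)}$. (This is consistent with the expectation bound $\EX[Z_P]=O(1)$, which CKR \emph{does} satisfy: the rare $\Theta(1/\log n)$-probability event inflates $Z_P$ only to $\Theta(\log n)$.) A secondary issue is that your potential argument tracking ``the uncaptured suffix of $P$'' tacitly assumes $B(c,R)\cap P$ is a contiguous arc, but in a general metric $i\mapsto d(c,x_i)$ is only $1$-Lipschitz in arc length and can oscillate, so the intersection can be several disjoint segments; the paper avoids this by using only the one-sided triangle-inequality implication ``reaching $P$ with slack $d(P)$ implies swallowing all of $P$.''
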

The tail bound \eqref{eq:SimpleTail} can be compared to a naive estimate 
that holds for every $\beta$-decomposition $\mu$:
using \eqref{eq:ExpectedCount} we have $\EX[Z_P]\le 2$,
and then by Markov's inequality $\Pr[Z_P \ge t] \le 2/t$.

We remark that for general metric spaces, it is known that $\beta = O(\log n)$ is tight \cite{Bartal96}. However, for requirements \ref{it:DiameterBound}-\ref{it:SeparatingProb} of Definition \ref{def:beta} several decompositions are known to have better values of $\beta$ 
for special families of metric spaces 
(e.g., metrics induced by planar graphs \cite{KPR93}). 
We leave it open whether for these families
the bounds of Theorem~\ref{thm:main2} can be improved, say to $\beta=O(1)$.

\subsection{Related Work} \label{sec:related}

\paragraph{Applications.}
Vertex-sparsification, and the ``graph compression'' approach in general, 
is obviously beneficial when $G'$ can be computed from $G$ very efficiently, 
say in linear time, 
and then $G'$ may be computed on the fly rather than in advance.
But compression may be valuable also in scenarios
that require the storage of many graphs, like archiving and backups,
or rely on low-throughput communication, like distributed or remote processing.
For instance, the succinct nature of $G'$ may be indispensable for computations
performed frequently, say on a smartphone,
with preprocessing done in advance on a powerful machine.

We do not have new theoretical applications that leverage our SPR result,
although we anticipate these will be found later. 
Either way, we believe this line of work will prove technically productive,
and may influence, e.g., work on metric embeddings 
and on approximate min-cut/max-flow theorems.

\paragraph{Probabilistic SPR.}
Here, the objective is not to find a single graph $G'=(T,E',w')$,
but rather a distribution $D$ over graphs $G'=(T,E',w')$,
such that every graph $G' \in \supp(D)$ is isomorphic to a minor of $G$
and its distances $d_{G',w'}$ dominate $d_{G,w}$ (on $T \times T$),
and such that the distortion inequalities hold in expectation, that is,
$$\forall u,v \in T,\qquad \EX_{G'\sim D}[d_{G',w'}(u,v)] \le \alpha\cdot d_{G,w}(u,v).$$
This problem, first posed by Chan \etal in \cite{CXKR06}, was answered by Englert \etal in \cite{EGKRTT10}
with $\alpha = O(\log |T|)$.

\paragraph{Distance Preserving Minors.} 
This problem is a relaxation of SPR in which the minor $G'$ may contain a few non-terminals,
while preserving terminal distances exactly.
Formally, the objective is to find a small graph $G' = (V',E',w')$ such that
(i) $G'$ is isomorphic to a minor of $G$;
(ii) $T \subseteq V' \subseteq V$; and
(iii) for every $u,v \in T$,\ $d_{G',w'}(u,v) = d_{G,w}(u,v)$.
This problem was originally defined by Krauthgamer, Nguy$\tilde{\mbox{\^{e}}}$n and Zondiner \cite{KNZ14},
who showed an upper bound $|V'|\leq O(|T|^4)$ for general graphs,
and a lower bound of $\Omega(|T|^2)$ that holds even for planar graphs.

\section{Metric Decomposition with Concentration} \label{sec:main2}

In this section we prove a slightly stronger result than that of Theorem~\ref{thm:main2},
stated as Theorem \ref{thm:main3} below. 
Let $(X,d)$ be a metric space, and let $\aset{t_1,\ldots, t_k}\subseteq X$ 
be a designated set of terminals.
Recall that a partial partition $\Pi$ of $X$ 
is a collection of pairwise disjoint subsets of $X$. 
For a shortest path $P$ in $X$,
define $Z_P=Z_P(\Pi)$ using Eqn.~\eqref{eq:dosdef},
which is similar to before, except that now $\Pi$ is a partial partition.
We first extend Definition~\ref{def:beta}.

\begin{definition}
We say that $X$ is {\em $\beta$-terminal-decomposable with concentration} if for every $\Delta>0$ there is a probability distribution $\mu$ over partial partitions of $X$, satisfying the following properties.
\begin{itemize} \compactify
	\item {\em Diameter Bound:} For all $\Pi \in \supp(\mu)$ and all $S \in \Pi$,\ $diam(S) \le \Delta$.
	\item {\em Separation Probability:} For every $x,y \in X$, $$\Pr_{\Pi \sim \mu}[\exists S \in \Pi \ \text{such that} \ |S \cap \{x,y\}| =1 \ ] \le \tfrac{\beta d(x,y)}{\Delta} \; .$$
	\item {\em Terminal Cover:} For all $\Pi \in \supp(\mu)$, we have \mbox{$T \subseteq \bigcup_{S \in \Pi}S$}.
	\item {\em Degree of Separation:} For every shortest path $P$ and every $t \ge 1$, 
          \begin{multline*}
            \Pr_{\Pi \sim \mu} \left[Z_P >  t \max\{\tfrac{\beta d(P)}{\Delta}, 1 \}\right] 
\ifprocs
            \\ 
\fi
            \leq O\left( \min \left\{ k\beta , \left\lceil \tfrac{\beta d(P)}{\Delta}\right\rceil \right\} \right)  \ e^{- \Omega(t)}.
          \end{multline*}

\end{itemize}
\end{definition}

\begin{theorem} \label{thm:main3}
Every finite metric space with $k$ terminals 
is $(4\log k)$-terminal-decomposable with concentration.
\end{theorem}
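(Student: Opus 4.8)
The plan is to construct the distribution $\mu$ via the classical ball-growing / CKR-style decomposition (pick a uniformly random permutation of all points and a random radius $R$ from a suitable distribution on $[\Delta/4, \Delta/2]$, then assign each point to the first center in the permutation whose ball of radius $R$ contains it), modified so that terminals are favored: we place the $k$ terminals first in the random order, and leave a point unassigned (in $\bot$) if it is not covered by any terminal's ball — no, this would violate nothing but would weaken separation; instead the cleaner route is to run the standard single-shot CKR partition but truncate to a \emph{partial} partition by discarding clusters whose center is a non-terminal only when needed. Let me instead follow the approach that clearly yields all four properties: use the exponential-distribution formulation. Sample $\beta' = \beta/\Delta$ worth of scale, i.e. draw i.i.d.\ truncated exponential radii and a random permutation $\pi$ of $X$ in which the terminals $t_1,\dots,t_k$ come first (in random internal order), followed by the non-terminals in random order; assign $x$ to $\pi(j)$ for the least $j$ with $d(x,\pi(j)) \le R_j$, and if no such $j$ exists put $x$ in $\bot$. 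Because every terminal $t_i$ satisfies $d(t_i,t_i)=0 \le R_i$, the Terminal Cover property is immediate. The Diameter Bound holds since each cluster sits inside a ball of radius $\le \Delta/2$. The Separation Probability is the standard CKR bound with $\beta = O(\log k)$: the point is that only the $k$ terminals that come first in the order can "cut" a close pair $x,y$ before a non-terminal does, because any non-terminal appearing before all terminals either captures both of $x,y$ or neither with the dominant probability mass — more carefully, one runs the usual charging argument but observes that the relevant "settling" events are controlled by the terminals, giving $\beta = 4\log k$ rather than $4\log n$.

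The heart of the matter, and the step I expect to be the main obstacle, is the Degree of Separation tail bound. Here the key structural observation is that along a shortest path $P = (x_0,\dots,x_\ell)$, the clusters met by $P$ appear as \emph{contiguous arcs}: if $x_a$ and $x_c$ lie in the same cluster $S$ with center $\pi(j)$, then every $x_b$ with $a \le b \le c$ has $d(x_b,\pi(j)) \le \max\{d(x_a,\pi(j)), d(x_c,\pi(j))\} \le R_j$... this is false in general, but the weaker fact that suffices is this: order the centers $\pi(j_1),\pi(j_2),\dots$ that actually capture some point of $P$ by their position in $\pi$; then walking along $P$ from $x_0$ to $x_\ell$, each new cluster we enter has a center that is "responsible" for an interval of $P$, and the number of such intervals is what we must bound. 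The clean way to get the $e^{-\Omega(t)}$ tail is a \emph{peeling / coupon-collector} argument: condition on the permutation being revealed center-by-center; at each step, the portion of $P$ not yet captured shrinks, and the probability that revealing the next relevant center fails to capture a constant fraction of the remaining uncaptured length is bounded by a constant (using the radius distribution on an interval of multiplicative width $2$). Hence after $O(t)$ "rounds" the uncaptured length is, with probability $1 - e^{-\Omega(t)}$, below the scale at which $P$ (of length $\le \Delta/\beta$, or in the general statement, a sub-path of length $\approx \Delta/\beta$) contributes no further clusters. Counting the clusters created across these rounds gives $Z_P = O(t)$ with the stated probability; the $\max\{d(P),\Delta/\beta\}$ normalization and the $\min\{k\beta, \lceil d(P)\beta/\Delta\rceil\}$ prefactor come from chopping a long path into $\lceil d(P)\beta/\Delta\rceil$ sub-paths each of length $\le \Delta/\beta$, applying the single-segment bound to each, and union-bounding — and separately noting that $Z_P$ can never exceed the trivial bound governed by the $k$ terminal clusters plus the $\bot$-interleavings, which is where the $k\beta$ term enters.

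To make the peeling argument rigorous I would set it up as follows. Fix $P$ with $d(P) \le \Delta/\beta$. Reveal $\pi$ and the radii $R_1, R_2, \dots$ sequentially. Define $U_m \subseteq \{0,1,\dots,\ell\}$ to be the set of path-indices not yet assigned to any cluster after processing the first $m$ centers, and let $L_m$ be the "diameter-weighted size" of $U_m$, i.e.\ the length of $P$ restricted to $U_m$ (sum of consecutive gaps within the still-uncaptured set). Initially $L_0 = d(P) \le \Delta/\beta$. The claim is: whenever $L_m > 0$, with probability $\ge c$ (absolute constant) over center $\#(m{+}1)$ and its radius, the new cluster captures path-points totalling at least, say, half of $L_m$ — this uses that the leftmost still-uncaptured point $x_i$ has \emph{some} point of $X$ within distance $L_m \le \Delta/\beta \le R_{m+1}/2$ of it along $P$, so the ball of the next center, if that center happens to be near $x_i$ (constant probability given the random order restricted to the not-yet-used centers that are close), swallows a constant fraction. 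This is the delicate point: the "constant probability" needs the radius interval to have multiplicative width bounded below \emph{and} needs enough candidate centers near $P$; the standard CKR analysis supplies exactly this via the inequality $d(P) \le \Delta/\beta$ and $\beta = 4\log k$. Granting the claim, $L_m$ drops by a constant factor every time with constant probability, so it reaches $0$ after $O(t)$ steps except with probability $e^{-\Omega(t)}$ (a standard Chernoff bound on the number of "successful" steps among $O(t)$ trials), and $Z_P$ is at most the number of steps that created a cluster meeting $P$, which is $O(t)$. I expect the bookkeeping to reconcile "steps" with "clusters" and to handle the $\bot$ set (points in no ball) to be the fiddliest part, but no new idea beyond the contiguity-plus-peeling scheme is needed.
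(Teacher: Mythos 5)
The paper's construction is different and much simpler than what you describe, and your sketch of the tail bound has a real gap that the paper's far/near split is designed to close.

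\textbf{The construction.} The paper does not run CKR over all $n$ points with terminals placed first in a random permutation. It uses \emph{only the $k$ terminals} as ball centers, in a \emph{fixed} order $t_1,\ldots,t_k$, with radii $R_j \sim \texp(\lambda,\Delta)$ where $\lambda = \Delta/\log k$, and sets $S_j = B(t_j,R_j)\setminus\bigcup_{m<j}B_m$; points covered by no terminal ball are simply left unassigned. This is what gives $\beta = O(\log k)$ directly (Bartal's argument with $k$ centers), and it makes Terminal Cover automatic. Your proposal keeps non-terminals as potential centers, which — unless you discard their clusters, which you never commit to — brings all $n$ points into the separation charging argument and yields $O(\log n)$, not $O(\log k)$. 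Your hand-wave that the ``settling events are controlled by the terminals'' is not an argument; a pair $x,y$ far from all terminals can perfectly well be cut by a non-terminal center.

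\textbf{The tail bound.} Your peeling claim — ``with constant probability the next revealed center captures at least half of the remaining uncaptured length $L_m$'' — is false as stated, and this is precisely the obstruction the paper's proof is structured around. The next ball to touch $P$ may come from a terminal at distance close to $\Delta$ from $P$; such a ball can graze $P$, creating a new cluster while capturing an arbitrarily small fraction of $L_m$. Nothing in the exponential radius distribution prevents this, and nothing in your argument charges for it. Moreover, your step ``if that center happens to be near $x_i$ (constant probability given the random order\ldots)'' presumes there are multiple candidate centers near the path to randomize over; there may be exactly one terminal near $P$, or none. The paper's Lemma~\ref{l:shortShortest} resolves exactly this by partitioning the terminals into $J_\far$ (distance $> \Delta - 2\lambda$ from $P$) and $J_\near$: the far balls are each unlikely to even reach $P$ (probability $\le 8/k$, independent, hence Chernoff with $\EX[Z_\far]\le 8$), while a near ball that reaches $P$ \emph{swallows $P$ entirely} with probability $\ge 1/4$ by memorylessness of $\texp$ and the hypothesis $d(P)<\lambda$ — at which point no later $S_j$ can meet $P$. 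This gives a clean geometric tail without any ``constant fraction'' claim. Your peeling scheme cannot substitute for this: it neither bounds the grazing contribution of far centers, nor uses the ``swallow all of $P$'' event that makes the geometric decay kick in. The long-path case (Lemma~\ref{l:longShortest}) is then just chopping $P$ into $\lceil d(P)/\lambda\rceil$ short segments and union-bounding, which is the one piece of your outline that does match the paper.
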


Define the {\em truncated exponential} with parameters $\lambda,\Delta>0$, 
denoted $\texp(\lambda,\Delta)$,
to be distribution given by the probability density function 
$g_{\lambda,\Delta}(x) = \frac{1}{\lambda(1 - e^{- \Delta / \lambda})}e^{- x/ \lambda}$
for all $x \in [0,\Delta)$. 

We are now ready to prove Theorem \ref{thm:main3}. For simplicity of notation, we prove the result with cluster diameter at most $2 \Delta$ instead of $\Delta$.
Fix a desired $\Delta>0$,
and set for the rest of the proof $\lambda\eqdef\frac{\Delta}{\log k}$ 
and $g \eqdef g_{\lambda,\Delta}$. 
For $x \in X$ and $r> 0$, we use the standard notation of a closed ball
$B(x,r) \eqdef \{y \in X:\ d(x,y) \le r\}$. 
We define the distribution $\mu$ via the following procedure that samples 
a partial partition $\Pi$ of $X$.
\begin{algorithm}[H]
\begin{algorithmic}[1]
\FOR{$j = 1,2,\ldots,k$}
\STATE choose $R_j \sim \texp(\lambda,\Delta)$ independently at random, and let $B_j = B(t_j,R_j)$.
\STATE set $S_j = B_j \setminus \bigcup_{m=1}^{j-1}B_m$.
\ENDFOR
\RETURN $\Pi = \{ S_1,\ldots,S_k\} \setminus \{\emptyset\}$.
\end{algorithmic}
\end{algorithm}
The diameter bound and terminal partition properties hold by construction. The proof of the separation event property is identical to the one in \cite[Section 3]{Bartal96}.
The following two lemmas prove the degree of separation property, 
which will conclude the proof of Theorem~\ref{thm:main3}. 
Fix a shortest path $P$ in $X$,
and let us assume that $t/2$ is a positive integer;
a general $t\ge1$ can be reduced to this case up to a loss
in the unspecified constant.

\begin{lemma} \label{l:shortShortest}
If $d(P) < \lambda$, then $\Pr[Z_P > t]  \le 2e^{ - \Omega(t)}$.
\end{lemma}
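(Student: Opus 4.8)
The plan is to analyze the random variable $Z_P$ directly from the cluster-growing procedure when the path $P$ is short, namely $d(P) < \lambda$. Observe that a new cluster is ``charged'' against $P$ precisely at steps $j$ where $B_j$ intersects $P$ but $P$ has not yet been fully swallowed; more usefully, $Z_P$ is bounded by the number of indices $j \in [k]$ such that the ball $B_j = B(t_j, R_j)$ separates some pair of consecutive-along-$P$ points, i.e.\ meets $P$ without containing all of the as-yet-uncovered portion of $P$. Since $d(P) < \lambda$ and every cluster has diameter at most $2\Delta$, once a ball $B_j$ reaches any point of $P$ it reaches within distance $\lambda$ of \emph{all} of $P$; so whether $B_j$ ``cuts'' $P$ is controlled by whether $R_j$ falls in a narrow window of width roughly $d(P)$ around $d(t_j, P)$. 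Concretely I would show: conditioned on the history $R_1, \ldots, R_{j-1}$, the probability that step $j$ creates a new cluster meeting $P$ (given that $P$ is not already fully covered) is at most some constant $p < 1$, \emph{independently of the history}. This uses the memorylessness-up-to-truncation of $\texp(\lambda, \Delta)$: for any threshold, the conditional probability that $R_j$ lands in an interval of length $d(P) < \lambda$ just past that threshold, given $R_j$ exceeds the threshold, is at most $\frac{1 - e^{-d(P)/\lambda}}{\,\cdot\,} \le d(P)/\lambda \cdot (\text{const}) $, which can be made a constant bounded away from $1$; and if $P$ is not yet covered, continuing requires $R_j$ to lie beyond the relevant threshold in the first place.

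The key step is then a stochastic-domination argument: $Z_P$ is dominated by a sum of (dependent) indicator variables, each conditionally Bernoulli$(\le p)$ given the past, so $Z_P$ is stochastically dominated by a Negative-Binomial-type count — more precisely, I would argue $\Pr[Z_P > t]$ decays geometrically because each additional unit of $Z_P$ requires an independent-looking ``failure'' event of probability at most $p$. Formally, writing $Z_P = \sum_{j} Y_j$ where $Y_j$ is the indicator that step $j$ contributes a fresh cluster to $P$, and $\EX[Y_j \mid Y_1, \ldots, Y_{j-1}] \le p$ on the event $\{Y_1 + \cdots + Y_{j-1} < \text{(still uncovered)}\}$, a standard martingale/Chernoff bound for such adaptively-bounded sums (or a direct induction on $t$) gives $\Pr[Z_P > t] \le (2p)^{t/2} \cdot (\text{const}) = 2 e^{-\Omega(t)}$ once $p$ is a small enough constant. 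The factor-$2$ and the reduction to $t/2 \in \mathbb{Z}$ flagged in the setup is exactly to absorb the constant in the exponent cleanly.

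The main obstacle I anticipate is making the ``conditional probability $\le p$ independent of history'' claim fully rigorous: the ball radii $R_1, \ldots, R_{j-1}$ determine which portion of $P$ remains uncovered, and I need the bound on the probability that $B_j$ cuts the remaining portion to hold uniformly over \emph{all} possible remaining sub-paths, including the case where only a tiny fragment of $P$ is left (in which case a new cut is even less likely) and the case where $t_j$ is very far from $P$ (in which case $R_j$ must be huge, and the truncation at $\Delta$ interacts with the memoryless property). I would handle this by splitting on $d(t_j, P)$ relative to $\Delta$: if $d(t_j, P) \ge \Delta$ then $B_j$ cannot meet $P$ at all and $Y_j = 0$ deterministically; otherwise $d(t_j, P) < \Delta$ and the relevant event is $R_j \in [d(t_j, P'_{\mathrm{rem}}) - d(P),\ \cdot\,]$ for the uncovered remainder $P'_{\mathrm{rem}}$, whose probability under $\texp(\lambda, \Delta)$, conditioned on $R_j$ being large enough to reach $P'_{\mathrm{rem}}$ at all, is at most $\frac{\int_0^{d(P)} e^{-x/\lambda}\,dx}{\int_0^{\text{(large)}} e^{-x/\lambda}\,dx}$-type quantity, which is at most $1 - e^{-d(P)/\lambda} \le d(P)/\lambda < 1$. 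Choosing the implied constants so this is bounded by a fixed $p < 1$ (possibly after noting that ``$P$ not yet covered'' forces an extra independent constant-probability event, shrinking $p$ further) closes the argument.
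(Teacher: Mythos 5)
Your overall plan---show each additional cluster meeting $P$ costs a near-memoryless ``failure'' event of constant probability, then get a geometric tail via stochastic domination---matches the spirit of the paper's argument, and the martingale/domination scheme you describe would indeed close the second half of the argument. The genuine gap is in your handling of the truncation of $\texp(\lambda,\Delta)$, which you flag but do not actually resolve. Splitting only at $d(t_j,P)\ge\Delta$ versus $d(t_j,P)<\Delta$ is not enough: for a terminal with $d(t_j,P)=\Delta-\varepsilon$ and $\varepsilon\ll\lambda$, conditioning on $R_j\ge d(t_j,P)$ leaves $R_j$ supported on an interval of length only $\varepsilon$, and the conditional probability that $R_j$ falls in a window of width $\approx d(P)$ (or $\lambda$) past the threshold is then close to $1$, not bounded away from $1$. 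Concretely, your claimed bound
\[
\frac{\int_0^{d(P)}e^{-x/\lambda}\,dx}{\int_0^{(\text{large})}e^{-x/\lambda}\,dx}\le 1-e^{-d(P)/\lambda}
\]
is false whenever the denominator's range is shorter than, say, $\lambda$; it can exceed $1$. These ``almost-far'' terminals are exactly the ones your split leaves unguarded.

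The paper's fix is to split at $\Delta-2\lambda$ rather than at $\Delta$: terminals $j$ with $d(t_j,P)>\Delta-2\lambda$ are declared \emph{far} and handled by a completely different argument. For such $j$, $\Pr[B_j\cap P\neq\emptyset]\le\Pr[R_j\ge\Delta-2\lambda]\le O(1/k)$, so the expected number of far balls reaching $P$ is $O(1)$, and since these indicators are independent across $j$, a Chernoff bound gives an exponential tail for their count. Only the \emph{near} terminals ($d(t_j,P)\le\Delta-2\lambda$) admit the memorylessness-style bound, and there the denominator range is at least $2\lambda$, giving the ratio $\frac{1-e^{-1}}{1-e^{-2}}\le 3/4$ cleanly. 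Your proof needs this far/near dichotomy (with the threshold at $\Delta-\Theta(\lambda)$, not $\Delta$) plus the separate Chernoff estimate for far balls; without it the single ``$p<1$ uniformly over history'' claim does not hold. Once you make that split, your approach is essentially the paper's, with your sequential martingale conditioning playing the role of the paper's fixed conditioning on the event $\mathcal{E}$ (which variables $R_j$ reach $P$).
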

\begin{proof}
Split the $k$ terminals into 
$J_\far \eqdef \{ j \in [k]:\ d(t_j,P) > \Delta - 2\lambda \}$ 
and $J_\near \eqdef [k] \setminus J_\far$.
Define random variables 
$Z_\far\eqdef \#\aset{j \in J_\far:\ B_j \cap P \ne \emptyset}$ 
and $Z_\near\eqdef \#\aset{j \in J_\near:\ S_j \cap P \ne \emptyset}$.
Then $Z_P \le Z_\far+Z_\near$ and 
\begin{align*}
\ifprocs
\Pr[Z_P > t] &\le \Pr[Z_\far+Z_\near > t] \\
&\le \Pr[Z_\far>t/2] + \Pr[Z_\near> t/2].
\else
\Pr[Z_P > t] \le \Pr[Z_\far+Z_\near > t] \le \Pr[Z_\far>t/2] + \Pr[Z_\near> t/2].
\fi
\end{align*}
For every $j \in J_\far$, 
\begin{align*}
\ifprocs
\Pr[B_j \cap P \ne \emptyset] 
  &\le \Pr[R_j \geq \Delta - 2\lambda] \\
  &= \int_{\Delta- 2\lambda}^{\Delta}{g(x)dx} \\
	&= \frac{k}{k-1}(e^{- \frac{\Delta- 2\lambda}{\lambda}} - e^{- \frac{\Delta}{\lambda}}) \le \frac{8}{k},
\else
\Pr[B_j \cap P \ne \emptyset] 
  \le \Pr[R_j \geq \Delta - 2\lambda] 
  = \int_{\Delta- 2\lambda}^{\Delta}{g(x)dx} = \frac{k}{k-1}(e^{- \frac{\Delta- 2\lambda}{\lambda}} - e^{- \frac{\Delta}{\lambda}}) \le \frac{8}{k},
\fi
\end{align*}
and therefore $\EX[Z_\far] \le 8$.
Since $Z_\far$ is the sum of independent indicators, 
by the Chernoff bound, $\Pr[Z_\far> t/2] \le 2^{- t/2}$ 
for all $t\ge 32e$.
For smaller $t$, observe that $\Pr[Z_\far = 0] \ge (1-8/k)^k\ge \Omega(1)$, 
and thus for every $t \ge 1$ we have $\Pr[Z_\far> t/2] \le e^{-\Omega(t)}$.

Next, consider the balls among $\{B_j:\ j \in J_\near\}$ 
that have non-empty intersection with $P$. Let $m$ denote the number of such balls, 
and let $j_1< \ldots < j_m$ denote their indices.
In other words, we condition henceforth on an event $\mathcal E \in \{0,1\}^{J_{near}}$ 
that determines whether $R_j\ge d(t_j,P)$ occurs or not for each $j\in J_\near$. The indices of coordinates of $\mathcal E$ that are equal to $1$ are exactly $j_1, \ldots j_m$.
For $a \in [m]$, let $Y_a$ be the indicator variable for the event that the
ball $B_{j_a}$ does \emph{not} contain $P$. Note that since $\{R_j\}_{j \in [k]}$ are independent, then so are $\{Y_a\}_{a \in [m]}$. Then
\begin{align*}
\ifprocs
  \Pr\Big[Y_a = 1 \mid &{\mathcal E}\Big] 
  = \Pr\Big[P \not \subseteq B_{j_a} \mid P \cap B_{j_a} \ne \emptyset\Big] \\
  &\le \Pr\Big[R_j < d(t_{j_a}, P) + \lambda \mid R_j \ge d(t_{j_a},P)\Big] \\
  &\le \frac{1 - e^{-1}}{1 - e ^{- 2}} 
  \le \frac{3}{4}.
\else
  \Pr\Big[Y_a = 1 \mid {\mathcal E}\Big] 
  &= \Pr\Big[P \not \subseteq B_{j_a} \mid P \cap B_{j_a} \ne \emptyset\Big] \\
  &\le \Pr\Big[R_j < d(t_{j_a}, P) + \lambda \mid R_j \ge d(t_{j_a},P)\Big] 
  \le \frac{1 - e^{-1}}{1 - e ^{- 2}} 
  \le \frac{3}{4}.
\fi
\end{align*}
Having conditioned on $\mathcal E$,
the event $\aset{Z_\near>t/2}$ implies that $m > t/2$ and moreover, $Y_a=1$ for all $a \in [t/2]$,
and since $\{Y_a\}_{a \in [m]}$ are independent, 
$\Pr[Z_\near>t/2 \mid {\mathcal E}] \le (3/4)^{t/2} \le e^{-Ct}$
for an appropriate constant $C>0$. 
The last inequality holds for all such events $\mathcal E$ 
(with the same constant $C>0$),
and thus also without any such conditioning.

Altogether, we conclude that
$ 
  \Pr[Z_P > t] 
  \le 2e^{ - \Omega(t)}.
$
\end{proof}

\begin{lemma} \label{l:longShortest}
If $d(P) \ge \lambda$, then $\Pr[Z_P > t d(P) / \lambda]  \le O\left(\min\left\{ k \log k,\left\lceil \frac{d(P)}{\lambda}\right\rceil \right\}\right)  e^{ - \Omega(t)}$.
\end{lemma}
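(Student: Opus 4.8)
The plan is to reduce the long-path case to the short-path case already handled in Lemma~\ref{l:shortShortest}, by chopping the shortest path $P$ into consecutive sub-paths of length roughly $\lambda$ and summing their degrees of separation. Concretely, write $m\eqdef\lceil d(P)/\lambda\rceil$ and split $P$ into sub-paths $P_1,\ldots,P_m$ (each a shortest path, being a contiguous piece of a shortest path) with $d(P_r)\le\lambda$ for every $r$. Since any cluster $S\in\Pi$ that meets $P$ must meet at least one $P_r$, and a cluster meeting two ``interface'' vertices between consecutive sub-paths is counted in at most two of the $Z_{P_r}$, we get the deterministic bound $Z_P \le \sum_{r=1}^m Z_{P_r}$ (or with a harmless additive $+m$ slack if one prefers to count interface vertices separately). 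The target tail event $\{Z_P > t\,d(P)/\lambda\}$ with $d(P)/\lambda \ge m/2$ roughly asks for $\sum_r Z_{P_r} > \Omega(tm)$, i.e.\ the $Z_{P_r}$ average more than $\Omega(t)$, which should be unlikely since each has $\EX[Z_{P_r}]=O(1)$ by \eqref{eq:ExpectedCount}.

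The key steps, in order, are: (1) Fix the decomposition of $P$ into $P_1,\ldots,P_m$ and record the deterministic inequality $Z_P\le\sum_r Z_{P_r}+O(m)$. (2) For each $r$, Lemma~\ref{l:shortShortest} gives $\Pr[Z_{P_r}>s]\le 2e^{-\Omega(s)}$ for all $s\ge1$; in particular each $Z_{P_r}$ is a nonnegative random variable with an exponential tail, hence $\EX[e^{c Z_{P_r}}]\le C$ for some absolute constants $c,C>0$. (3) If the $Z_{P_r}$ were independent we could immediately apply a Chernoff-type bound to $\sum_r Z_{P_r}$: $\Pr[\sum_r Z_{P_r} > \tau m] \le e^{-c\tau m}\prod_r \EX[e^{cZ_{P_r}}] \le e^{-c\tau m}C^m$, which for $\tau$ a large enough constant multiple of $t$ is at most $e^{-\Omega(tm)} = e^{-\Omega(t d(P)/\lambda)}$, comfortably stronger than the claimed bound. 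The prefactor $\min\{k\log k, m\}$ in the statement is mild: one may either absorb it (since $m e^{-\Omega(tm)}\le e^{-\Omega(tm)}$ after adjusting constants, and also $m=O(k\log k)$ is not needed here but $\Pr[Z_P>t\,d(P)/\lambda]$ is trivially at most $\Pr[Z_P \ge t]$ which is handled by viewing $P$ through the at most $k$ relevant terminals), or just carry it through a union bound. (4) Conclude the stated inequality, and hence, combined with Lemma~\ref{l:shortShortest}, the Degree of Separation property and Theorem~\ref{thm:main3}.

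The main obstacle is step~(3): the variables $Z_{P_r}$ are \emph{not} independent, since the same random radii $R_1,\ldots,R_k$ govern the clustering of every sub-path. The fix is the same conditioning device used inside Lemma~\ref{l:shortShortest}: condition on the event $\mathcal E$ recording, for each terminal $j$, whether $R_j \ge d(t_j,P)$ (equivalently whether $B_j$ meets $P$ at all), and then expose the radii in a fixed order, analyzing for each sub-path $P_r$ the ``long chain of balls fails to swallow $P_r$'' event as in the $Y_a$ argument. The crucial point is that the bound $\Pr[Y_a=1\mid\text{earlier history}]\le 3/4$ in Lemma~\ref{l:shortShortest} holds pointwise over \emph{all} conditionings, so one can build a single martingale/exposure argument along $P$ as a whole: order the terminals whose balls meet $P$, and for each such ball charge its ``non-covering'' event; the number of clusters meeting $P$ is at most $1$ plus the number of such non-covering events, and each occurs with conditional probability $\le 3/4$ regardless of the past. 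This yields $\Pr[Z_P > s \mid \mathcal E]\le (3/4)^{s-O(m)}$ directly, sidestepping independence entirely; one then integrates out the $Z_\far$-type contribution (terminals far from the corresponding piece) exactly as before, now using that there are at most $m$ pieces so the total ``far'' expectation is $O(m)$ and a Chernoff bound over the independent far-indicators gives $\Pr[Z_\far^{\text{total}} > s]\le e^{-\Omega(s)}$ for $s = \Omega(m)$, with a $\min\{k\log k, m\}$-type prefactor for small $s$ coming from $\Pr[Z_\far^{\text{total}}=0]\ge\Omega(1)^m$ being too weak and instead bounding crudely. Setting $s \asymp t\,d(P)/\lambda \ge tm/2$ makes the $(3/4)^{s-O(m)}$ and $e^{-\Omega(s)}$ factors both $e^{-\Omega(t)\cdot\Omega(m)}\le e^{-\Omega(t)}$, with the advertised $O(\min\{k\log k,\lceil d(P)/\lambda\rceil\})$ prefactor, completing the proof.
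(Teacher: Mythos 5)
Your opening move — chop $P$ into $m=\lceil d(P)/\lambda\rceil$ contiguous pieces of length at most $\lambda$ and feed each to Lemma~\ref{l:shortShortest} — is exactly the paper's plan, and you also correctly record $Z_P\le\sum_r Z_{P_r}$. But from there you overcomplicate and, more importantly, drop a key observation.

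For the $\lceil d(P)/\lambda\rceil$ prefactor you reach for a moment-generating-function/Chernoff bound over $\sum_r Z_{P_r}$, then get stuck on the fact that the $Z_{P_r}$ are dependent, and propose a global martingale/exposure fix. None of this is needed: since $r=\lceil d(P)/\lambda\rceil\le 2\,d(P)/\lambda$ (as $d(P)\ge\lambda$), the event $\{Z_P>t\,d(P)/\lambda\}$ together with $Z_P\le\sum_{i\in[r]}Z_{P_i}$ forces $\sum_i Z_{P_i}>tr/2$, hence by averaging some single $Z_{P_i}>t/2$; a union bound over $i\in[r]$ and Lemma~\ref{l:shortShortest} immediately give $O(r)e^{-\Omega(t)}$. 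Beyond being unnecessary, your martingale fix is shaky: the per-ball bound $\Pr[Y_a=1\mid\cdot]\le 3/4$ in Lemma~\ref{l:shortShortest} crucially uses that the subpath has length $<\lambda$ (so a ball meeting it covers it with constant probability); the ``non-covering'' event of a ball with respect to the \emph{whole long path} $P$ has probability close to $1$, not $\le 3/4$, and the $-O(m)$ slack you invoke to paper over this is not justified.

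The $k\log k$ branch of the $\min$ is the genuine missing idea. You dismiss it as ``mild'' and gesture at ``viewing $P$ through the at most $k$ relevant terminals,'' but that does not yield the stated bound. The paper's mechanism is: for each terminal $t_j$, the set $\mathcal{A}_j=\{i\in[r]:P_i\cap B(t_j,\Delta)\ne\emptyset\}$ has size at most $4\Delta/\lambda=4\log k$, because $P$ is a shortest path and any two points of $P$ within distance $\Delta$ of $t_j$ are within distance $2\Delta$ of each other; and $Z_{P_i}=0$ with certainty whenever $i\notin\bigcup_j\mathcal{A}_j$, since every cluster lies in some $B(t_j,\Delta)$. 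Thus the union bound need only range over $|\bigcup_j\mathcal{A}_j|\le 4k\log k$ indices, giving the $O(k\log k)\,e^{-\Omega(t)}$ bound. Without this observation you have no route to the $k\log k$ term, and your proposed bound is incomplete.
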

\begin{proof}
Treating $P$ as a continuous path, 
subdivide it into $r \eqdef \left\lceil d(P) / \lambda \right\rceil$ segments,
say segments of equal length that are (except for the last one)\
half open and half closed.
The induced subpaths $P_1,\ldots,P_r$ of $P$ 
are disjoint (as subsets of $X$) and have length at most $\lambda$ each,
though some of subpaths may contain only one or even zero points of $X$.
Writing $Z_P=\sum_{i\in[r]} Z_{P_i}$, we can
apply a union bound and then Lemma~\ref{l:shortShortest} on each $P_i$, 
to obtain
\begin{align*}
\ifprocs
\Pr[Z_P > t d(P) / \lambda] 
  &\le \Pr\Big[\exists i \in [r] \ \text{such that} \ Z_{P_i} > t/2\Big] \\
  &\le O \left(\left\lceil \frac{d(P)}{\lambda} \right\rceil \right) \cdot e^{- \Omega(t)}.
\else
\Pr[Z_P > t d(P) / \lambda] 
  \le \Pr\Big[\exists i \in [r] \ \text{such that} \ Z_{P_i} > t/2\Big] 
  \le O \left(\left\lceil \frac{d(P)}{\lambda} \right\rceil \right) \cdot e^{- \Omega(t)}.
\fi
\end{align*}
Furthermore, for every $j \in [k]$, 
let ${\cal A}_j \eqdef \{i \in [r] : P_i \cap B(t_j,\Delta) \ne \emptyset\}$, 
and since $P$ is a shortest path, 
$|{\cal A}_j| \le 4 \Delta / \lambda = 4 \log k$.
Observe that $Z_{P_i}=0$ (with certainty) 
for all $i\notin \cup_{j}{\cal A}_j$,
hence
\begin{align*}
\ifprocs
\Pr[Z_{P} &> t d(P) / \lambda ] \\
  &\le \Pr\Big[\exists i \in \cup_{j\in[k]}{\cal A}_j\ \text{ such that } Z_{P_i} > t/2 \Big] \\
  &\le 4k \log k e^{- \Omega(t)} \; .
\else
\Pr[Z_{P} > t d(P) / \lambda ] 
  \le \Pr\Big[\exists i \in \cup_{j\in[k]}{\cal A}_j\ \text{ such that } Z_{P_i} > t/2 \Big] 
  \le 4k \log k e^{- \Omega(t)} \; .
\fi
\qedhere
\end{align*}
\end{proof}

By substituting $\beta = 4 \log k$ and $\lambda = \Delta/ \log k$, it is easy to verify that Lemmas \ref{l:shortShortest} and \ref{l:longShortest}
complete the proof of Theorem \ref{thm:main3}.

\section{Terminal-Centered Minors: Main Construction} \label{sec:TCM}

This section proves Theorem \ref{thm:main} when 
$\calD \eqdef \frac{\max_{u,v \in T}{d_G(u,v)}}{\min_{u,v \in T}{d_G(u,v)}}$
satisfies the following assumption 
(the extension to the general case is proved in Section~\ref{sec:proofDiscard}).
\begin{assumption} \label{a:polyLogDelta}
$\calD \le 2^{k^3}$.
\end{assumption}
By scaling all edge weights, we may further assume that 
$\min_{u,v \in T}{d_G(u,v)} = 1$.

\begin{notation}
Let $V_1, \ldots, V_k \subseteq V$.
For $S \subseteq [k]$, denote $V_S \eqdef \bigcup_{j \in S}V_j$. 
In addition, denote $V_\bot \eqdef V \setminus V_{[k]}$ 
and $V_{\bot+j} \eqdef V_\bot \cup V_j$ for any $j \in [k]$. 
\end{notation}

We now present a randomized algorithm that, 
given a graph $G=(V,E,w)$ and terminals $T\subset V$,
constructs a terminal-centered minor $G'$ as stated in Theorem~\ref{thm:main}.
The algorithm maintains a partial partition $\{V_1,V_2,\ldots,V_k\}$ of $V$, 
starting with $V_j = \{t_j\}$ for all $j \in [k]$. 
The sets grow monotonically during the execution of the algorithm. 
We may also think of the algorithm as if it maintains a mapping 
$f : V \to T \cup \{\bot\}$, 
starting with $f(t_j)=t_j$ for all $j \in [k]$ and gradually assigning a value in $T$ 
to additional vertices, which correspond to the set $V_{[k]}$. 
Thus, we will also refer to the vertices in $V_{[k]}$ as {\em assigned}, 
and to vertices in $V_\bot$ as {\em unassigned}.
The heart of the algorithm is two nested loops (lines \ref{al:outer}-\ref{al:endOuter}). During every iteration of the outer loop, the inner loop performs $k$ iterations, one for every terminal $t_j$. 
Every inner-loop iteration picks a random radius (from an exponential distribution) 
and ``grows'' $V_j$ to that radius (but without overlapping any other set) thus removing nodes from $V_\bot$ and assigning them to $V_j$. 
Every outer-loop iteration increases the expectation of the radius distribution.
Eventually, all nodes are assigned, 
i.e.\ $\{V_1,V_2,\ldots,V_k\}$ is a partition of $V$.
Note that the algorithm does not actually contract the clusters at the end of each iteration of the outer loop. 
However, subsequent iterations grow each $V_j$ 
only in the subgraph of $G$ induced by the respective $V_{\bot+j}$,
which is effectively the same as contracting clusters to their respective terminals at the end of each outer-loop iteration.

Every cluster in the partial partition maintained by the algorithm needs to induce a connected subgraph of $G$, 
and thus we cannot directly use the result of Section~\ref{sec:main2}, 
but rather apply more subtle arguments which use the same idea. 
In particular, the algorithm has to grow the clusters so that they do not overlap. We therefore require the following definition.
\begin{definition}
For $U \subseteq V$, let $G[U]$ denote the subgraph of $G$ induced by $U$, with induced edge lengths (i.e. $w|_{E(G[U])}$).
For a subgraph $H$ of $G$ with induced edge lengths, a vertex $v \in V(H)$ and $r>0$, denote $B_H(v,r) \eqdef \{u \in V(H):\ d_H(u,v) \le r\}$, where $d_H$ is the shortest path metric in $H$ induced by $w$.
\end{definition}

\begin{algorithm}[H]
\caption{Partitioning $V$}
\label{alg:part}
\begin{algorithmic}[1]
\REQUIRE $G = (V,E,w),\; T =\{t_1,\ldots,t_k\}\subseteq V$
\ENSURE A partition $\{V_1,V_2,\ldots,V_k\}$ of $V$.
\STATE set $b \leftarrow 1 + 1/(45\log k)$
\STATE for every $j \in [k]$ set $V_j \leftarrow \{t_j\}$, $r_j=0$.
\STATE set $i \leftarrow 0$. \COMMENT{$i$ is the iteration number of the outer loop.}
\WHILE{$V_{[k]} \ne V$} \label{al:outer} 
\STATE $i \leftarrow i+1$.
\FORALL{$j \in [k]$} \label{al:inner}
\ifprocs
\STATE choose independently\! at\!  random\! $R_j^i \sim \exp(b^i)$.
\else
\STATE choose independently at  random $R_j^i \sim \exp(b^i)$.
\fi
\STATE $r_j \leftarrow r_j + R_j^i$.
\STATE $V_j \leftarrow V_j \cup B_{G[V_{\bot+j}]}(t_j,r_j)$. 
  \COMMENT{This is the same as $V_j \leftarrow B_{G[V_{\bot+j}]}(t_j,r_j)$.}
\ENDFOR
\ENDWHILE \label{al:endOuter}
\RETURN $\{V_1,V_2,\ldots,V_k\}$.
\end{algorithmic}
\end{algorithm}

\begin{claim}
The following properties hold throughout the execution of the algorithm.
\begin{enumerate} \compactify
	\item For all $j \in [k]$, $V_j$ is connected in $G$, and $t_j \in V_j$.
	\item For every $j_1,j_2 \in [k]$, if $j_1 \ne j_2$, then $V_{j_1} \cap V_{j_2} = \emptyset$.
	\item For every outer loop iteration $i$ and every $j \in [k]$, 
if $V_j'$ denotes the set $V_j$ at the beginning of the $i$-th iteration (of the outer loop), and $V_j''$ denotes the set $V_j$ at the end of that iteration, 
then $V_j' \subseteq V_j''$.
\end{enumerate}
\end{claim}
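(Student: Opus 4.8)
The plan is to prove the three properties by induction on the steps of the algorithm, where each "step" is one execution of lines 7--9 inside the inner loop (i.e., one update of a single $V_j$). All three properties hold trivially at initialization, when $V_j=\{t_j\}$ for each $j$, since singletons are connected, contain their terminal, and are pairwise disjoint (as the $t_j$ are distinct). So it suffices to show that a single update preserves all three invariants.

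First I would verify the parenthetical remark in line 9, namely that after the update $V_j = B_{G[V_{\bot+j}]}(t_j,r_j)$, not merely $V_j \cup B_{G[V_{\bot+j}]}(t_j,r_j)$. This follows because $r_j$ only increases: by the inductive hypothesis the old $V_j$ equals $B_{G[V_{\bot+j}^{\mathrm{old}}]}(t_j, r_j^{\mathrm{old}})$, and since $V_\bot$ only shrinks while the old $V_j$ stays inside the new one, one checks $V_\bot^{\mathrm{old}} \cup V_j^{\mathrm{old}} \subseteq V_\bot^{\mathrm{new}} \cup V_j^{\mathrm{new}}$ is false in general --- rather the relevant point is that any vertex within distance $r_j^{\mathrm{old}}$ of $t_j$ in $G[V_{\bot+j}^{\mathrm{old}}]$ is still within distance $r_j^{\mathrm{new}} \ge r_j^{\mathrm{old}}$ of $t_j$ in $G[V_{\bot+j}^{\mathrm{new}}]$, because the induced subgraph only grew (the vertices removed from $V_\bot$ were added to $V_j$). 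Hence the old ball is contained in the new ball and the union is redundant. With that settled, Property~3 (monotonicity within an outer iteration) is immediate, since within a single inner-loop iteration $V_j$ is replaced by a superset of itself, and the other sets $V_{j'}$, $j'\ne j$, are untouched.

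For Property~1, connectivity of $V_j$ in $G$: a ball $B_H(t_j,r)$ in any connected subgraph $H$ containing $t_j$ is connected in $H$ (for any vertex $u$ in the ball, a shortest $t_j$--$u$ path in $H$ has all its vertices in the ball, by the triangle inequality / subpath-optimality), hence connected in $G$ since $H$ is a subgraph of $G$. Applying this with $H = G[V_{\bot+j}]$, which contains $t_j$, gives that the updated $V_j$ is connected in $G$; and $t_j \in V_j$ since $R_j^i \ge 0$ so $t_j \in B_{G[V_{\bot+j}]}(t_j,r_j)$, and anyway $t_j$ was already in $V_j$. For Property~2, disjointness: only $V_j$ changes, and it changes by absorbing vertices of $B_{G[V_{\bot+j}]}(t_j,r_j) \subseteq V_{\bot+j} = V_\bot \cup V_j$; the newly absorbed vertices lie in $V_\bot$, which by the inductive hypothesis is disjoint from every $V_{j'}$ with $j' \ne j$. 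Hence $V_j \cap V_{j'} = \emptyset$ continues to hold for all $j' \ne j$, and the other pairs are unaffected.

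The only genuinely delicate point --- the step I expect to need the most care --- is making the bookkeeping about $V_\bot$ precise across a single update, since $V_\bot = V \setminus V_{[k]}$ is defined in terms of \emph{all} the $V_{j'}$, and one must be careful that "grow $V_j$ inside $G[V_{\bot+j}]$" cannot reach into any other $V_{j'}$: indeed $V_{\bot+j}$ explicitly excludes $\bigcup_{j' \ne j} V_{j'}$, so the ball is confined to $V_\bot \cup V_j$ by construction, which is exactly what Property~2 needs. Everything else is a routine unwinding of definitions. I would present the argument as a single induction over inner-loop steps, handling the base case in one line and then checking each of the three invariants is preserved by a generic step, in the order Property~3, Property~1, Property~2 (or interleaved, since they are nearly independent given the "ball stays inside $V_{\bot+j}$" observation).
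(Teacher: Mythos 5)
The paper states this claim without proof (the sentence ``Follows easily by induction on $i,j$'' appears under the \emph{next} claim, not this one), so there is no paper argument to compare against; what follows is an assessment of your proof on its own terms. Your overall plan --- induction over inner-loop steps, with a trivial base case $V_j=\{t_j\}$ --- is the right one, and your arguments for disjointness (Property~2) and monotonicity (Property~3) are correct.

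The flaw is in your verification of the parenthetical remark in line~9. You correctly observe that $V_{\bot+j}^{\mathrm{old}} \subseteq V_{\bot+j}^{\mathrm{new}}$ fails in general, but you then justify the distance bound by asserting ``the induced subgraph only grew (the vertices removed from $V_\bot$ were added to $V_j$)''. That parenthetical is false: between two consecutive updates of $V_j$, the inner loop visits other indices $j'\ne j$, and vertices removed from $V_\bot$ during those steps go into $V_{j'}$, not into $V_j$ --- so $G[V_{\bot+j}]$ genuinely shrinks, and ``only grew'' directly contradicts your own preceding sentence. The correct reason the remark holds is different and more local: by the inductive hypothesis the old $V_j$ is a ball $B_{G[H^{\mathrm{old}}]}(t_j,r_j^{\mathrm{old}})$ with $H^{\mathrm{old}}=G[V_{\bot+j}^{\mathrm{old}}]$, and for any $u$ in that ball a shortest $t_j$--$u$ path in $H^{\mathrm{old}}$ lies entirely inside the ball, hence inside $V_j$, which never shrinks; so that same path witnesses $d_{G[V_{\bot+j}^{\mathrm{new}}]}(t_j,u)\le r_j^{\mathrm{old}}\le r_j^{\mathrm{new}}$. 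The subgraph outside $V_j$ is irrelevant because the witnessing paths never leave $V_j$. This matters downstream, because your proof of Property~1 silently relies on the remark: you argue that the ball $B_{G[V_{\bot+j}]}(t_j,r_j)$ is connected and then conclude ``the updated $V_j$ is connected,'' but the update sets $V_j\leftarrow V_j\cup B$, which equals the ball only if the remark holds. Either repair the remark's justification as above, or --- more simply, and avoiding the remark entirely --- note that the old $V_j$ (connected by induction) and the ball $B$ (connected, since a ball is confined to the connected component of its center and is closed under shortest subpaths) both contain $t_j$, so their union is connected. A minor additional nit: you write ``a ball $B_H(t_j,r)$ in any \emph{connected} subgraph $H$'', but $G[V_{\bot+j}]$ need not be connected; the connectivity of the ball requires only that $H$ contain $t_j$.
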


In what follows, we analyze the stretch in distance between a fixed pair of terminals. We show that with probability at least $1 - O(k^{-5})$, the distance between these terminals in $G'$ is at most $O(\log ^5 k)$ times their distance in $G$. By a union bound over all $\binom{k}{2}$ pairs of terminals, we deduce Theorem~\ref{thm:main}.
Let $s,t \in T$, and let $P^*$ be a shortest $st$-path in $G$. Due to the triangle inequality, we may focus on pairs which satisfy $V(P^*) \cap T = \{s,t\}$, where $V(P^*)$ is the node set of $P^*$. We denote $\ell \eqdef w(P^*) = d_{G,w}(s,t)$.

\subsection{High-Level Analysis}
Following an execution of the algorithm, we maintain a (dynamic) path $P$ between $s$ and $t$. In a sense, in every step of the algorithm, $P$ simulates an $st$-path in the terminal-centered minor induced by $V_1,V_2,\ldots,V_k$. At the beginning of the execution, set $P$ to be simply $P^*$. During the course of the execution update $P$ to satisfy two invariants. At every step of the algorithm, the weight of $P$ is an upper bound on the distance between $s$ and $t$ in the terminal centered minor induced by $V_1,\ldots, V_k$ (in that step). In addition, if $I$ is a subpath of $P$, whose inner vertices are all unassigned, then $I$ is a subpath of $P^*$. Throughout the analysis, we think of $P$ as directed from $s$ to $t$, thus inducing a linear ordering of the vertices in $P$. 

\begin{definition}
A subpath of $P$ will be called {\em active} if it is a maximal subpath whose inner vertices are unassigned.
\end{definition}

Note that a single edge whose endpoints are both assigned will not be considered active.

We now describe how $P$ is updated during the execution of the algorithm. Consider line \ref{al:endOuter} of the algorithm for the $i$-th iteration of the outer loop, and some $j \in [k]$. We say that the ball $B=B_{G[V_{\bot+j}]}(t_j,r_j)$ {\em punctures} an active subpath $A$ of $P$, if there is an inner node of $A$ that belongs to the ball.
If $B$ does not puncture any active subpath of $P$, we do not change $P$. Otherwise, denote by $u,v$ the first and last unassigned nodes (possibly not in the same active subpath) in $V(P) \cap B$ respectively. Then we do the following.

We replace the entire subpath of $P$ between $u$ and $v$ with a concatenation of a shortest $ut_j$-path and a shortest $t_jv$-path that lie in $B$; this is possible, since $G[B]$ is connected, and $u,t_j,v \in B$.
This addition to $P$ will be called a {\em detour} from $u$ to $v$ through $t_j$. The process is illustrated in figures~\ref{f:det-a}-\ref{f:det-b}. Beginning with $P^*$, the figure describes the update after the first four balls. Note that the detour might not be a simple path.
It is also worth noting that here $u$ and $v$ may belong to different active subpaths of $P$. For example, in figure~\ref{f:det-c}, the new ball punctures two active subpaths, and therefore in figure~\ref{f:det-d}, the detour goes from a node in one active subpath to a node in another active subpath. Note that in this case, we remove from $P$ portions which are not active.

It is worth noting that this update process implies that at any given time, there is at most one detour that goes through $t_j$. If, for some iteration $i'<i$ of the outer loop, and for some $u',v' \in V(P)$, we added a detour from $u'$ to $v'$ through $t_j$ in iteration $i'$, we keep only one detour through $t_j$, from the first node between $u,u'$ and to the last between $v,v'$. For example, in figure~\ref{f:det-e}, the ball centered in $t_3$ punctures an active subpath. Only one detour is kept in figure~\ref{f:det-f}.
\ifprocs 
\begin{figure*}[ht]
  \begin{center}
    \subfigure[We begin with $P=P^*$]{\label{f:det-a}\includegraphics[scale=0.25]{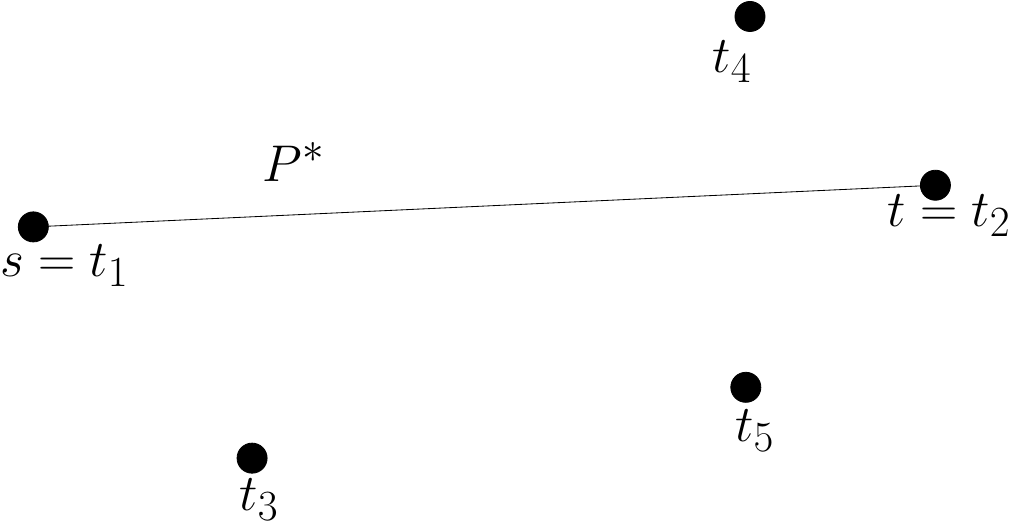}}
    \ 
    \subfigure[Every inner-loop iteration grows a terminal-centered ball. Here balls around $t_1, t_2, t_3, t_4$ are grown with detours added. Since $P^*$ is a shortest path, the detours for $t_1$ and $t_2$ are, in fact, subpaths of $P^*$.]{\label{f:det-b}\includegraphics[scale=0.25]{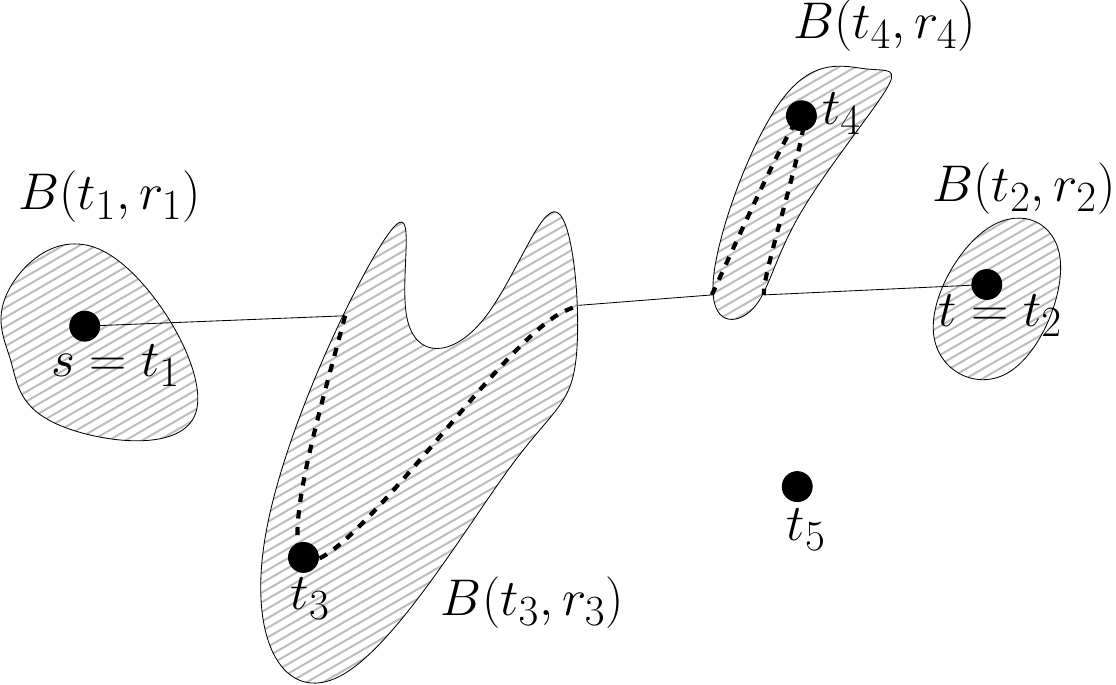}}
    \ 
    \subfigure[Endpoints of a detour can belong to different active subpaths. Here a ball around $t_5$ is grown.]{\label{f:det-c}\includegraphics[scale=0.25]{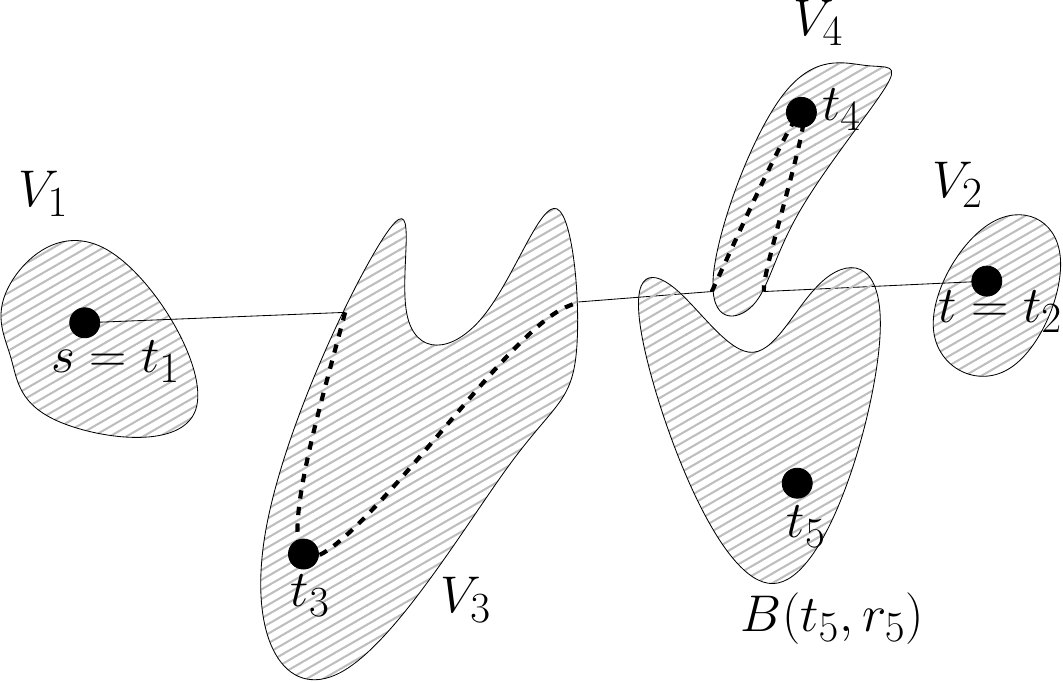}}
    \
    \subfigure[Update detour for $V_5$.]{\label{f:det-d}\includegraphics[scale=0.25]{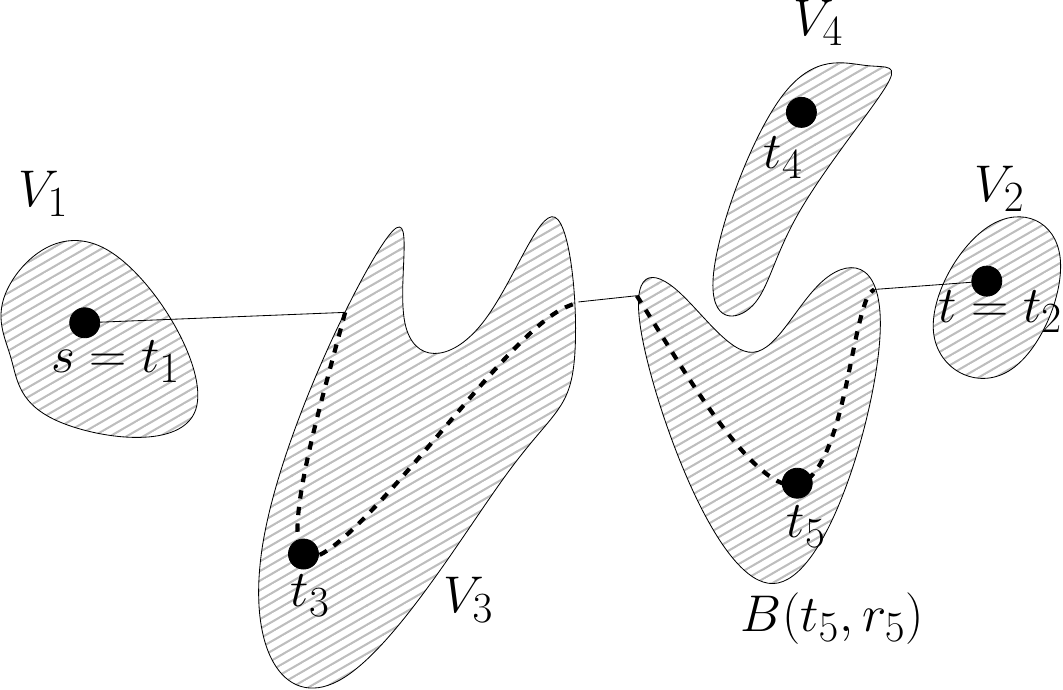}}
    \ 
    \subfigure[The ball around $t_3$ is further grown.]{\label{f:det-e}\includegraphics[scale=0.25]{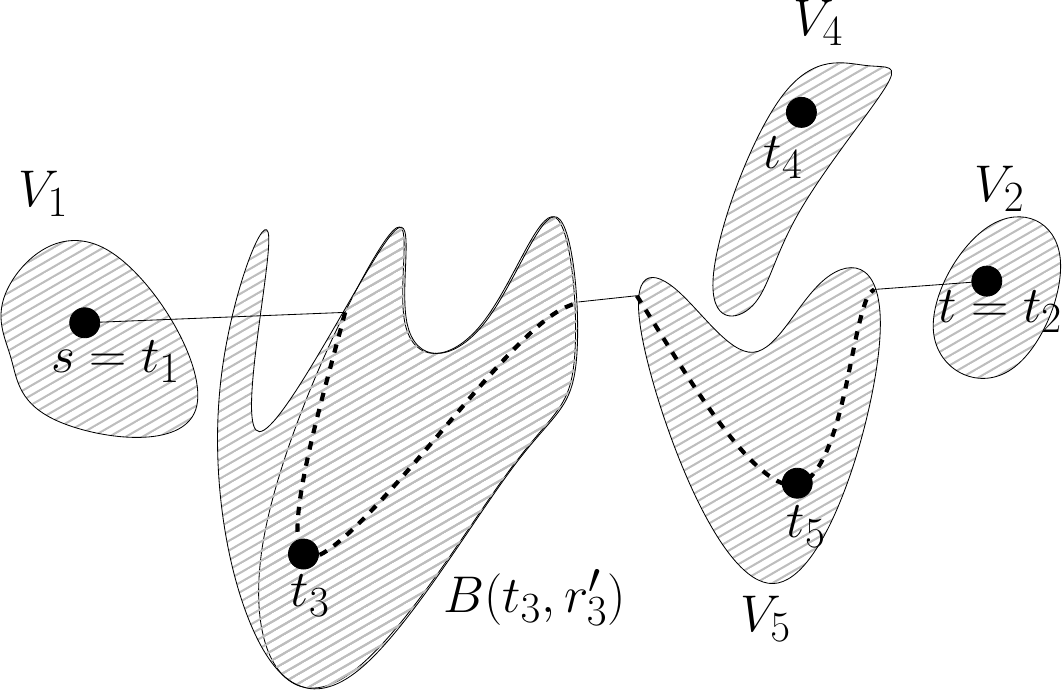}}
    \ 
    \subfigure[Update detour for $V_3$.]{\label{f:det-f}\includegraphics[scale=0.25]{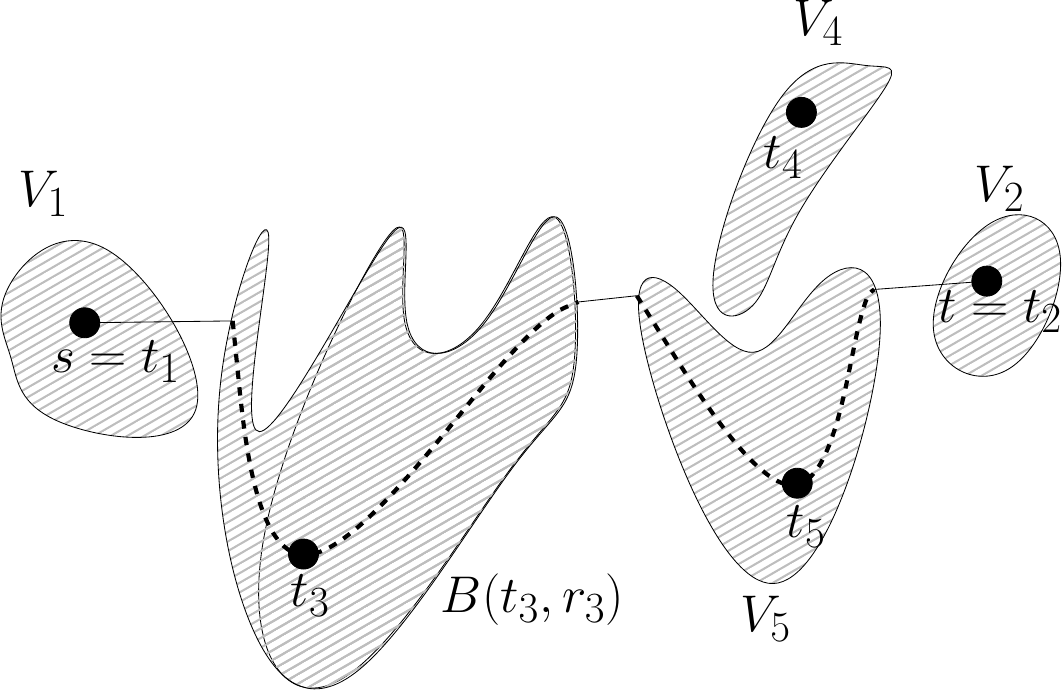}}
  \end{center}
  \caption{Updating $P$}
  \label{f:det}
\end{figure*}
\else
\begin{figure}[t]
  \begin{center}
    \subfigure[We begin with $P=P^*$]{\label{f:det-a}\includegraphics[scale=0.25]{det1}}
    \quad 
    \subfigure[Every inner-loop iteration grows a terminal-centered ball. Here balls around $t_1, t_2, t_3, t_4$ are grown with detours added. Since $P^*$ is a shortest path, the detours for $t_1$ and $t_2$ are, in fact, subpaths of $P^*$.]{\label{f:det-b}\includegraphics[scale=0.25]{det2}}
    \quad
    \subfigure[Endpoints of a detour can belong to different active subpaths. Here a ball around $t_5$ is grown.]{\label{f:det-c}\includegraphics[scale=0.25]{det25}}
    \quad
    \subfigure[Update detour for $V_5$.]{\label{f:det-d}\includegraphics[scale=0.25]{det3}}
    \quad
    \subfigure[The ball around $t_3$ is further grown.]{\label{f:det-e}\includegraphics[scale=0.25]{det4}}
    \quad
    \subfigure[Update detour for $V_3$.]{\label{f:det-f}\includegraphics[scale=0.25]{det5}}
  \caption{Updating $P$}
  \label{f:det}
  \end{center}
\hrule
\end{figure}
\fi

The total weight of all detours during the execution will be called the {\em additional weight} to $P$ (ignoring portions of $P$ that are deleted from $P$).
Denote the set of active subpaths of $P$ at the beginning of the $i$-th iteration of the outer loop by ${\cal A}_i$.

Let $V_1^\fin,\ldots,V_k^\fin$ be the partition returned by the algorithm, let $G'$ the terminal-centered minor induced by that partition, and let $w'$ be the standard restriction of $w$ to $G'$. 
Denote by $P^\fin$ the path obtained at the end of the execution.
\begin{claim} 
At every step of the algorithm the following holds:
\begin{enumerate}
	\item The weight of $P$ is an upper bound on the distance between $s$ and $t$ in the terminal centered minor induced by $V_1,\ldots, V_k$. Moreover, once ${\cal A}_i = \emptyset$ (namely, $P$ has no active subpaths), the weight of $P$ is an upper bound on the distance between $s$ and $t$ in the terminal centered minor induced by $V_1^\fin,\ldots, V_k^\fin$ (actually, from this point on, $P=P^\fin$).
	\item If $A$ is a subpath of $P$, whose inner points are all in $V_\bot$, then $A$ is a subpath of $P^*$.
	\item If $A_1,A_2$ are two different active subpaths of $P$, they are internally disjoint.
	\item $|{\cal A}_i| \le k$ for all $i$.
\end{enumerate}
\end{claim}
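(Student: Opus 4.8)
The plan is to prove all four invariants simultaneously by induction on the steps of the algorithm, where a ``step'' is a single update to $P$ triggered by one ball $B = B_{G[V_{\bot+j}]}(t_j, r_j)$ puncturing an active subpath (within line~\ref{al:endOuter} of outer-loop iteration $i$). The base case is $P = P^*$: invariant~2 is immediate since $P$ itself is a subpath of $P^*$; invariant~3 holds vacuously (there is one active subpath, namely all of $P^*$ minus its endpoints, since $V(P^*)\cap T = \{s,t\}$); invariant~4 holds since $|{\cal A}_1| = 1 \le k$; and invariant~1 holds because the terminal-centered minor induced by the singleton sets $V_j = \{t_j\}$ simply inherits the edge $(x,y)$ whenever $x,y$ are adjacent in $G$, and more generally $w(P^*) = d_{G,w}(s,t)$ dominates the $s$–$t$ distance in that minor (any $G$-path projects to a walk in the minor of no greater weight).

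For the inductive step, fix a puncturing ball $B$ centered at $t_j$, and let $u,v$ be the first and last unassigned nodes of $V(P)\cap B$; we replace the $u$–$v$ portion of $P$ by a detour through $t_j$ consisting of a shortest $u$–$t_j$ path and a shortest $t_j$–$v$ path inside $B$, then (step~2 of the update) merge with any pre-existing detour through $t_j$. For invariant~1 while active subpaths remain: the new $P$ still corresponds to a genuine $s$–$t$ walk in the current terminal-centered minor — the two shortest subpaths inside $B$ have all their vertices assigned (they lie in $B\subseteq V_{\bot+j}$, and after this step those vertices are in $V_j$), so the detour, viewed in the minor, is a walk through the super-node $t_j$, whose weight (by the standard restriction / triangle inequality) is at least what the minor charges; the parts of $P$ outside the replaced portion are unchanged. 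The merging in step~2 only shortens $P$ and keeps it a valid $s$–$t$ walk in the minor, since both detours pass through the same super-node $t_j$. Once ${\cal A}_i=\emptyset$, no ball can puncture an active subpath (there are none), so $P$ is frozen, $P = P^{fin}$, and its weight — which at that point equals its additional weight, since every non-detour piece of $P^*$ has been deleted once its interior became assigned — dominates $d_{G',w'}(s,t)$.

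For invariant~2: any maximal subpath $A$ of the new $P$ with all interior vertices unassigned cannot include any vertex of the freshly-added detour (all those are now assigned to $t_j$), so $A$ is a subpath of the part of $P$ left untouched by this update; by the induction hypothesis that untouched part was a patchwork of (old) subpaths of $P^*$, and a subpath with unassigned interior lies in one such piece, hence in $P^*$. For invariant~3: the detour's interior is entirely assigned, so it is not active; the remaining active subpaths of the new $P$ are sub-subpaths of the old active subpaths (possibly truncated because $u$ or $v$ sat in their interior, or because, as in Figure~\ref{f:det-d}, a detour spanned two old active subpaths and deleted the non-active material between them), and sub-subpaths of internally disjoint paths are internally disjoint. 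For invariant~4: each update replaces a contiguous block of $P$ (from $u$ to $v$) by a detour with no active interior, so the number of active subpaths does not increase during an update; across a full outer-loop iteration the active subpaths can only get absorbed, never created, and at the very start of any iteration distinct active subpaths are separated by assigned vertices, of which there is at least one per ``gap'', and since detours are pairwise distinct only through distinct terminals, a counting argument gives at most $k$ active subpaths — more simply, each active subpath's right endpoint is an assigned vertex or $t$, each assigned ``boundary'' vertex lies in some $V_j$, and the $k$ sets $V_j$ give at most $k$ such separators, hence $|{\cal A}_i|\le k$.

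The main obstacle is invariant~1 under the detour-merging in step~2 combined with the fact (emphasized in the text) that a detour need not be a simple path and may connect endpoints in two different active subpaths: one has to check carefully that after deleting the ``non-active portions'' between two merged active subpaths, and after replacing an old detour through $t_j$ by a longer one, the object $P$ is still a connected $s$–$t$ walk whose weight bounds the minor distance — i.e., that we never accidentally disconnect $P$ or double-count/under-count. The clean way to handle this is to argue that at every step $P$ is the image of an actual walk in $G$ restricted to $V_{[k]}\cup(\text{unassigned part of }P^*)$ and that contracting each $V_j$ only decreases walk-weight; the merging step corresponds to short-cutting a walk that visits the super-node $t_j$ twice, which is legitimate in the minor. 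I would also be slightly careful about invariant~2's phrasing ``inner points all in $V_\bot$'': since the sets $V_j$ grow over time, ``$A$ is a subpath of $P^*$'' must be read as referring to the fixed $P^*$, and one should note that $A$'s membership in $P^*$ is preserved even though some vertices of $P^*$ later become assigned — those simply drop out of the ``active'' bookkeeping but the geometric inclusion $A\subseteq P^*$ is monotone and never violated.
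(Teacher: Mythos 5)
The paper's own ``proof'' of this claim is the single sentence ``Follows easily by induction on $i,j$'', so there is no detailed argument in the paper to compare your write-up against; you are essentially supplying the missing details. Your treatment of parts~(1)--(3) is in the right spirit (monotonicity of assignment, detour vertices being assigned, sub-subpaths of internally disjoint paths being internally disjoint), modulo the usual care with degenerate $2$-vertex subpaths in part~(2).

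Part~(4), however, contains a genuine error. You assert that ``each update replaces a contiguous block of $P$ (from $u$ to $v$) by a detour with no active interior, so the number of active subpaths does not increase during an update; across a full outer-loop iteration the active subpaths can only get absorbed, never created.'' This is false, and in fact directly contradicted by the paper's own Proposition~\ref{p:maxBirth}, whose proof explicitly exhibits the case in which a ball punctures an active subpath $A_\alpha$ in its interior without covering either end, replaces the middle of $A_\alpha$ by an (assigned) detour, and thereby \emph{splits} $A_\alpha$ into two active subpaths, so that $X' = X+1$. So the count of active subpaths of $P$ can and does grow during a single inner-loop step. Your fallback ``counting via $V_j$'' argument is the right intuition --- the bound $|{\cal A}_i|\le k$ ultimately rests on the detour-merging rule, which guarantees that the $V_j$-vertices currently on $P$ form at most one contiguous detour segment per terminal $t_j$, so that each active subpath can be charged to a distinct terminal via, say, its right endpoint --- but as stated the argument is too loose: it does not by itself rule out two active subpaths having right endpoints in the same $V_j$, and it never invokes the merging step, which is the crucial ingredient. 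You would need to make that charging argument explicit (and verify that every assigned vertex remaining on $P$ is in fact a detour vertex, so that the $V_j$-vertices on $P$ really are a single contiguous block) before the bound $|{\cal A}_i|\le k$ is established.
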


\begin{proof}
Follows easily by induction on $i,j$.
\end{proof} 
\begin{corollary} \label{cor:costMinor}
$d_{G',w'}(s,t) \le w(P^\fin)$.
\end{corollary}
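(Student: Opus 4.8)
The plan is to obtain Corollary~\ref{cor:costMinor} as an immediate consequence of item~1 of the preceding Claim, evaluated at the moment Algorithm~\ref{alg:part} halts. First I would note that the algorithm does terminate (with probability $1$): the radii $r_j$ are non-decreasing, and since each increment $R_j^i\sim\exp(b^i)$ has mean $b^i$ with $b>1$, we have $r_j\to\infty$ almost surely, so eventually every vertex reachable from a terminal is absorbed into $V_{[k]}$ and the while-loop exits; assuming $G$ connected (as we may), this gives $V_{[k]}=V$. At that point the maintained partial partition equals the returned $V_1^{fin},\dots,V_k^{fin}$, which is a genuine partition of $V$, the induced terminal-centered minor is $G'$, and $w'$ is by definition its standard restriction.

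Next I would invoke item~1 of the Claim: at every step of the execution, $w(P)$ upper-bounds the distance between $s$ and $t$ in the terminal-centered minor induced by the current sets $V_1,\dots,V_k$. Instantiating this at termination, where the current sets are $V_1^{fin},\dots,V_k^{fin}$ and $P=P^{fin}$, yields exactly $d_{G',w'}(s,t)\le w(P^{fin})$. One could equally use the ``moreover'' clause: since no unassigned vertices survive, ${\cal A}_i=\emptyset$ at termination and $P$ has stabilized to $P^{fin}$, so already the additional weight of $P^{fin}$ bounds $d_{G',w'}(s,t)$.

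Thus the corollary needs no work beyond the Claim, which we are granted. The content hiding in the Claim---and not in the corollary---is that the detour-update rule preserves the invariant $w(P)\ge d_{\mathrm{minor}}(s,t)$: when a ball $B=B_{G[V_{\bot+j}]}(t_j,r_j)$ punctures active subpaths one replaces the stretch of $P$ between its extreme unassigned endpoints $u,v$ by a shortest $u$-$t_j$ path followed by a shortest $t_j$-$v$ path inside $B$ (merging with any earlier detour through $t_j$), and the inequality survives because the projection of $P$ onto the current minor remains an $s$-$t$ walk whose $w'$-length is bounded termwise using that $w'_{xy}=d_{G,w}(x,y)$ never exceeds the $w$-length of the $xy$-subpath of $P$. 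The main ``obstacle'', such as it is, is exactly this bookkeeping about simultaneous punctures and detour merging, and it lives entirely inside the (assumed) Claim; the corollary itself is a one-line instantiation.
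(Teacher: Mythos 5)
Your proof is correct and matches the paper's intent exactly: the corollary is stated immediately after the Claim with no separate argument, precisely because it is the instantiation of item~1 of that Claim at termination (when $V_1,\dots,V_k = V_1^{fin},\dots,V_k^{fin}$ and $P=P^{fin}$). Your side remarks about termination and about where the real content lies (inside the Claim's invariant) are accurate but not needed for the corollary itself.
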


Let $A \in {\cal A}_i$. During the execution of the inner loop, $A$ is either removed from $P$ entirely, or some subpaths of $A$ remain active (perhaps $A$ remains active entirely). Therefore, for every $A' \in {\cal A}_{i+1}$, either $A'$ is a non-trivial subpath of $A$ (by non-trivial we mean $|V(A')| \ge 3$), or $A'$ and $A$ are internally disjoint. Therefore there is a laminar structure on $\bigcup_i{\cal A}_i$. 

We describe this structure using a tree ${\cal T}$, whose node set is $\{\langle i, A \rangle | \ A \in {\cal A}_i\}$. The root of ${\cal T}$ is $\langle 1, P^* \rangle$, and for every $i$ and every $A \in {\cal A}_i$, the children of $\langle i,A \rangle$, if any, are all pairs $\langle i+1 ,A'\rangle$, where $A' \in {\cal A}_{i+1}$ is a subpath of $A$.
Whenever we update $P$ we log the weight of the detour by charging it to one of the nodes of ${\cal T}$ as follows. Consider a detour from $u$ to $v$ in the $i$-th iteration of the outer loop for some $i$. Before adding this detour, $u$ and $v$ are unassigned nodes in $P$. Because $u$ is unassigned, $u$ is an inner vertex of some active subpath. In either case, there is exactly one active subpath containing $u$. The weight of the detour is charged to the unique active subpath $A \in {\cal A}_i$ such that $u \in A$. For every $i$ and $A \in {\cal A}_i$, let $w_{i,A}$ be the total weight charged to $\langle i,A \rangle$. If the node is never charged, the weight of the node is set to $0$. Therefore, 
\begin{equation}
w(P^\fin) \le w(P^*) + \sum_{\langle i,A \rangle \in {\cal T}}{w_{i,A}} .
\label{eq:treeBound}
\end{equation}
Eqn.~\eqref{eq:treeBound} together with Corollary~\ref{cor:costMinor} imply that if we show that with high probability, the total weight charged to the tree is at most $O(\log ^5k)\cdot \ell$, we can deduce Theorem~\ref{thm:main}.
For the rest of this section, we therefore prove the following lemma.
\begin{lemma} \label{l:totalWeight}
With probability at least $1 - O(k^{-5})$, the total weight charged to the tree is at most $O(\log ^5 k) \ell$.
\end{lemma}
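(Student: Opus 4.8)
The plan is to bound the total charge $\sum_{\langle i,A\rangle\in\mathcal T}w_{i,A}$ by controlling, separately, (a) how much weight can be charged to a single node $\langle i,A\rangle$ during one outer-loop iteration, and (b) how the charges accumulate along root-to-leaf paths of $\mathcal T$, using the fact that an active subpath's length shrinks geometrically in expectation as we descend the tree. Fix an outer-loop iteration $i$ and an active subpath $A\in\mathcal A_i$. The detours charged to $\langle i,A\rangle$ in iteration $i$ come from balls $B_{G[V_{\bot+j}]}(t_j,r_j)$ that puncture $A$ (with $u\in A$). Each such detour has weight at most $2r_j$, and the key observation is that $r_j$ is a sum of $i$ independent $\exp(b^{i'})$ variables, $i'\le i$, so $r_j\le b^i\cdot O(\log\calD)$ with very high probability (using Assumption~\ref{a:polyLogDelta}, $\log\calD\le k^3$, together with a union bound over the $k$ terminals and the $\iter=10\log_b\calD$ iterations the algorithm can run). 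The number of such detours charged to $\langle i,A\rangle$ is at most the number of distinct terminals $t_j$ whose ball meets $A$ in iteration $i$; since $A$ is a shortest path (invariant 2 of the claim: the inner vertices of an active subpath lie on $P^*$, so $A$ is a subpath of the shortest path $P^*$) of length, say, $\ell_A$, and each $V_j$-ball has ``radius'' on the scale $b^i\cdot O(\log\calD)$, a packing/shortest-path argument bounds this count by $O(\ell_A/(b^i\log\calD))+$ (a number controlled by the degree-of-separation tail bound, Theorem~\ref{thm:main3}).

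The heart of the argument is to invoke Theorem~\ref{thm:main3} at scale $\Delta\sim b^i$: the partition of $V(P^*)$ induced by the balls grown in iteration $i$ is (essentially) a $\beta$-terminal-decomposition with concentration, $\beta=O(\log k)$, so the degree of separation $Z_A$ of the shortest path $A$ satisfies a tail bound $\Pr[Z_A> t\max\{\ell_A\beta/\Delta,1\}]\le O(\min\{k\beta,\lceil\ell_A\beta/\Delta\rceil\})e^{-\Omega(t)}$. Combining this with the high-probability upper bound on each $r_j$, the charge $w_{i,A}$ is, with probability $1-O(k^{-c})$ (for a large constant $c$, after a union bound over all $O(k\log\calD)=\poly(k)$ nodes of $\mathcal T$ — here Assumption~\ref{a:polyLogDelta} is what keeps $|\mathcal T|$ polynomial), at most $O(\log k)\cdot\big(\ell_A + b^i\cdot\poly(\log k)\big)$, i.e.\ proportional to $\ell_A$ plus a ``base cost'' proportional to the current scale $b^i$ times a polylog factor, times the number of detours.

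To sum over the tree, I would argue level by level. At level $i$, the active subpaths in $\mathcal A_i$ are internally disjoint (invariant~3) subpaths of $P^*$, so $\sum_{A\in\mathcal A_i}\ell_A\le\ell$, and $|\mathcal A_i|\le k$ (invariant~4) — but the better bound is that only subpaths within distance $\sim b^i$ of some terminal can survive, which (again by the shortest-path packing bound) caps the number of \emph{nonempty-charge} nodes at level $i$ by roughly $\ell/b^i + k$. Hence the total ``$\ell_A$'' contribution over all levels is $O(\log k)\cdot(\text{number of levels})\cdot\ell$, and the number of outer-loop iterations before every vertex is assigned is $O(\log_b\calD)=O(\log k\log\calD)$; combined with the per-node polylog factors from the previous paragraph and the extra $O(\log k)$ detours-per-node factor, this yields $O(\log^6 k)\cdot\ell$ once one checks $\log\calD$ does not blow things up (for the range of scales that actually matters, the relevant contributions telescope geometrically, so only $O(\log k)$ scales near each $\ell_A$ contribute — this is where we avoid paying a $\log\calD$ factor and instead pay only $\polylog(k)$). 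The main obstacle is precisely this last bookkeeping: bounding the per-node charge $w_{i,A}$ by something that is \emph{simultaneously} (i) summable to $O(\ell)$ across each level via disjointness, (ii) concentrated enough to survive a union bound over the whole tree, and (iii) free of any $\calD$-dependence after summing over scales — reconciling these three demands, and correctly handling the two-step detour-joining rule (which can merge charges across levels and requires care so that weight is not double-counted), is the technical crux; the tail bound of Theorem~\ref{thm:main3} is the tool that makes (ii) possible, and the geometric decay of scales makes (iii) work.
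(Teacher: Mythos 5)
Your plan has the right coarse shape — bound per-node charge, exploit concentration, sum level by level, and show that the $\calD$-dependence disappears — but it both leaves the hard part unresolved (as you yourself acknowledge at the end) and, more importantly, proposes a tool that the paper explicitly rules out. You want to invoke Theorem~\ref{thm:main3} ``at scale $\Delta\sim b^i$,'' treating the balls grown in iteration $i$ as a terminal decomposition with concentration. But the decomposition produced by Algorithm~\ref{alg:part} is not the one analyzed in Section~2: the radii are untruncated exponentials accumulated across iterations ($r_j=\sum_{i'\le i}R_j^{i'}$), the balls are taken in $G[V_{\bot+j}]$ rather than in the ambient metric (so the clusters must be connected and must respect the partition built so far), and the scales are not independent of each other. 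The paper flags exactly this: the ``vanilla'' Theorem~\ref{thm:main2}/\ref{thm:main3} ``excludes technical complications such as terminals, strong diameter, and consistency between scales'' and ``the proof of Theorem~\ref{thm:main} \ldots cannot use the generic form.'' So the concentration step in your (ii) is not actually available in the form you assume.

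What the paper does instead, and what is missing from your sketch, is a short/long dichotomy on active subpaths ($A$ is short if $w(A)\le pb^i$) together with a branching-process analysis of short subtrees. The key local facts are Proposition~\ref{p:maxBirth} (one ball increases the number of active subpaths of $A$ by at most one) and Proposition~\ref{p:cond} (conditioned on a ball reaching $A$, the probability it fails to kill the nearest active subpath is at most $p=1/100$; this uses the memorylessness of the exponential radius and the shortness of $A$, not Theorem~\ref{thm:main3}). These give, via Chernoff, that after $O(\log k)$ ball-hits a short subtree is dead and has height $O(\log k)$ (event $\mathbf{E_1}$); combined with a tail bound on all radii (event $\mathbf{E_2}$) this caps the total weight of each short subtree by $O(b^{i_0}\log^3 k)$. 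For long subpaths the paper segments $A$ into pieces of length $pb^i$ and applies the same Chernoff bound per segment (event $\mathbf{E_3}$), and then uses internal disjointness of $\mathcal A_i$ inside $P^*$ to sum to $O(\log^3 k)\ell$ per level. Your ``packing'' bound on the number of balls hitting $A$ in one iteration is in the right spirit but does not, by itself, control the accumulation of detours across many iterations, which is what the short-subtree height bound and $\mathbf{E_3}$ are for.

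There is also a concrete quantitative slip: you claim only $O(\log k)$ scales near $\ell_A$ contribute, but with $b=1+1/(35\log k)$ the relevant window $[\log_b(\ell/k^2),\ \log_b(\ell/p)]$ contains $\log_b(k^2/p)=\Theta(\log^2 k)$ levels, not $O(\log k)$. This extra $\log k$ is part of where the final $\log^6 k$ comes from, so getting it wrong would miscount the distortion. Finally, the $\calD$-independence in your (iii) is not a telescoping across the whole tree: the paper splits the sum at $i^*=\log_b(\ell/k^2)$, shows the $i<i^*$ part is $o(1)\cdot\ell$ using the crude $|\mathcal A_i|\le k$ bound (geometric decay of $b^i$ swamps the $k$), and shows the $i\ge i^*$ part contributes only over the $O(\log^2 k)$ remaining levels. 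That split — not a single telescoping argument — is how the $\log\calD$ factor is avoided.
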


Consider an iteration $i \ge 1$ and an active subpath $A \in {\cal A}_i$. 
Informally, since the distortion is measured relatively to $\ell = w(P^*)$, 
if the expected radius $b^i$ is small compared to $w(A)$, 
then with high probability a detour will not add ``much'' to the distortion,
and thus we are more concerned with the opposite case 
where $w(A)$ is small relative to the current expected radius.

Formally, let $p = 1/100$. An active subpath $A \in {\cal A}_i$ will be called {\em short} if $w(A) \le pb^i$. Otherwise, $A$ will be called {\em long}. Notice that $P^* \in {\cal A}_1$ is long, and for $i \ge \log_b(\ell/p)$, every $A \in {\cal A}_i$ is short.

\begin{definition}
Let $i>1$ and let $A \in {\cal A}_i$ be a short subpath. Denote by ${\cal T}_{i,A}$ the subtree of ${\cal T}$ rooted in $\langle i,A \rangle$. 
Denote the parent of $\langle i,A \rangle$ in ${\cal T}$ 
by $\langle i-1,A' \rangle$ for $A' \in {\cal A}_{i-1}$.
If $A'$ is long, ${\cal T}_{i,A}$ will be called a {\em short subtree} of ${\cal T}$.
\end{definition}

Once an active subpath becomes short (during the course of the iterations), 
we want all its vertices to be assigned quickly, and by a few detours. 
For this reason, the height and weight of short subtrees will play an important role in the analysis of the height and weight of ${\cal T}$.

To bound the total weight of the tree ${\cal T}$, we analyze separately 
the weights charged to long active subpaths at each level, 
and the weights of short subtrees rooted at each level.
More formally, for every $i \le \log_b(\ell/p)$, denote by $l_i$ the total weight charged to nodes of the form $\left\langle i,A \right\rangle$, where $A \in {\cal A}_i$ is a long active subpath. Denote by $s_i$ the total weight charged to short subtrees rooted at the level $i$ of ${\cal T}$. 
For $i \ge \log_b(\ell/p)$, every $A \in {\cal A}_i$ is short and thus $\left\langle i,A \right\rangle$ belongs to some short subtree rooted at level at most $\log_b (\ell/p)$. Therefore,
\begin{equation}
\sum_{\langle i,A \rangle \in {\cal T}}{w_{i,A}} = \sum_{i=1}^{\log_b (\ell/p)}{(l_i+s_i)} \;.
\label{eq:treeWeight}
\end{equation}
Similarly to the proof of Theorem~\ref{thm:main3}, we will first analyze the behavior of short subpaths of active paths, and then use it to bound $s_i$. 
To bound the weight of long active paths, we will divide them into short segments, similarly to the proof of Lemma~\ref{l:longShortest}, and then sum everything up to bound $l_i$.

\subsubsection{The Effect of a Single Ball on a Short Segment}
Let $i_0 \ge 1$ and let $I$ be a subpath of $P$ such that all the inner nodes of $I$ are unassigned in the beginning of the $i_0$-th iteration of the outer loop, and $w(I) \le pb^{i_0}$. Note that $I$ is not necessarily maximal with that property, and therefore is not necessarily an active subpath. However, $I$ is a subpath of some (unique) active subpath $A \in {\cal A}_{i_0}$.
We first consider the effect of a single ball over $I$, in some iteration $i \ge i_0$. 

Fix some $i \ge i_0$, and some $j \in [k]$.
Let $X$ denote the number of active subpaths $A'$ such that $V(A') \cap V(I) \ne \emptyset$ at the beginning of the $j$-th iteration of the inner loop (during the $i$-th iteration of the outer loop). Note that every such subpath $A'$ is necessarily a subpath of $A$, due to the laminar structure of active subpaths. Since every such active subpath will add at least one detour before it is completely assigned, we want to show that $X$ is rapidly decreasing.
Let $X'$ denote the number of active subpaths $A'$ such that $V(A') \cap V(I) \ne \emptyset$ at the end of the $j$-th iteration. Denote by $B$ the ball considered in this iteration, namely $B \eqdef B_{G[V_{\bot+j}]}(t_j,r_j)$.
\begin{proposition} \label{p:maxBirth}
With certainty, $X' \le X+1$.
\end{proposition}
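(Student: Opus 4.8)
The plan is to analyze how a single ball $B = B_{G[V_{\bot+j}]}(t_j, r_j)$ affects the active subpaths of the fixed short subpath $A$. Recall that an active subpath of $A$ is a maximal subpath of (the current) $P$ contained in $A$ whose inner vertices are all unassigned. When the ball $B$ is grown, vertices inside $B$ become assigned to $t_j$; this is the only way the set of active subpaths of $A$ can change during the $j$th inner iteration (other terminals' sets are not modified in this iteration). So I would first observe that the active subpaths of $A$ in the beginning of the iteration are disjoint subpaths of $P^*$ (using the second invariant of the path-maintenance claim, since their inner vertices are unassigned), each of which is a contiguous interval along the linear order of $P$, and similarly at the end.

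The key structural point is this: the detour construction, as described just before the figures, picks $u$ and $v$ to be the \emph{first} and \emph{last} unassigned vertices of $V(P) \cap B$, and replaces the \emph{entire} portion of $P$ between $u$ and $v$ by a detour through $t_j$ whose internal vertices all lie in $B$ and hence are assigned. Consequently, after the update, \emph{all} of $A$ that lay strictly between $u$ and $v$ has been either deleted or replaced by assigned vertices. Therefore the only parts of $A$ that can still carry active subpaths are the part of $A$ lying before $u$ (in the linear order of $P$) and the part lying after $v$. I would argue that each of these two ``leftover'' pieces contributes at most one active subpath of $A$: the piece of $A$ before $u$ is an initial segment of one of the original active subpaths (the one containing $u$, or a prefix thereof), and similarly for the piece after $v$. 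Any original active subpath of $A$ that lay entirely between $u$ and $v$ is destroyed, and any original active subpath not touching $B$ at all survives unchanged — but such a subpath must lie entirely before $u$ or entirely after $v$ (since $u, v$ are the extreme unassigned vertices of $V(P)\cap B$, anything between them that is unassigned would have been in $V(P)\cap B$... wait, not necessarily). Let me restate: the subpaths lying between $u$ and $v$ are all removed from $P$; only portions before $u$ and after $v$ remain, and each such portion is a sub-interval of a single original active subpath, so it forms at most one new active subpath. Hence $X' \le 2$ always, and more precisely $X' \le X + 1$: if $B$ does not puncture any active subpath of $A$ then $P$ restricted to $A$ is unchanged and $X' = X$; if it does puncture, the at-most-two surviving pieces come from the (at least one) original active subpaths that are only partially covered, so we lose at least as many as we could possibly gain beyond one.

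Concretely I would split into cases. Case 1: $B$ punctures no active subpath of $A$; then $X' = X \le X+1$. Case 2: $B$ punctures at least one active subpath of $A$, so $u, v$ as above exist. Let $A_u$ be the original active subpath of $A$ containing $u$ (if $u \in A$; if $u \notin A$ then the part of $A$ before $u$ is empty) and $A_v$ the one containing $v$. Every original active subpath strictly between $A_u$ and $A_v$ in the linear order, as well as the portions of $A_u$ after $u$ and of $A_v$ before $v$, get removed or assigned; what remains active in $A$ is a (possibly empty) prefix of $A_u$ and a (possibly empty) suffix of $A_v$, which is at most two subpaths — and since $A_u, A_v$ were themselves among the $X$ original ones and are now gone (replaced by at most a prefix/suffix), we have $X' \le X - |\{A_u, A_v \text{ distinct}\}| + 2 \le X + 1$ when $A_u \ne A_v$, and $X' \le X - 1 + 2 = X+1$ when $A_u = A_v$. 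Either way $X' \le X + 1$.

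The main obstacle I anticipate is bookkeeping the interaction with the ``join detours through the same terminal'' rule (step 2 of the update) and with the fact that $u$ and $v$ may lie in different active subpaths, possibly even outside $A$ entirely (since $P$ ranges over all of $P^*$, not just $A$) — one must be careful that merging a previously-existing detour through $t_j$ with the new one can only \emph{delete} more of $P$ and hence can only \emph{decrease} the count of active subpaths of $A$, never increase it. So the formal proof reduces to checking that every operation performed in the update either leaves the active subpaths of $A$ untouched or replaces a contiguous block of them (plus trimmings at the two ends) by at most two subpaths, which then never produces a net increase of more than one. I would finish with one line: since each iteration of the inner loop changes $P$ only through one such update, $X' \le X + 1$ with certainty, proving the proposition.
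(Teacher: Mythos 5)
Your proposal is correct and takes essentially the same approach as the paper: identify the extreme unassigned endpoints $u,v$ of $V(P)\cap B$, observe that the detour replaces the entire portion of $P$ between them so that only a prefix of $A$ before $u$ and a suffix of $A$ after $v$ can remain active, and do the bookkeeping to get $X' \le X+1$. One small slip: in your Case~1 (no active subpath of $A$ is punctured) you write $X' = X$, but the detour can still remove pieces of $A$ when $B$ punctures active subpaths of $P$ lying outside $A$ on both sides, so only $X' \le X$ holds --- a point the paper's proof flags explicitly --- though this does not affect the conclusion.
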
 

\begin{proof}
Let $A_1,A_2,\ldots,A_X$ be all active subpaths of $A$ which intersect $I$ and are active in the beginning of the $j$-th iteration ordered by their location on $P$. For $\alpha \in [X]$ denote by $u_\alpha,v_\alpha$ the first and last unassigned nodes in $A_\alpha$, respectively. If $B$ does not puncture any of these subpaths, then $X' \le X < X+1$ (Note that subpaths of $A$ can still be removed if $B$ punctures active subpaths of $P$ not contained in $A$). So assume $B$ punctures $A_\alpha$.
Assume first that $A_\alpha$ is the only subpath of $P$ which is active and is punctured by $B$. Then there are three options:
If both $u_\alpha,v_\alpha \in B$, then $A_\alpha$ is replaced and removed entirely from $P$ when adding the detour, and $X' \le X-1 < X+1$.
If $u_\alpha \in B$ and $v_\alpha \notin B$, let $v'$ be the last node in $V(A_\alpha) \cap B$ then the $u_\alpha v'$ segment of $A_\alpha$ is replaced, and the segment $v'v_\alpha$ remains active. Therefore $X' \le X < X+1$. The argument is similar, if $u_\alpha \notin B$ and $v_\alpha \in B$.
Otherwise, some of the inner portion of $A_\alpha$ is replaced by a non-active path, and both end segments of $A_\alpha$ remain active, therefore $X' = X+1$.
Next, assume the ball punctures several active subpaths of $A$, and maybe more subpaths of $P$. Denote by $I_\alpha, I_\beta$ the first and last subpaths of $A$ punctured by $B$. Denote by $u$ the first node in $V(I_\alpha) \cap B$, and $v$ the last node in $V(I_\beta) \cap B$. When updating $P$, the entire subpath of $P$ between $u$ and $v$ is removed. 
Thus $X' \le X -(\beta - \alpha + 1) \le X < X+1$.
\end{proof}

We now want to show that if some unassigned vertex $v \in V(I)$ gets assigned due to a ball $B$, i.e., $B$ punctures some active subpath intersecting $I$, then $X$ is likely to decrease. 
Recalling that unassigned nodes in $P$ must be in $P^*$, 
this goal is stated formally as 
$$
  \Pr[X' \ge X \mid B \cap V(I) \cap V(P^*) \ne \emptyset] \le p \;.
$$ 
However, this statement is not sufficient for our needs, as it does not imply 
that with high probability a short active subpath is assigned quickly. 
Indeed, let $I$ be a short active subpath and suppose no ball punctures any subpath of $I$ for many iterations following $i_0$; 
then the detour that will eventually be added to replace a subpath of $I$
might be too long relative to $I$ (as expected radii increase exponentially). %
Therefore, when arguing that with reasonable probability $X$ decreases,
we shall condition on a more refined event, 
which generalizes the notion of a ball puncturing an active subpath.
Loosely speaking, we consider events in which the ball $B$ includes 
a vertex $v \notin P^*$ (i.e., $v$ is already assigned),
and assume there is an unassigned $u \in V(I)$, such that $uv$ is an edge in $G$ (since $P^*$ is a shortest path, this edge $uv$ must be part of $P^*$),
which means there is an active subpath intersecting $I$ adjacent to $v$
(in particular, $v$ is one of its endpoints). 
By the memoryless property of the exponential distribution, conditioned on $v \in B$, with reasonable probability $B$ covers that subpath. 
The formal definition follows.

\begin{definition}
Let $I$ be a subpath of $P$. We say that a ball $B$ {\em reaches} $I$ if there is $v \in V(I) \cap B$ such that either $v \in V(P^*)$ is unassigned, 
or $v$ has an unassigned neighbor which is in $V(I)$.
\end{definition}

Consider again the case where $I$ is active. Then both its endpoints are assigned. Note that the endpoints of $I$ cannot both be assigned to the same terminal (otherwise $I$ would have been removed entirely). 
By the definition of the balls in the algorithm, $B \subseteq G[V_{\bot+j}]$ and therefore $B$ may reach $I$ and not puncture it if and only if $I$ has exactly one endpoint in $V_j$. Note that all active subpaths are reached at least twice in every iteration of the outer loop (by the clusters which contain their endpoints). Therefore in every iteration of the outer loop at least two balls reach $I$ with certainty, even though it could be the case that no ball punctures $I$. 
\begin{proposition} \label{p:cond}
$\Pr[X' \ge X \mid \text{$B$ reaches $I$}] \le p \;.$
\end{proposition}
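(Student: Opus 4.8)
The plan is to condition on the ball $B$ reaching an active subpath of $A$, and show that with probability at least $1-p$ the number of active subpaths strictly decreases (so $X' < X$, which gives $X' \ge X$ with probability at most $p$). The key observation is that if $B$ reaches $A'$ but does not puncture it, this can happen only because $A'$ has an endpoint $x$ with $x \in V_j$ (i.e., $x$ was assigned to $t_j$ in a previous scale) and all inner nodes of $A'$ lie outside $B$; in that case $B$ actually deletes the endpoint portion but leaves the active subpath essentially intact, so we need to be careful about exactly which event we are bounding. So first I would reduce to the case where $B$ genuinely punctures some active subpath of $A$: if $B$ reaches $A$ without puncturing, then by Proposition~\ref{p:maxBirth}'s case analysis $X'$ does not increase, and one checks that in fact $X'<X$ unless the reaching is ``degenerate'' — but the cleaner route is to directly estimate the probability that, conditioned on reaching, the radius $r_j$ is large enough to swallow an endpoint of the active subpath it reaches.

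The heart of the argument is a memorylessness computation on the truncated exponential radius $R_j^i$. Let $A'$ be the (first) active subpath of $A$ that $B$ reaches, and let $z$ be the first node of $V(A') \cap B$ along $P$. Since $A'$ is active, its inner vertices are unassigned, so they lie in $V_{\bot}$, hence the ball $B_{G[V_{\bot+j}]}(t_j,r_j)$ restricted near $A'$ behaves like a ball in the induced metric. The event ``$B$ reaches $A'$'' is essentially the event $r_j \ge d_{G[V_{\bot+j}]}(t_j, z)$ for the relevant closest node $z$. Conditioned on this, the event ``$X' \ge X$'' — i.e., $B$ punctures $A'$ but fails to delete a whole end-segment, producing a genuine split — requires that $r_j$ overshoots $d(t_j,z)$ by less than the distance from $z$ to the nearest endpoint of $A'$ that would be needed to collapse the subpath, but crucially also requires that $r_j$ overshoots by at least the distance needed to reach an \emph{inner} node. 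The plan is to bound this conditional probability using the ratio of truncated-exponential densities: because $R_j^i \sim \exp(b^i)$ (well, the truncated version in the generic theorem; here it is $\exp(b^i)$), by memorylessness the conditional law of $r_j - d(t_j,z)$ given $r_j \ge d(t_j,z)$ is again (truncated) exponential with mean $b^i$, and the ``bad window'' of overshoot values has length related to how much of $A'$ is being cut. Since $A$ is short, $w(A) \le p b^i$, so the entire subpath $A$ — and hence $A'$ — has length at most $p b^i$; therefore the overshoot window in which a genuine split can occur has length at most $p b^i$, and $\Pr[\text{overshoot} \le p b^i] \le 1 - e^{-p b^i / b^i} \approx p$ for small $p$, giving the claimed bound $\le p$.

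More precisely, I would argue: conditioned on $B$ reaching an active subpath $A'$ of $A$, let $u'$ be the first unassigned node of $A'$ on $P$ and $v'$ the last. If $r_j$ is large enough that $v' \in B$ as well, then the whole of $A'$ is replaced by the detour and $A'$ contributes nothing, strictly decreasing the count along $A'$. By Proposition~\ref{p:maxBirth} the only way to have $X' \ge X$ is that exactly one punctured subpath $A'$ gets split into two (both end-segments surviving as active), and no other subpath of $A$ is deleted; in particular $v' \notin B$ must fail to hold on the ``far end'' — so $B$ must contain an inner node of $A'$ but not $v'$ (or symmetrically not $u'$). The extra radius $r_j - d_{G[V_{\bot+j}]}(t_j,u')$ (or the analogous quantity) then lies in an interval of length at most $w(A') \le w(A) \le p b^i$. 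By memorylessness of the exponential distribution $\exp(b^i)$ (and monotonicity, since truncation at $\Delta$ only decreases the relevant conditional probability), this event has probability at most $1 - e^{-p b^i/b^i} = 1 - e^{-p} \le p$. I expect the main obstacle to be the bookkeeping of \emph{which} extra-radius quantity to use and handling the case where $B$ reaches $A'$ at a node that is already assigned (an endpoint in $V_j$): one must verify that the relevant distances are measured in $G[V_{\bot+j}]$ along $P$ and that shortestness of $P^*$ (hence of $A' \subseteq P \subseteq$ detours-of-$P^*$, whose active parts are subpaths of $P^*$) guarantees the length bound $w(A') \le p b^i$ transfers correctly; once that is pinned down, the probability estimate is the short memorylessness computation above.
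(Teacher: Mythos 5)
Your proof is correct and takes essentially the same approach as the paper's: both reduce the event $\{X' \ge X\}$ (given that $B$ reaches $A$) to the event that the radius overshoots the distance to the nearest active subpath $A'$ of $A$ by less than $w(A) \le p b^i$, and then apply memorylessness of $R_j^i \sim \exp(b^i)$ to bound this by $1 - e^{-p} \le p$. The only (harmless) discrepancy is in your case bookkeeping after Proposition~\ref{p:maxBirth}: $X' \ge X$ can also arise when a single punctured subpath is merely shortened at one end (not split into two), or when two punctured subpaths each lose only an end-segment; but in every such case the ball still reaches $A$ without covering it, so the same overshoot window and the same memorylessness estimate apply.
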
 

\begin{proof}
Assume that $B$ reaches $I$. Then there exists a node $v \in V(I) \cap B$ such that either $v$ is unassigned, or $v$ has an unassigned neighbor $u \in V(I)$.
Let $d = d_{G[V_{\bot+j}]}(t_j,v)$.
Assume first that $v \in V(P^*)$ is unassigned. Let $A'$ be the active subpath such that $v \in V(A')$.
Following the analysis of the previous proof, if $X' \ge X$, then $B$ punctures exactly one active subpath (namely $A'$) that intersects $I$ and does not cover the part of $A'$ contained in $I$, or $B$ punctures exactly two  such active subpaths and covers neither of them. In either case, $B$ punctures $A'$ and does not cover the part of $A'$ contained in $I$. Since $w(I) \le pb^{i_0} \le pb^i$, the length of $A'$ is at most $pb^i$. We conclude that $r_j + R^i_j \ge d$, and $r_j + R^i_j < d + pb^i$.
If $v$ has an unassigned neighbor $u \in V(I)$, we get the same conclusion, since this again means $r_j +R_j^i \ge r_j \ge d$. By the memoryless property,
\ifprocs
\begin{align*}
\Pr[X' \ge X &\mid \text{$B$ reaches $I$}] 
\\ &\le \Pr[R_{j}^i < d-r_{j} + pb^i \mid R_{j}^i \ge d-r_{j}] \\ &\le  1-e^{-p} \le p \;.
\end{align*}
\else
\begin{align*}
\Pr[X' \ge X \mid \text{$B$ reaches $I$}] 
\le \Pr[R_{j}^i < d-r_{j} + pb^i \mid R_{j}^i \ge d-r_{j}] \le  1-e^{-p} \le p \;.
\end{align*}
\fi
\end{proof}

\subsubsection{The Effect of a Sequence of Balls on a Short Segment}
Consider now the first $N$ balls that reach $I$, starting from the beginning of iteration $i_0$ of the outer loop, and perhaps during several iterations of that loop. For every $a \in [N]$, let $Y_a$ be the indicator random variable for the event that the $a$-th ball reaching $I$ decreased the number of active subpaths intersecting $I$. In these notations, Proposition~\ref{p:cond} stated that
$$
  \forall a \in [N],\qquad 
  \Pr[Y_{a+1}=1 \mid Y_1,\ldots,Y_a] \ge 1 - p \;.
$$
Let $Y = \sum_{a \in [N]}{Y_a}$ and let $Z \sim Bin(N,1-p)$. Simple induction on $N$ implies the following claim.
\begin{claim}
$\forall k, \; \Pr\left[Y > k\right] \ge \Pr[Z>k]$.
\end{claim}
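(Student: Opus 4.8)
The claim asserts a stochastic domination: the sum $Y=\sum_{a\in[N]}Y_a$ of the (highly dependent) indicators dominates a clean binomial $Z\sim\mathrm{Bin}(N,1-p)$, using only the one-step conditional lower bound $\Pr[Y_{a+1}=1\mid Y_1,\dots,Y_a]\ge 1-p$ supplied by Proposition~\ref{p:cond}. The plan is to prove this by induction on $N$, coupling the partial sums scale by scale.

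First I would set up the induction on $N$. The base case $N=0$ (or $N=1$) is immediate since both sides are trivially comparable. For the inductive step, I would condition on the first $N-1$ indicators $Y_1,\dots,Y_{N-1}$: by the inductive hypothesis applied to the truncated sequence, $Y':=\sum_{a\in[N-1]}Y_a$ stochastically dominates $Z'\sim\mathrm{Bin}(N-1,1-p)$. Then I would add the last coordinate. The key point is that, whatever the history $Y_1,\dots,Y_{N-1}$, Proposition~\ref{p:cond} guarantees $\Pr[Y_N=1\mid Y_1,\dots,Y_{N-1}]\ge 1-p$, so conditionally $Y_N$ dominates an independent $\mathrm{Bernoulli}(1-p)$. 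Since stochastic domination of the sum $Y'+Y_N$ over $Z'+\mathrm{Bern}(1-p)=Z$ follows from domination of the two summands when the second is added independently on top of a dominated first summand, the step goes through. Concretely, the cleanest route is to build an explicit coupling: realize $Z'$ and $Y'$ on a common space with $Z'\le Y'$ pointwise (possible since $Y'$ dominates $Z'$), then draw $Y_N$ from its true conditional law and, on the same randomness, let the last Bernoulli trial for $Z$ be coupled so that it is $1$ only if $Y_N=1$ (possible since $\Pr[Y_N=1\mid\cdot]\ge 1-p$). This yields $Z=Z'+\text{(last trial)}\le Y'+Y_N=Y$ pointwise, hence $\Pr[Y>m]\ge\Pr[Z>m]$ for every $m$.

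The main obstacle, such as it is, is purely bookkeeping: one must be careful that the conditional bound in Proposition~\ref{p:cond} holds for \emph{every} realization of the history (not merely in expectation), so that the coupling can be performed history-by-history; this is exactly what that proposition provides, since the bound $1-p$ is uniform over the conditioning. A secondary subtlety is that $N$ itself (the number of balls reaching an active subpath of $A$) is a random quantity determined by the process, so strictly speaking one should either condition on $\{N\ge a\}$ when invoking Proposition~\ref{p:cond} for the $a$th such ball, or phrase the induction with $N$ fixed and then note the statement is made for each fixed $N$; either way the uniformity of the $1-p$ bound makes this harmless.

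In summary: induct on $N$; at each step peel off the last indicator; use the uniform one-step lower bound of Proposition~\ref{p:cond} to couple it above an independent $\mathrm{Bernoulli}(1-p)$ on top of the inductively-coupled dominated binomial $Z'$; conclude $Z\le Y$ pointwise under the coupling, which is exactly the asserted domination $\Pr[Y>k]\ge\Pr[Z>k]$ for all $k$.
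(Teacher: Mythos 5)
Your proof is correct and matches what the paper sketches: the paper simply says ``Simple induction on $N$,'' and the standard way to carry that out is exactly your sequential coupling, peeling off the last indicator and using the uniform one-step bound $\Pr[Y_{a+1}=1\mid Y_1,\dots,Y_a]\ge 1-p$ from Proposition~\ref{p:cond} to couple it above an independent $\mathrm{Bernoulli}(1-p)$. The one small point to keep in mind when writing this out formally is that the coupling produced by the inductive hypothesis must carry along the full history $(Y_1,\dots,Y_{N-1})$ rather than merely the sum $Y'$, so that the conditional law of $Y_N$ is well-defined; this is harmless (extend the coupling by sampling the history conditionally on $Y'$), and you essentially flag this under ``bookkeeping.''
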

\begin{lemma} \label{l:puncBalls}
With probability at least $1-1/k^{10}$, after $90 \log k$ balls have reached $I$, there are no active subpaths intersecting $I$.
\end{lemma} 
\begin{proof}
Assume $N = 70 \log k$.
Since whenever $Y_a=0$, the number of active subpaths increases by at most $1$, and whenever $Y_a=1$, the number of active subpaths decreases by at least $1$, if $Y > N/2$, then there are no active subpaths intersecting $I$. Therefore by the Chernoff bound,
\begin{align*}
\ifprocs
\Pr[\text{there are no active subpaths}&\text{ intersecting $I$ after $N$ balls reach $I$}] \\
\ge \Pr[Y > N/2] \ge \Pr[Z>N/2] \ge 1 - 1/k^{10} \;.&
\else
\Pr[\text{there are no active subpaths}&\text{ intersecting $I$ after $N$ balls reach $I$}] \\
&\ge \Pr[Y > N/2] \ge \Pr[Z>N/2] \ge 1 - 1/k^{10} \;.
\fi
\end{align*}
\end{proof}

\subsection{The Behavior of Short Subtrees} \label{subsec:short}
As stated before, the most crucial part of the proof is to bound the weight and height of short subtrees of ${\cal T}$.
Let $i_0 > 1$, and let $A \in {\cal A}_{i_0}$ be a short subpath such that ${\cal T}_{i_0,A}$ is a short subtree of ${\cal T}$.
Clearly, $A'$ is short for every node $\langle i',A' \rangle$ of ${\cal T}_{i_0,A}$. 
In order to bound the height of ${\cal T}_{i_0,A}$ we combine the fact that not too many balls may reach $A$, with the fact that at least two balls reach $A$ during each iteration of the outer loop. 
\begin{claim}
With probability at least $1-1/k^{10}$, the height of ${\cal T}_{i_0,A}$ is at most $45\log k$.
\end{claim}
\begin{proof}
In the notations of Lemma~\ref{l:puncBalls}, consider some $i \ge i_0$.
If $A$ has an active subpath at the end of the $i$-th iteration of the outer loop, then at least two times during the $i$-th iteration an active subpath of $A$ is reached by a ball. 
After $45\log k$ iterations of the outer loop, if $A$ has an active subpath, then $N \ge 90 \log k$.
By similar arguments to Lemma~\ref{l:puncBalls}, 
\begin{align*}
\ifprocs
\Pr[\text{The height of ${\cal T}_{i_0,A}$ is at most $N/2$}] &\ge \Pr[Y > N/2] \\ &\ge \Pr[Z>N/2] \\ &\ge 1 - 1/k^{10} \;.
\else
\Pr[\text{The height of ${\cal T}_{i_0,A}$ is at most $N/2$}] \ge \Pr[Y > N/2] \ge \Pr[Z>N/2] \ge 1 - 1/k^{10} \;.
\fi
\end{align*}
\end{proof}
We denote by $\mathbf{E_1}$ the event that for every $i$, and every $A \in {\cal A}_i$, if ${\cal T}_{i,A}$ is a short subtree then after at most $90 \log k$ balls reach an active subpath of $A$, $A$ has no more active subpaths and in addition, the height of ${\cal T}_{i,A}$ is at most $45 \log k$.
\begin{lemma} \label{l:shortReach}
$\Pr[\mathbf{E_1}] \ge 1-1/k^5$.
\end{lemma}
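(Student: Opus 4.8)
The plan is to prove $\Pr[\mathbf{E_1}] \ge 1 - 1/k^3$ by taking a union bound over all short subtrees of $\mathcal{T}$, using the previous lemma (which gives failure probability at most $2/k^8$ per short subtree) and bounding the number of short subtrees by a polynomial in $k$. The main work is therefore to control how many short subtrees can possibly arise during the entire execution.

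First I would bound the number of outer-loop iterations. By Assumption~\ref{a:polyLogDelta} we have $\calD \le 2^{k^3}$, and after scaling $\min_{u,v\in T} d_G(u,v) = 1$, so $\diam_G(T) \le 2^{k^3}$ and in fact the whole relevant radius scale is bounded: once $b^i$ exceeds the diameter of $G$ restricted to the connected component containing the terminals, every remaining vertex gets swallowed in $O(\log k)$ further iterations. Since $b = 1 + 1/(35\log k)$, we have $b^i \ge 2$ once $i \ge \Omega(\log k)$, and $b^i \ge 2^{k^3}$ once $i = O(k^3 \log k)$; hence the outer loop runs for at most $i_{\max} = \mathrm{poly}(k)$ iterations (a crude bound like $O(k^4)$ suffices). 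Next I would bound $|\mathcal{A}_i| \le k$ for every $i$ (already established in the earlier claim), so the total number of nodes $\langle i, A\rangle$ in $\mathcal{T}$ is at most $i_{\max} \cdot k = \mathrm{poly}(k)$. Since every short subtree of $\mathcal{T}$ is rooted at a distinct node of $\mathcal{T}$, the number of short subtrees is at most $\mathrm{poly}(k)$, say at most $k^4$.

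Then I would invoke the preceding lemma: for a fixed node $\langle i_0, A\rangle$ with $A \in \mathcal{A}_{i_0}$ short and $\mathcal{T}_{i_0,A}$ a short subtree, conditioned on the state of the algorithm at the start of iteration $i_0$ (which determines whether this is indeed a short subtree), the two bad events — that more than $70\log k$ balls reach an active subpath of $A$ before $A$ dies, and that the height of $\mathcal{T}_{i_0,A}$ exceeds $35\log k$ — each have probability at most $1/k^8$. The key observation making this conditioning legitimate is that the analysis in that lemma depends only on the sequence of balls reaching active subpaths of $A$ from iteration $i_0$ onward, and the stochastic-domination claim $\Pr[Y > k] \ge \Pr[Z > k]$ with $Z \sim \mathrm{Bin}(N, 1-p)$ holds regardless of the past; so the per-subtree failure probability is at most $2/k^8$ even after conditioning on whatever earlier randomness determined the identity and root of this short subtree. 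A union bound over the $\le k^4$ short subtrees then gives total failure probability at most $2k^4/k^8 = 2/k^4 \le 1/k^3$ for $k$ large enough (and small $k$ can be absorbed into constants or handled trivially).

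The main obstacle I anticipate is making the bound on $i_{\max}$ (the number of outer-loop iterations) both rigorous and clean: one must argue that the algorithm actually terminates and does so within $\mathrm{poly}(k)$ iterations, which requires noting that the radii $r_j$ grow without bound (each $R_j^i \sim \exp(b^i)$ has mean $b^i \to \infty$, and a Borel–Cantelli / simple tail argument shows $r_j$ eventually exceeds any fixed distance almost surely), and that within the event $\mathbf{E_1}$'s complement we only need an \emph{a priori} polynomial bound on the number of nodes of $\mathcal{T}$ rather than an exact count. A secondary subtlety is that the definition of a short subtree requires its parent to consist of a \emph{long} subpath, so distinct short subtrees really are rooted at distinct nodes and are not nested — but this is immediate from the laminar structure of $\bigcup_i \mathcal{A}_i$. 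Modulo these bookkeeping points, the lemma follows directly from the union bound.
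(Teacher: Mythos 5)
Your overall plan (bound the number of short subtrees by $\mathrm{poly}(k)$, then union bound against the per-subtree failure probability $O(1/k^8)$ from the preceding lemma) is exactly the paper's plan, and your remark about the conditioning/domination argument carrying over regardless of past randomness is a fair point. However, the way you try to bound the number of short subtrees has a real gap.

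You attempt to bound $i_{\max}$, the total number of outer-loop iterations, deterministically by $\mathrm{poly}(k)$, and then argue that $\mathcal{T}$ has at most $k\cdot i_{\max}$ nodes. But there is no deterministic polynomial bound on $i_{\max}$: the outer loop terminates only once $V_{[k]}=V$, and the diameter of $G$ itself (as opposed to $\calD$, which only measures inter-terminal distances) is unconstrained by Assumption~\ref{a:polyLogDelta}, so covering all of $V$ may require far more than $\mathrm{poly}(k)$ iterations. Even restricting attention to the height of $\mathcal{T}$ (i.e.\ until $P$ has no active subpaths) does not help, because that height is itself a random variable; the paper only bounds it by $\log_b(\ell/p)+30\log k$ in Corollary~\ref{c:height}, which is a \emph{consequence} of $\mathbf{E_1}$, so invoking it here would be circular. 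Your appeal to ``the radii $r_j$ grow without bound, Borel--Cantelli'' gives termination almost surely but not a deterministic (or even high-probability-without-circularity) polynomial iteration bound, and the union bound needs a deterministic index set.

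The observation you are missing, and which is the crux of the paper's proof, is that a short-subtree root can only live at a level $i\le i'\eqdef\log_b(\ell/p)$. This is because the threshold $pb^i$ for being short is increasing in $i$, while every node at level $i'$ has weight at most $w(P^*)=\ell=pb^{i'}$; hence every node at level $\ge i'$ is short, so a node at level $i>i'$ necessarily has a short parent and cannot be a short-subtree root. This is a purely deterministic fact. Combined with $|\mathcal{A}_i|\le k$, it gives at most $k\cdot i' = k\cdot\log_b(\calD/p) = O(k^4\log k)$ short subtrees deterministically, under Assumption~\ref{a:polyLogDelta}, after which the union bound closes the argument. Without this observation you do not have a valid polynomial bound to union-bound over.
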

\begin{proof}
Fix some $i$, and $A \in {\cal A}_i$. Assume that ${\cal T}_{i,A}$ is a short subtree. By definition, $\left\langle i,A \right\rangle$ has no short ancestor.
For $i' = \log_b(\ell/p) \le \log_b(\calD/p)$, $P^*$ itself is short, since $b^{i'} = \ell/p$, and thus all tree nodes in level $i'$ (and lower) are short.  Therefore, $i \le i'$. Since there are at most $k$ nodes in every level of the tree, the number of short subtrees of ${\cal T}$ is at most $k \cdot \log_b(\calD/p) = k \cdot (\log_b \calD + \log_b 100) \le k^4 \log k + O(k \log k) \le O(k^4\log k)$. By the previous lemma, and a union bound over all short subtrees, the result follows.
\end{proof}

Since every node in level $\log _b (\ell/p)$ of the tree belongs to some short subtree, we get the following corollary.
\begin{corollary} \label{c:height}
With probability at least $1-2/k^5$, the height of ${\cal T}$ is at most $\log_b (\ell/p) + O(\log k) \le \iter$.
\end{corollary}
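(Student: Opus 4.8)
The plan is to exploit the laminar decomposition of ${\cal T}$ into short subtrees, together with the height bound on short subtrees already established. Set $i' = \log_b(\ell/p)$, so that $b^{i'} = \ell/p$. \textbf{Step 1: every node of ${\cal T}$ at level $\ge i'$ consists of a short subpath.} If $\langle i,A\rangle\in{\cal T}$ then $A$ is an active subpath, so all its inner vertices are unassigned, and by the path invariant (the second item of the Claim following Corollary~\ref{cor:costMinor}) $A$ is a subpath of $P^*$; hence $w(A)\le w(P^*)=\ell$. For $i\ge i'$ we have $p b^{i}\ge p b^{i'}=\ell\ge w(A)$, so $A$ is short.

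\textbf{Step 2: every node of ${\cal T}$ at level $\ge i'$ lies in a short subtree.} Fix a node at level $i'$ and climb the ancestor chain toward the root $\langle 1,P^*\rangle$; let $\langle i_0,A_0\rangle$ be the shallowest ancestor that is still short (this exists, since the node itself is short by Step~1). Since $P^*\in{\cal A}_1$ is long, $i_0>1$, so $\langle i_0,A_0\rangle$ has a parent, and that parent is long by minimality of $i_0$; hence ${\cal T}_{i_0,A_0}$ is a short subtree, $i_0\le i'$, and it contains our level-$i'$ node. Now any node at level $L\ge i'$ has exactly one ancestor at each smaller level, in particular one at level $i'$, which lies in some short subtree ${\cal T}_{i_0,A_0}$ with $i_0\le i'$; being a descendant of that ancestor, the level-$L$ node lies in ${\cal T}_{i_0,A_0}$ as well.

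\textbf{Step 3: conclude.} Condition on the event $\mathbf{E_1}$ of Lemma~\ref{l:shortReach}, which holds with probability at least $1-1/k^3$; on this event every short subtree has height $O(\log k)$ (the lemma gives $35\log k$; absorbing one further $1/k^3$ from a routine tail estimate lets one use $30\log k$, which accounts for the stated $1-2/k^3$). If ${\cal T}$ had a node at level $L>i'+30\log k$, then by Step~2 it would lie in a short subtree rooted at level $i_0\le i'$, which would then have height $\ge L-i_0>30\log k$, a contradiction. Hence the height of ${\cal T}$ is at most $i'+30\log k=\log_b(\ell/p)+30\log k$. Finally, $\ell=d_{G,w}(s,t)\le\calD$ and $p=1/100$ is a constant, while $b=1+1/(35\log k)$ is within $1+O(1/\log k)$ of $1$, so both $\log_b 100$ and $30\log k$ are $O(\log k)\cdot O(1)$ in units of $\log b$ and are dominated by $\log_b\calD$ once $\calD\ge 2$ (the case $\calD=1$ being trivial); a direct computation then yields $\log_b(\ell/p)+30\log k\le 10\log_b\calD=\iter$.

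\textbf{Main obstacle.} The only real content is Steps~1--2: verifying that the laminar family of short subtrees covers \emph{all} nodes at depth $\ge i'$, so that the per-subtree height bound of Lemma~\ref{l:shortReach} upgrades to a global height bound; the rest is bookkeeping, the one subtlety being to confirm that the additive $O(\log k)$ terms together with the $\log_b 100$ term fit inside the $10\log_b\calD$ budget, which works precisely because $b$ is taken so close to $1$.
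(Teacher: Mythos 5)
Your proof is correct and takes essentially the same route as the paper: condition on $\mathbf{E_1}$ from Lemma~\ref{l:shortReach}, observe that every node at level $\ge \log_b(\ell/p)$ is short and hence lies in a short subtree rooted at level $\le \log_b(\ell/p)$, and add the per-subtree height bound, with a final arithmetic check that the total fits inside $\iter$. The only inaccuracy is your reconciliation of the constants: there is no ``routine tail estimate'' absorbing a further $1/k^3$ to shave $35\log k$ down to $30\log k$ --- Lemma~\ref{l:shortReach} and the event $\mathbf{E_1}$ as defined give height $35\log k$ with failure probability $1/k^3$, so the $30$ and the $2/k^3$ in the corollary's statement appear simply to be internal inconsistencies of the paper rather than something your proof needs to manufacture.
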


We denote by $\mathbf{E_2}$ the event that for all $i \le \iter$ and $j \in [k]$, the radius of the $j$-th ball of the $i$-th iteration of the outer loop is at most $O(b^i \log k)$.
\ifprocs
From assumption~\ref{a:polyLogDelta} we deduce the following.
\else
\fi
We wish to prove that $\mathbf{E_2}$ holds with high probability. We will need the following lemma, which gives a concentration bound on the sum of independent exponential random variables.
\begin{lemma}\label{l:conc}
Let $X_1,\ldots,X_n$ be independent random variables
such that each $X_j \sim \exp(\lambda_j)$ for $\lambda_j>0$,
and denote $\lambda = \max_j \lambda_j$.
Then $X = \sum_j {X_j}$ 
has expectation $\mu = \EX[X] = \sum_{j}{\lambda_j}$ 
and satisfies
$$
  \forall \delta > 1, \quad \quad \Pr[X > (1+ \delta) \mu] \le e^{(1 - \delta) \frac{\mu}{2 \lambda}} \; .
$$
\end{lemma}
\begin{proof}
We proceed by applying Markov's inequality to the moment generating function
(similarly to proving Chernoff bounds).
Let $j \in [n]$ and consider $0\le t \le \frac{1}{2 \lambda_j}$.
Then the moment generating function of $X_j$ is known and can be written as
$\EX\left[ e^{t X_j} \right] = \frac{1}{1 - t \lambda_j} 
\le 1 + 2 t \lambda_j \le e^{2 t \lambda_j}$.
Now set $t = \frac{1}{2 \lambda}$ and use Markov's inequality to get
\begin{equation*}
\Pr[X > (1+ \delta) \mu] = \Pr[e^{t X} > e^{t(1+ \delta) \mu}] \le \frac{\mathbb{E}\left[e^{t X} \right]}{e^{t(1+ \delta) \mu}} = \frac{\prod_{j \in [n]}{\mathbb{E}\left[ e^{t X_j} \right]}}{e^{t(1+ \delta) \mu}} \le \frac{e^{2t \mu}}{e^{t(1+ \delta) \mu}} = e^{(1 - \delta) \frac{\mu}{2 \lambda}} \;.
\end{equation*}
\end{proof}

\begin{lemma} \label{l:radii}
$\Pr[\mathbf{E_2}] \ge 1 - 1/k^5$.
\end{lemma}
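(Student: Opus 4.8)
Lemma \ref{l:radii} — the plan is to bound the radius $r_j$ of the $j$th ball in the $i$th iteration, where $r_j$ is the cumulative sum $r_j = \sum_{i'\le i} R_j^{i'}$ of independent exponential random variables $R_j^{i'}\sim\exp(b^{i'})$. Conditioning on the event $\mathbf{E}_1$ (or rather on the event of Corollary~\ref{c:height} that the outer loop runs for at most $\iter = 10\log_b\calD$ iterations), each $r_j$ for $i\le\iter$ is a sum of at most $\iter$ independent exponentials with parameters $b^1,b^2,\ldots,b^i$, whose expectations form a geometric-like series dominated by its last term: $\EX[r_j] = \sum_{i'\le i} b^{i'} = \frac{b^{i+1}-b}{b-1} \le \frac{b^{i+1}}{b-1}$. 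Since $b - 1 = 1/(35\log k)$, this expectation is at most $35 b^{i+1}\log k$. The target threshold $280 b^{i+1}\log^2 k$ is thus roughly a $8\log k$ factor above the mean, and we want this to fail with probability at most $k^{-3}/(k\cdot\iter)$ after a union bound over the $k$ terminals and $\iter\le O(k^3\log k/\log k)$ iterations — wait, more carefully $\iter = O(k^3)$ under Assumption~\ref{a:polyLogDelta}, so we need failure probability $k^{-O(1)}/k^{O(1)} = k^{-c}$ for a suitable constant, which a concentration bound for sums of exponentials easily delivers.

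The key step is a tail bound for $S = \sum_{i'=1}^{i} R_{i'}$ where $R_{i'}\sim\exp(b^{i'})$ are independent. The cleanest route is a moment generating function / Chernoff argument: for $0 < s < 1/b^i$ (so that $s$ is below the reciprocal of the largest scale parameter), $\EX[e^{sR_{i'}}] = \frac{1}{1 - s b^{i'}} \le e^{2 s b^{i'}}$ provided $s b^{i'}\le 1/2$, hence $\EX[e^{sS}] \le \exp(2s\sum_{i'\le i} b^{i'}) \le \exp(2s\cdot 35 b^{i+1}\log k)$. Choosing $s = 1/(2b^i)$ gives $\Pr[S > T] \le \exp\!\big(70 b\log k - T/(2b^i)\big)$; with $T = 280 b^{i+1}\log^2 k$ the exponent is $70b\log k - 140 b\log^2 k = -\Omega(\log^2 k)$, which is $\le k^{-\Omega(\log k)}$, far smaller than any fixed inverse polynomial in $k$.

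Having this per-$(i,j)$ tail bound, I would finish by a union bound: over $j\in[k]$ and over $i\le\iter$ (which, crucially, is an a priori fixed bound $O(k^3)$ valid unconditionally, not the random height of $\mathcal T$, so there is no circular conditioning), the probability that some radius exceeds $280 b^{i+1}\log^2 k$ is at most $k\cdot\iter\cdot k^{-\Omega(\log k)} \le k^{-3}$, since $k\cdot\iter = O(k^4)$ is swamped by $k^{-\Omega(\log k)}$. This yields $\Pr[\mathbf{E}_2]\ge 1 - 1/k^3$.

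The only mild subtlety — and the one place to be careful rather than the "hard part" — is making sure the union bound ranges over a \emph{deterministic} set of indices $(i,j)$ with $i\le\iter$; the definition of $\mathbf{E}_2$ already does this (it only constrains iterations $i\le\iter$), so even in executions where the outer loop runs longer than $\iter$, the event $\mathbf{E}_2$ is well-defined and the bound above applies verbatim. A secondary point is justifying $\frac{1}{1-x}\le e^{2x}$ for $x\in[0,1/2]$, which is elementary. I expect no genuine obstacle here; the argument is a routine concentration estimate for a sum of exponentials, and the generous slack ($\log^2 k$ versus an expected $\log k$) makes the constants forgiving.
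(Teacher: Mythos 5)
Your proof is correct, but it takes a genuinely different (and somewhat heavier) route than the paper's. The paper does not touch the moment generating function of the cumulative sum at all: it simply uses the elementary exponential tail bound $\Pr[R_j^i > 8b^i\log k] = e^{-8\log k} = k^{-8}$ on \emph{each individual summand}, applies a union bound over the at most $k\cdot\iter = O(k^4\log k)$ pairs $(i,j)$ to conclude that with probability $\ge 1-1/k^3$ \emph{every} $R_j^{i'}$ with $i'\le \iter$ is at most $8b^{i'}\log k$, and then on that event sums deterministically: $r_j = \sum_{i'\le i} R_j^{i'} \le 8\log k\sum_{i'\le i}b^{i'} \le 8\log k\cdot\frac{b^{i+1}}{b-1} = 280 b^{i+1}\log^2 k$. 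In other words, the $280\log^2k$ threshold in the definition of $\mathbf{E_2}$ is chosen precisely as the deterministic worst-case partial sum of the individual high-probability caps. Your Chernoff/MGF argument bounds the cumulative sum $r_j$ directly and obtains a much stronger per-$(i,j)$ failure probability of $k^{-\Omega(\log k)}$ instead of the $k^{-8}$ the paper settles for; this is correct and makes the union bound extremely comfortable, but is more machinery than the paper needs, since the margin built into $\mathbf{E_2}$ already suffices for the simpler argument. (Two small arithmetic nits that do not affect your conclusion: with $s = 1/(2b^i)$ the MGF exponent is $2s\cdot 35b^{i+1}\log k = 35b\log k$, not $70b\log k$; and $\iter = 10\log_b\calD = O(k^3\log k)$ under Assumption~\ref{a:polyLogDelta}, not $O(k^3)$ — so $k\cdot\iter = O(k^4\log k)$ rather than $O(k^4)$.) Your observation that $\iter$ is a deterministic quantity, so the union bound ranges over a fixed index set without any circular conditioning, is a correct and worthwhile remark, and matches the paper's implicit reliance on the same fact.
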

\begin{proof}
Fix $i \le \iter$ and $j \in [k]$. 
In the notations of Algorithm~\ref{alg:part}, let $r_j$ be the radius of the $j$th ball in the $i$th iteration of the outer loop. Then $r_j = \sum_{i' \le i}{R_j^{i'}}$ is the sum of independent exponential random variables.
$\mathbb{E}[r_j] = \sum_{i' \le i}{b^{i'}} = b \cdot \frac{b^i-1}{b-1} \ge 20b^i \log k$. Applying Lemma~\ref{l:conc} we get that 
$$\Pr[r_j > 40b^i \log k ] = \Pr[r_j > 2 \mathbb{E}[r_j] \;] \le e^{\frac{-\mathbb{E}[r_j]}{b^i}} \le k^{-10}$$
By assumption~\ref{a:polyLogDelta}, $\iter = O(\log \calD \log k) = O(k^3 \log k)$.
Thus by a union bound over all values of $i$ and $j$ in question, 
$$\Pr[\forall i ,j. \;  r_j \le 40 b^i \log k \; \; \text{in the $i$th iteration of the outer loop}] \ge 1 - \frac{k \cdot \iter}{k^{10}} \ge 1 - \frac{1}{k^5} \;.$$
\end{proof}

Summing everything up, we can now bound with high probability the weights of all short subtrees of ${\cal T}$. 
\begin{claim} \label{c:shortWeight}
Conditioned on the events $\mathbf{E_1}$ and $\mathbf{E_2}$, for every $i_0 > 1$ and $A \in {\cal A}_{i_0}$, if ${\cal T}_{i_0,A}$ is a short subtree of ${\cal T}$, then the total weight charged to nodes of ${\cal T}_{i_0,A}$ is at most $O(b^{i_0} \log^2 k)$ with certainty.
\end{claim}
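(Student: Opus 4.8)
The plan is to bound the total charged weight by the product of three quantities, each controlled by the events already established: (i) the number of tree nodes in the short subtree $\mathcal{T}_{i_0,A}$, (ii) the number of detours charged to a single node, and (iii) the weight of a single detour. For (i), condition on $\mathbf{E_1}$: the height of $\mathcal{T}_{i_0,A}$ is at most $35\log k$, and by Proposition~\ref{p:maxBirth} each inner-loop iteration increases the number of active subpaths of $A$ by at most one, so across one outer-loop iteration (which has $k$ inner iterations) the branching is bounded; combined with the fact that at most $70\log k$ balls ever reach an active subpath of $A$, the total number of nodes of $\mathcal{T}_{i_0,A}$ that are ever charged is $O(\log^2 k)$ (the charged nodes correspond to reaching balls, and each reaching ball produces $O(1)$ detours charged within this subtree). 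For (iii), condition on $\mathbf{E_2}$: every ball arising at outer iteration $i$ has radius at most $280 b^{i+1}\log^2 k$, so a detour through $t_j$ — being a concatenation of two shortest paths inside that ball — has weight at most $4 \cdot 280 b^{i+1}\log^2 k$; and since all nodes of a short subtree rooted at level $i_0$ lie at levels between $i_0$ and $i_0 + 35\log k$, while the relevant balls are those reaching $A$ which has weight $\le p b^{i}$, the scale $b^i$ stays within a constant factor of $b^{i_0}$ for the balls that matter (here I would use that $b^{35\log k} = (1+1/(35\log k))^{35\log k} = O(1)$, so $b^{i} \le O(b^{i_0})$ for all relevant $i$).

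Putting these together, the total weight charged to $\mathcal{T}_{i_0,A}$ is at most
\[
\underbrace{O(\log^2 k)}_{\#\text{charged detours}} \cdot \underbrace{4 \cdot 280\, b^{i+1}\log^2 k}_{\text{weight of one detour}} = O(b^{i_0+1}\log^4 k),
\]
and tracking the constants carefully (the $70\log k$ reaching balls, the factor $b^{35\log k} \le e$, and the $4\cdot 280 = 1120$ from the detour bound) yields the stated bound of $39200 e\, b^{i_0+1}\log^3 k$; I would double-check whether the $\log$ power is $3$ or $4$ and reconcile with how many detours can be charged to a single node versus a single ball, since a ball reaching $A$ contributes at most one detour charged to the node containing its first unassigned endpoint.

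The main obstacle I anticipate is the bookkeeping in step (ii): precisely counting how many detours get charged to the nodes of this particular subtree. The subtlety is that a detour is charged to the active subpath $A' \in \mathcal{A}_i$ \emph{containing its first endpoint $u$}, and one must argue that for a short subtree rooted at $\langle i_0, A\rangle$, only balls that \emph{reach an active subpath of $A$} can cause charges to $\mathcal{T}_{i_0,A}$, and that each such ball causes at most one charge within the subtree (a single ball $B = B_{G[V_{\bot+j}]}(t_j,r_j)$ for a fixed $j$ in a fixed outer iteration produces at most one detour). Combined with the $\mathbf{E_1}$ bound that at most $70\log k$ balls reach an active subpath of $A$, this gives the count. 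The rest — bounding one detour's weight by the ball radius via $\mathbf{E_2}$, and controlling $b^i/b^{i_0}$ over the height of the subtree — is routine once the counting is pinned down.
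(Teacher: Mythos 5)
The overall shape of your argument is right --- bound (number of detours charged to the subtree) $\times$ (max weight of a single detour), using $\mathbf{E_1}$ for the first factor and $\mathbf{E_2}$ for the second, and absorb the scale drift $b^{35\log k}\le e$ across the height of the subtree. That is exactly the paper's route. However, you have a genuine counting error in the first factor, and it is the source of your uncertainty about whether the power of $\log k$ is $3$ or $4$.

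You estimate the number of charged detours as $O(\log^2 k)$ by reasoning about ``number of tree nodes that are ever charged'' times ``$O(1)$ detours per charged node,'' and along the way you invoke $35\log k$ (the height) as a separate multiplicative factor. This double-counts. The event $\mathbf{E_1}$ directly says: at most $70\log k$ balls reach an active subpath of $A$ before $A$ has no more active subpaths. Each such reaching ball produces at most one detour that is charged to a node of $\mathcal{T}_{i_0,A}$. So the total number of charged detours is at most $70\log k$ --- a single $O(\log k)$ factor, not $O(\log^2 k)$. There is no need to go through a tree-node count at all; the detours correspond bijectively (with slack) to reaching balls, and $\mathbf{E_1}$ bounds those. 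The height bound $35\log k$ is used only to control the scale $b^i$ (which you do correctly via $b^{35\log k}\le e$), not as an additional multiplicative factor on the number of detours. Once you replace $O(\log^2 k)$ with $70\log k$, the total becomes $70\log k \cdot 560\, b^{i_0+35\log k+1}\log^2 k \le 39200\,e\, b^{i_0+1}\log^3 k$, matching the claim. (One more minor slip: a detour is a concatenation of two shortest paths from $t_j$ inside the ball, each of length at most the radius $280\,b^{i+1}\log^2 k$, so the detour weight is at most $2\cdot 280\,b^{i+1}\log^2 k = 560\,b^{i+1}\log^2 k$; your factor of $4$ is an overestimate, though it only affects the constant.)
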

\begin{proof}
Conditioned on $\mathbf{E_1}$, at most $90 \log k$ detours are charged to nodes of every short subtree and for $i = i_0 + 45\log k$, there are no more active subpaths of $A$. Conditioned on $\mathbf{E_2}$, the most expensive detour is of weight at most $O(b^{i_0+45\log k}\log k)$, we get that the total weight charged to nodes of the subtree is $90 \log k \cdot O(b^{i_0} \cdot b^{45\log k} \cdot \log k) \le O(b^{i_0} \log^2 k)$, since $b^{45 \log k} = O(1)$. 
\end{proof}
\subsection{Bounding The Weight Of ${\cal T}$}
We are now ready to bound the total weight charged to the tree. 
Recall that for every $i \le \log_b(\ell/p)$ we denoted by $l_i$ the total weight charged to nodes of the form $\left\langle i,A \right\rangle$, where $A \in {\cal A}_i$ is a long active subpath, and by $s_i$ the total weight charged to short subtrees rooted in the $i$-th level of ${\cal T}$. 
Since $\ell \ge 1$, $P^* \in {\cal A}_1$ is long. Therefore, $s_1=0$. We can therefore rearrange Eqn.~\eqref{eq:treeWeight} to get the following.
\begin{equation}
\sum_{\langle i,A \rangle \in {\cal T}}{w_{i,A}} = \sum_{i=1}^{\log_b (\ell/p)}{(l_i+s_{i+1})} \;.
\label{eq:treeWeight2}
\end{equation}

Let $i \le \log_b(\ell/p)$.
Let $A \in {\cal A}_i$ be a long active subpath. That is, $w(A) \ge pb^i$.
Thinking of $A$ as a continuous path, divide $A$ into $w(A) / (pb^i)$ segments of length $pb^i$. Some segments may contain no nodes.
Let $I$ be a segment of $A$, and assume $I$ contains nodes (otherwise, no cost is charged to $A$ on account of detours from $I$). Following Lemma~\ref{l:puncBalls}, we get the following.
\begin{lemma}
With probability at least $1-1/k^{10}$, no more than $90 \log k$ balls reach $I$.
\end{lemma}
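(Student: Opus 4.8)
The plan is to carry over, almost verbatim, the short-subtree analysis of Section~\ref{subsec:short}, applied now to the single segment $I$ instead of to a short active subpath. The observation that licenses this is that $I$ has weight at most $pb^i$, while the outer-loop index only increases during the execution; hence at \emph{every} iteration $i' \ge i$ the segment $I$, and therefore every active subpath of $P$ having an inner vertex in $I$, has weight at most $pb^{i'}$, i.e.\ is ``short'' with respect to the current scale. This is exactly the hypothesis used in Propositions~\ref{p:maxBirth} and~\ref{p:cond}, so both apply with $I$ playing the role of the short subpath: in any one inner-loop iteration the number of active subpaths of $P$ that meet the interior of $I$ increases by at most one, and conditioned on the current ball $B$ reaching such an active subpath, with probability at least $1-p$ this number strictly decreases.

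Next I would list the balls $B^{(1)},B^{(2)},\ldots$ that puncture $I$, in the order in which they occur over the whole execution (over outer iterations $\ge i$ and all terminals $j$), and for $a=1,2,\ldots$ let $Y_a$ be the indicator of the event that $B^{(a)}$ strictly decreases the number of active subpaths meeting the interior of $I$. Since a ball that punctures $I$ in particular reaches an active subpath of $I$, Proposition~\ref{p:cond} (applied to $I$) gives $\Pr[Y_{a+1}=1\mid Y_1,\ldots,Y_a]\ge 1-p$, so, exactly as in the \textbf{Claim} preceding the short-subtree lemma, $Y\eqdef\sum_{a\in[N]}Y_a$ stochastically dominates $Z\sim\mathrm{Bin}(N,1-p)$ for $N\eqdef 70\log k$.

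Finally I would argue deterministically that if more than $N/2$ of the first $N$ puncturing balls strictly decrease this count, then the count reaches $0$: it equals $1$ at the start of iteration $i$ (all inner vertices of $I$ are then unassigned), every ball with $Y_a=0$ raises it by at most $1$ by Proposition~\ref{p:maxBirth}, and every ball with $Y_a=1$ lowers it by at least $1$. Once the count is $0$, all inner vertices of $I$ are assigned, and since assignments are permanent, no later ball can puncture $I$; hence the event that more than $N$ balls puncture $I$ is contained in $\{Y\le N/2\}$, and by stochastic domination and a Chernoff bound (with $p=1/100$ one has $N/2$ a constant factor below $\EX[Z]=(1-p)N$), $\Pr[Y\le N/2]\le\Pr[Z\le N/2]\le 1/k^8$, exactly as in the short-subtree lemma. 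The only delicate point — more bookkeeping than genuine obstacle — is to fix the convention that ``active subpaths of $I$'' means active subpaths of $P$ with an inner vertex in $I$, to check that puncturing $I$ implies reaching such a subpath so that Proposition~\ref{p:cond} applies as stated, and to absorb the parity/rounding in $N/2$ and in $N=70\log k$ into the unspecified constants.
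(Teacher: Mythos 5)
Your overall strategy matches the paper's intent (the paper itself gives only a one-line justification referring back to Propositions~\ref{p:maxBirth} and~\ref{p:cond}), but there is one concretely false intermediate claim that needs repair. You assert that ``every active subpath of $P$ having an inner vertex in $I$ has weight at most $pb^{i'}$.'' This is not true: for instance, at the start of iteration $i$ the only active subpath of $P$ with an inner vertex in $I$ is $A$ itself, and $A$ is long by hypothesis. More generally, an active subpath of $P$ can run far past the endpoints of the segment $I$, and nothing forces it to be short at the current scale. Since the shortness of the active subpath is precisely the hypothesis that the proof of Proposition~\ref{p:cond} uses (to conclude that puncturing without covering forces $r_j+R_j^i\in[d,d+pb^i)$), your claim as written does not license that proposition.

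The correct quantity to track is the number of maximal subpaths of $I$ (not of $P$) whose inner vertices are all unassigned. These are contained in $I$ by definition, hence have weight at most $pb^i\le pb^{i'}$, which is exactly what Proposition~\ref{p:cond} needs. Because every active subpath is a subpath of $P^*$ (by the invariant recorded in the Claim of the high-level analysis) and $I$ is a segment of $P^*$, each active subpath of $P$ that meets the interior of $I$ intersects $I$ in a single interval, so this count coincides with yours; but the ``short'' object that Proposition~\ref{p:cond} is applied to must be the intersection with $I$, not the whole active subpath. The same applies to your closing assertion that ``once the count is $0$, all inner vertices of $I$ are assigned'': a detour can delete from $P$ vertices of $I$ that are still unassigned, so what you actually need (and what suffices for the charging argument) is that once the count is $0$ no active subpath of $P$ has an inner vertex in $I$, hence no subsequent ball can puncture an active subpath at a vertex of $I$. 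With these corrections the stochastic-domination and Chernoff steps go through exactly as in the short-subtree lemma, which is the paper's intended argument.
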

Denote by $\mathbf{E_3}$ the event that for every $i \ge \log_b(\ell/k^2)$, and for every long active subpath $A \in {\cal A}_i$, in the division of $A$ to segments of length $pb^i$, every such subsegment is reached by at most $90 \log k$ balls.
\begin{lemma}
$\Pr[ \mathbf{E_3}] \ge 1-1/k^5$.
\end{lemma}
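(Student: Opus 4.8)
The plan is a union bound over all the segments that $\mathbf{E_3}$ quantifies over, relying on the preceding lemma (which says that a single length-$pb^i$ segment is punctured by more than $70\log k$ balls with probability at most $1/k^8$). The first step is to bound the number of relevant levels $i$. Long active subpaths occur only at levels $i\le\log_b(\ell/p)$: every $A\in{\cal A}_i$ is a subpath of $P^*$ (by the invariant that a subpath of $P$ with unassigned interior lies on $P^*$), so $w(A)\le\ell$, and once $pb^i>\ell$ every such $A$ is short. Intersecting with the hypothesis $i\ge\log_b(\ell/k^2)$, the relevant levels lie in the window $[\log_b(\ell/k^2),\log_b(\ell/p)]$, which contains at most $\log_b(k^2/p)+1$ integers. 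Since $b=1+1/(35\log k)$ makes $\ln b=\Theta(1/\log k)$, this is $O(\log^2 k)$ levels --- a bound independent of $\ell$, which matters because $\ell$ may be as large as $\calD\le 2^{k^3}$.

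The second step counts segments at a fixed level $i$ in this window. The members of ${\cal A}_i$ are internally disjoint subpaths of the simple path $P^*$, so their weights sum to at most $\ell$; hence the number of length-$pb^i$ segments produced from all long $A\in{\cal A}_i$ together is at most $\ell/(pb^i)+|{\cal A}_i|\le\ell/(pb^i)+k$, and $i\ge\log_b(\ell/k^2)$ gives $b^i\ge\ell/k^2$, so this is at most $k^2/p+k=O(k^2)$. Multiplying by the $O(\log^2 k)$ levels, $\mathbf{E_3}$ refers to $O(k^2\log^2 k)$ segments in total, and a union bound over them using the previous lemma yields a failure probability at most $O(k^2\log^2 k)\cdot k^{-8}=O(\log^2 k/k^6)\le 1/k^3$, i.e.\ $\Pr[\mathbf{E_3}]\ge 1-1/k^3$.

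The point requiring care --- and the main obstacle to a fully rigorous write-up --- is that these segments are not a fixed family: the partition of a long active subpath $A$ into length-$pb^i$ pieces depends on $A$, which is random. The fix is to work instead with a predetermined partition of $P^*$ into $\lceil\ell/(pb^i)\rceil$ consecutive intervals of length $pb^i$ at each level, and to apply the previous lemma only to these fixed intervals (there are $O(k^2)$ per relevant level, so the count above is unchanged). Every segment of every long active subpath at level $i$ is a subinterval of $P^*$ of length $pb^i$, hence is covered by at most two consecutive fixed intervals, so the number of balls puncturing it is at most the sum of the two corresponding counts; thus, after adjusting the constant $70$, $\mathbf{E_3}$ follows once every fixed interval at every relevant level is punctured by few balls, and the union bound applies verbatim to the fixed intervals. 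The gap between ``reaches'' and ``punctures'' costs only an additive $O(1)$ per interval (a ball can reach a segment without puncturing it only via one of its two endpoints) and is absorbed into the constants.
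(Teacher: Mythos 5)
Your proposal follows the same route as the paper's proof: count the relevant levels in the window $[\log_b(\ell/k^2),\log_b(\ell/p)]$ as $O(\log^2 k)$ using $b=1+1/(35\log k)$, bound the segments per level by $O(k^2)$ via $b^i\ge\ell/k^2$, multiply to get $O(k^2\log^2 k)$ segments in total, and union bound against the $1/k^8$ failure probability per segment from the preceding lemma. The paper's version is even terser (it bounds segments per long subpath by $k^2/p$ rather than summing $w(A)$ over the internally disjoint subpaths as you do, but the resulting $O(k^2\log^2 k)$ count is the same), so in substance you have reproduced the intended argument.

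Where you go beyond the paper is the last paragraph. You correctly observe that the segments being counted are a random family (they depend on which $A\in{\cal A}_i$ turn out to be long), so a literal union bound over them needs justification, and you patch this by replacing each level-$i$ segment with a fixed partition of $P^*$ into length-$pb^i$ intervals, paying a factor of two. The paper does not spell this out; an equally standard repair is to condition on $\mathcal{F}_i$ (the algorithm's state at the start of iteration $i$), apply the per-segment bound conditionally, and then union over the $O(\log^2 k)$ levels. Either repair is fine, and your fixed-interval version has the advantage of reducing cleanly to a union bound over a deterministic family. Your side remark that the wording discrepancy between ``reaches'' (in the definition of $\mathbf{E_3}$) and ``punctures'' (in the preceding lemma) only costs $O(1)$ per interval is also correct. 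Overall this is a correct and slightly more careful rendering of the paper's argument.
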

\begin{proof}
Since for every $i \ge \log_b(\ell/p)$, every $A \in {\cal A}_i$ is short, the number of relevant iterations (of the outer loop) is at most $\log_b(\ell/p) - \log_b(\ell/k^2) = \log_b(k^2/p) \le O(\log^2k)$.
For every $i \ge \log_b(\ell/k^2)$ and every long path $A \in {\cal A}_i$, the number of segments of $A$ is at most 
$w(A) / (pb^i) \le w(A) / (p \ell / k^2) \le k^2/p$.
Therefore the number of relevant segments for all $i \ge \log_b(\ell/k^2)$ and for all long $A \in {\cal A}_i$ is at most $O(k^2\log^2k)$
Applying a union bound over all relevant segments the result follows.
\end{proof}
Since $\Pr[\mathbf{E_1}] \ge 1-1/k^5$ and $\Pr[\mathbf{E_2}] \ge 1-1/k^5$, we get the following corollary.
\begin{corollary}
$\Pr[\mathbf{E_1} \wedge \mathbf{E_2} \wedge \mathbf{E_3}] \ge 1-O(k^{-5})$
\end{corollary}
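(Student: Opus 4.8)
The plan is to prove the corollary by a single union bound over the three failure events. The three ingredients are already in place: Lemma~\ref{l:shortReach} states $\Pr[\mathbf{E_1}] \ge 1 - 1/k^3$, Lemma~\ref{l:radii} states $\Pr[\mathbf{E_2}] \ge 1 - 1/k^3$, and the lemma immediately preceding the corollary states $\Pr[\mathbf{E_3}] \ge 1 - 1/k^3$. First I would pass to complements and write
$$
  \Pr\big[\overline{\mathbf{E_1} \wedge \mathbf{E_2} \wedge \mathbf{E_3}}\big]
  = \Pr\big[\overline{\mathbf{E_1}} \vee \overline{\mathbf{E_2}} \vee \overline{\mathbf{E_3}}\big]
  \le \Pr[\overline{\mathbf{E_1}}] + \Pr[\overline{\mathbf{E_2}}] + \Pr[\overline{\mathbf{E_3}}]
  \le \frac{3}{k^3},
$$
and then take complements once more to obtain $\Pr[\mathbf{E_1} \wedge \mathbf{E_2} \wedge \mathbf{E_3}] \ge 1 - 3/k^3 = 1 - O(k^{-3})$, as claimed.

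The only point worth noting is that $\mathbf{E_1}$, $\mathbf{E_2}$, $\mathbf{E_3}$ are all defined on the same probability space --- the randomness of the radii $\{R_j^i\}$ drawn inside Algorithm~\ref{alg:part} --- and are in no way independent; but the union bound needs no independence, so this is harmless. In that sense there is no real obstacle here: all the difficulty was absorbed into proving the three constituent lemmas, and the corollary merely records that the three good events hold simultaneously with probability $1 - O(k^{-3})$. This is precisely the form needed in the remainder of the section, where the weights charged to the tree ${\cal T}$ (via Claim~\ref{c:shortWeight}, Corollary~\ref{c:height}, and the level-by-level sum in Equation~\eqref{eq:treeWeight2}) are bounded conditioned on $\mathbf{E_1} \wedge \mathbf{E_2} \wedge \mathbf{E_3}$, so that removing the conditioning costs only an additive $O(k^{-3})$ in the final failure probability of Lemma~\ref{l:totalWeight}.
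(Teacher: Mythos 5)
Your proof is correct and matches the paper's (implicit) argument exactly: the paper deduces the corollary by combining $\Pr[\mathbf{E_1}], \Pr[\mathbf{E_2}], \Pr[\mathbf{E_3}] \ge 1 - 1/k^3$ via a union bound over the complements, just as you do. Your side remark that no independence is needed and that the conditioning on $\mathbf{E_1} \wedge \mathbf{E_2} \wedge \mathbf{E_3}$ is then removed at an additive cost of $O(k^{-3})$ is also consistent with how the paper uses the corollary in the proof of Lemma~\ref{l:totalWeight}.
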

It follows that it is enough for us to prove that conditioned on $\mathbf{E_1}$, $\mathbf{E_2}$ and $\mathbf{E_3}$, with probability $1$ the total weight charged to the tree is at most $O(\log^5k)\ell$.
\begin{lemma} \label{l:long}
Conditioned on $\mathbf{E_2}$ and $\mathbf{E_3}$, $l_i \le O( b^{i} k \log k )$. In addition, if $i \ge \log_b(\ell/k^2)$, 
then $l_i \le O(\log ^2 k) \cdot \ell$ with probability $1$.
\end{lemma}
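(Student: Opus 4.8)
The plan is, for each level $i$ in the range where $l_i$ is defined (that is, $i\le\log_b(\ell/p)$), to count the detours that get charged to \emph{long} nodes $\langle i,A\rangle$ and multiply that count by the largest possible weight of a detour created during the $i$th outer‑loop iteration.

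\emph{The crude bound.} Since $l_i$ is defined only for $i\le\log_b(\ell/p)$, the estimate behind Corollary~\ref{c:height} gives $\log_b(\ell/p)\le\iter$, so every ball used during the $i$th outer‑loop iteration is controlled by $\mathbf{E_2}$. In that iteration each of the $k$ inner‑loop iterations updates $P$ at most once (line~\ref{al:endOuter}), hence creates at most one detour; a detour through $t_j$ is a concatenation of a shortest $u$–$t_j$ path and a shortest $t_j$–$v$ path lying inside $B_{G[V_{\bot+j}]}(t_j,r_j)$, so its weight is at most $2r_j$, which on $\mathbf{E_2}$ is at most $2\cdot 280\,b^{i+1}\log^2 k=560\,b^{i+1}\log^2 k$. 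Every detour created in this iteration is charged to a level‑$i$ node, so the total weight charged at level $i$, and in particular $l_i$, is at most $k\cdot 560\,b^{i+1}\log^2 k$. (This part uses only $\mathbf{E_2}$.)

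\emph{The refined bound for $i\ge\log_b(\ell/k^2)$.} Fix a long $A\in\mathcal{A}_i$, so $w(A)\ge pb^i$. By the invariants maintained for $P$, $A$ is a subpath of $P^*$, and since the active subpaths in $\mathcal{A}_i$ are pairwise internally disjoint, the long ones have total weight at most $w(P^*)=\ell$. Subdivide $A$ into its $\lceil w(A)/(pb^i)\rceil\le 2w(A)/(pb^i)$ segments of length $pb^i$, exactly as in the definition of $\mathbf{E_3}$. Every detour charged to $\langle i,A\rangle$ is created during outer‑loop iteration $i$ by a ball $B=B_{G[V_{\bot+j}]}(t_j,r_j)$ whose first unassigned vertex $u$ of $V(P)\cap B$ lies in $A$; then $B$ reaches the segment of $A$ containing $u$, and distinct detours charged to $\langle i,A\rangle$ arise from distinct indices $j$, hence from distinct balls. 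So on $\mathbf{E_3}$ at most $70\log k$ detours are charged to $\langle i,A\rangle$ with first endpoint in a given segment, for a total of at most $140\log k\cdot w(A)/(pb^i)$ detours; multiplying by the per‑detour weight bound $560\,b^{i+1}\log^2 k$ from $\mathbf{E_2}$ and using $p=1/100$ and $b\le 2$ yields $w_{i,A}\le O\!\left(w(A)\log^3 k\right)$. Summing over the long $A\in\mathcal{A}_i$ gives $l_i\le O(\log^3 k)\sum_A w(A)\le O(\log^3 k)\cdot\ell$, and this holds with certainty once we condition on $\mathbf{E_2}\wedge\mathbf{E_3}$.

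The substantive content has essentially been front‑loaded into $\mathbf{E_2}$ and $\mathbf{E_3}$, so the remaining work is purely combinatorial bookkeeping; the main thing to be careful about is the counting in the refined bound — that each inner‑loop iteration contributes at most one charge at level $i$ (so charges are indexed by $j$ and come from pairwise distinct balls), that a ball containing a vertex $u\in A$ ``reaches'' the segment of $A$ through $u$ in the precise sense quantified by $\mathbf{E_3}$, and that the long active subpaths in $\mathcal{A}_i$ are internally disjoint subpaths of $P^*$ so their lengths sum to at most $\ell$. All three facts are already established by the invariants for $P$ and $\mathcal{T}$, and with them the lemma is immediate.
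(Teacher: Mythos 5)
Your proof is correct and takes essentially the same approach as the paper: the crude bound comes from $k$ inner-loop iterations each creating at most one detour of weight at most $560\,b^{i+1}\log^2 k$ under $\mathbf{E_2}$, and the refined bound comes from dividing each long $A$ into segments of length $pb^i$, using $\mathbf{E_3}$ to cap the detours per segment, and summing over the internally disjoint long subpaths of $P^*$. Your write-up is a bit more explicit than the paper's (e.g., the ceiling on the segment count, the observation that distinct level-$i$ detours come from distinct indices $j$, and the remark that the first bound needs only $\mathbf{E_2}$), but the argument is the same.
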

\begin{proof}
To see the first bound, observe that by the update process of $P$, at most $k$ detours are added to $P$ during the $i$-th iteration. Conditioned on $\mathbf{E_2}$, each one of them is of weight at most $O(b^i \log k)$.
To see the second bound, let $A \in {\cal A}_i$ be a long active subpath.
The additional weight resulting from detours from vertices of $A$ is at most the number of segments of $A$ of length $pb^i$, times the additional weight to each segment. Therefore, the additional weight is at most 
\begin{align*}
\ifprocs
w_{i,A} \le w(A) / p b^i \cdot O( \log k )\cdot O( b^{i} \log k) =  O( \log^2 k ) \cdot w(A)\;.
\else
w_{i,A} \le w(A) / p b^i \cdot O( \log k )\cdot O( b^{i} \log k) =  O( \log^2 k ) \cdot w(A)\;.
\fi
\end{align*}
Since all paths in ${\cal A}_i$ are internally disjoint subpaths of $P^*$, we get:
\begin{align*}
\ifprocs
l_i = \sum_{\longSub \; A \in {\cal A}_i}{w_{i,A}} \le \sum_{\longSub \; A \in {\cal A}_i}{O(\log^2k) \cdot w(A)} \le O(\log^2k) \cdot \ell \;. \quad \quad
\else
l_i = \sum_{\longSub \; A \in {\cal A}_i}{w_{i,A}} \le \sum_{\longSub \; A \in {\cal A}_i}{O(\log^2k) \cdot w(A)} \le O(\log^2k) \cdot \ell \;. \quad \quad
\fi
\qedhere
\end{align*}
\end{proof}

\begin{lemma}\label{l:short}
Conditioned on events $\mathbf{E_1}$, $\mathbf{E_2}$ and $\mathbf{E_3}$, $s_{i+1} \le O(b^{i+1} k\log^2 k)$.  In addition, if $i \ge \log_b(\ell/k^2)$, then $s_{i+1} \le O(\log ^3k)\cdot \ell$.
\end{lemma}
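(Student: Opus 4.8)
The plan is to combine the single-subtree weight estimate of Claim~\ref{c:shortWeight} with a count of how many short subtrees are rooted at level $i+1$ of ${\cal T}$. Recall that any short subtree rooted at level $i+1$ has the form ${\cal T}_{i+1,A'}$ for some $A'\in{\cal A}_{i+1}$, and that, by the very definition of a short subtree, its parent $\langle i,A\rangle$ in ${\cal T}$ consists of a \emph{long} active subpath $A\in{\cal A}_i$; moreover, conditioned on $\mathbf{E_1}$ and $\mathbf{E_2}$, Claim~\ref{c:shortWeight} (applied with $i_0=i+1$) charges each such subtree at most $39200e\,b^{i+2}\log^3 k$ in total. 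For the first bound this already suffices: there are at most $|{\cal A}_{i+1}|\le k$ short subtrees rooted at level $i+1$, so $s_{i+1}\le 39200e\,b^{i+2}k\log^3 k$.

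For the refined bound in the regime $i\ge\log_b(\ell/k^2)$, I would count short subtrees according to their parent. Fix a long $A\in{\cal A}_i$: the short subtrees rooted at level $i+1$ whose parent is $\langle i,A\rangle$ are in bijection with the children of $\langle i,A\rangle$ in ${\cal T}$, i.e., with the active subpaths of $A$ present at the start of iteration $i+1$. Subdivide $A$ (as a continuous path) into $\lceil w(A)/(pb^i)\rceil$ segments of length $pb^i$. Every active subpath of $A$ has all inner vertices unassigned, hence is a subpath of $P^*$ lying inside $A$, so any ball that punctures an active subpath of $A$ during iteration $i$ reaches one of these segments; assigning each such ball to one segment it reaches shows that the number of balls puncturing active subpaths of $A$ in iteration $i$ is at most $\lceil w(A)/(pb^i)\rceil$ times the maximal number of balls reaching a single segment. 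Conditioned on $\mathbf{E_3}$ — which applies precisely because $i\ge\log_b(\ell/k^2)$ — that maximal number is at most $70\log k$. By Proposition~\ref{p:maxBirth} each ball raises the number of active subpaths of $A$ by at most one, and this number equals $1$ at the start of iteration $i$, so $\langle i,A\rangle$ has at most $1+\lceil w(A)/(pb^i)\rceil\cdot 70\log k = O(\log k)\cdot w(A)/(pb^i)$ children, where I used $w(A)\ge pb^i$ since $A$ is long.

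Multiplying the number of short subtrees below $A$ by the per-subtree weight $39200e\,b^{i+2}\log^3 k$ yields total charge $O(\log^4 k)\cdot w(A)$ for the short subtrees with parent $\langle i,A\rangle$, absorbing the constants $1/p$ and $b^{i+2}/b^i=b^2=O(1)$. Since the long subpaths in ${\cal A}_i$ are internally disjoint subpaths of $P^*$, we have $\sum_{\text{long }A\in{\cal A}_i}w(A)\le w(P^*)=\ell$, and summing over them gives $s_{i+1}\le O(\log^4 k)\cdot\ell$. The step deserving care is the passage from ``a ball punctures an active subpath of $A$'' to ``a ball reaches a segment of $A$'': it rests on the structural fact that every active subpath is a subpath of $P^*$, so that the segmentation of $A$ really does cover all of its active subpaths, and on checking that $\mathbf{E_3}$ is stated for exactly the range $i\ge\log_b(\ell/k^2)$ invoked here; the rest is routine bookkeeping with the constants from Claim~\ref{c:shortWeight} and Proposition~\ref{p:maxBirth}.
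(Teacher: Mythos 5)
Your proposal is correct and follows essentially the same route as the paper: the first bound comes from applying Claim~\ref{c:shortWeight} per subtree and the trivial count of at most $k$ short subtrees rooted at level $i+1$, and the second bound comes from identifying the parent as a long subpath $A\in{\cal A}_i$, using $\mathbf{E_3}$ to bound the number of children of $\langle i,A\rangle$ by $O(\log k)\cdot w(A)/(pb^i)$, multiplying by the per-subtree weight from Claim~\ref{c:shortWeight}, and summing over the internally disjoint long subpaths using $\sum w(A)\le\ell$. You fill in a couple of details the paper glosses over (the ``$+1$'' via Proposition~\ref{p:maxBirth}, and the fact that active subpaths of $A$ are subpaths of $P^*$, so the segmentation of $A$ covers them), and you correct a small typo ($b^{i+2}$ vs.\ the paper's $b^{i+1}$ in the per-subtree cost), but the argument is the same.
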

\begin{proof}
Conditioned on $\mathbf{E_1}$ and $\mathbf{E_2}$, we proved in Claim~\ref{c:shortWeight} that the total weight charged to a short subtree rooted in level $i+1$ is at most $O(b^{i+1} \log^2 k)$ with certainty. Since there are at most $k$ such subtrees, the first bound follows.
To get the second bound, note that by the definition of a short subtree, for every short subtree ${\cal T}'$ rooted at level $i+1$, the parent of the root of ${\cal T}'$ consists of a long active subpath $A$ of level $i$. Conditioned on $\mathbf{E_3}$, every segment of $A$ is intersected by at most $90 \log k$ balls. Therefore, $\langle i,A \rangle$ can have at most $(w(A) / p b^i) \cdot 90 \log k$ children, and in particular, children consisting of short active subpaths. The cost of a short subtree rooted in the $i+1$ level of ${\cal T}$ is at most $O(b^{i}\log^2 k)$. Thus the total cost of all short subtrees rooted in children of $A$ is bounded by 
$$(w(A) / p b^i) \cdot 90 \log k \cdot O(b^{i} \log^2 k) \le O(\log^3 k)  \cdot w(A) \;.$$
Summing over all (internally disjoint) long subpaths of level $i$, the result follows.
\end{proof}
We now turn to prove Lemma~\ref{l:totalWeight}.
\begin{proof}[\proofof{Lemma~\ref{l:totalWeight}}]
Since $\Pr[\mathbf{E_1} \wedge \mathbf{E_2} \wedge \mathbf{E_3}] \ge 1-O(k^{-5})$, it is enough to show that conditioned on {\bf E1}, {\bf E2} and {\bf E3}, the total weight charged to the tree is at most $O(\log ^5 k) \ell$ with certainty.
Recall that $\sum_{\langle i,A \rangle \in {\cal T}}{w_{i,A}} = \sum_{i=1}^{\log_b (\ell/p)}{(l_i+s_{i+1})}$. Following Lemmas~\ref{l:long} and \ref{l:short} we get that
\ifprocs
\begin{equation*}
\sum_{i=1}^{\log_b (\ell/k^2)}{(l_i+s_{i+1})} \le o(1) \cdot \ell \;.
\end{equation*}
\else
\begin{equation*}
\begin{split}
\sum_{i=1}^{\log_b (\ell/k^2)}{(l_i+s_{i+1})} &\le \sum_{i=1}^{\log_b (\ell/k^2)}{O( b^{i} k \log k + b^{i+1} k\log^2 k)} \\
&\le O(k \log^2 k)\sum_{i=1}^{\log_b (\ell/k^2)}{b^i} = O(k \log^2 k) \frac{b}{b-1} \cdot \frac{\ell}{k^2} = o(1) \cdot \ell \;.
\end{split}
\end{equation*}
\fi
In addition,
\ifprocs
\begin{equation*}
\sum_{i=\log_b (\ell/k^2) + 1}^{\log_b (\ell/p)}{(l_i+s_{i+1})} \le O(\log^5 k) \cdot \ell \;.
\end{equation*}
\else
\begin{align*}
\sum_{i=\log_b (\ell/k^2) + 1}^{\log_b (\ell/p)}{(l_i+s_{i+1})} &\le \sum_{i=\log_b (\ell/k^2) + 1}^{\log_b (\ell/p)}{O(\log^2k) \ell + O(\log^3k) \ell} \\
&\le O(\log^3 k) \ell \cdot (\log_b(\ell/p) - \log_b(\ell/k^2)) \\
&\le O(\log^3 k) \ell \cdot O(\log^2k)  = O(\log^5 k) \cdot \ell \;.
\qedhere
\end{align*}
\fi
\end{proof}

\section{Terminal-Centered Minors: Extension to General Case}\label{sec:proofDiscard}

In this section we complete the proof of Theorem~\ref{thm:main} by reducing it to the special case where Assumption~\ref{a:polyLogDelta} holds (which we proved in Section~\ref{sec:TCM}).
We first outline the reduction, which is implemented using a recursive algorithm,
as follows. 
The algorithm initially rescales edge weights of the graph so that minimal terminal distance is $1$. If $\calD < 2^{k^3}$ then we apply Algorithm~\ref{alg:part} and we are done. Otherwise, we construct a set of at most $k-1$ low-diameter balls which are mutually far apart, and whose union contains all terminals. Then, for each of the balls, we apply Algorithm~\ref{alg:part} on the graph induced by that ball. 
Each ball is then contracted into a ``super-terminal". We apply the algorithm recursively on the resulting graph $\tilde{G}$ with the set of super-terminals as the terminal set. Going back from the recursion, we ``stitch" together the output of Algorithm~\ref{alg:part} on the balls in the original graph with the output of the recursive call on $\tilde{G}$, to construct a partition of $V$ as required.
The detailed algorithm and proof of correctness  
\ifprocs
appear in the full version \cite{KKN13}.
\else
are described in Section~\ref{sec:general-alg}. Before that, we need a few definitions.

Assume that the edge weights are already so that the minimum inter-terminal distance is 1. Denote by $D$ the set of all distances between terminals, rounded down to the nearest powers of $2$. Note that $|D| < k^2$. Consider the case $\calD > 2^{k^3}$. There must exist $0 \le m_0 \le k^3-k$ such that $D \cap \{2^{m_0},2^{m_0+1},\ldots,2^{m_0+k}\} = \emptyset$.
Define $R \eqdef \{(x,y) \in T^2 : \ d_G(x,y) < 2^{m_0} \}$. 
\begin{claim} \label{c:equivR}
$R$ is an equivalence relation.
\end{claim}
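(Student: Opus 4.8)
The plan is to verify the three defining properties of an equivalence relation on $T$. Reflexivity is immediate: since $m_0\ge 0$ we have $2^{m_0}\ge 1>0=d_G(x,x)$, so $(x,x)\in R$ for every $x\in T$. Symmetry is equally immediate from the symmetry of the shortest-path metric $d_G$. Thus the whole content of the claim is transitivity, and this is the step where the defining property of $m_0$ — that $D$ has a gap at the scale $2^{m_0}$ — must be used; the triangle inequality alone only gives $2^{m_0+1}$, which is off by a factor of $2$.

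For transitivity, suppose $(x,y)\in R$ and $(y,z)\in R$; the goal is $(x,z)\in R$. If $x=z$ we are done, so assume $x\ne z$, in which case $d_G(x,z)\ge 1$ because the edge weights have been normalized so that the minimum terminal distance is $1$. By the triangle inequality,
$$
d_G(x,z)\le d_G(x,y)+d_G(y,z)<2^{m_0}+2^{m_0}=2^{m_0+1}.
$$
Let $2^q$ denote $d_G(x,z)$ rounded down to the nearest power of $2$, i.e.\ $q=\lfloor\log_2 d_G(x,z)\rfloor$, which is a nonnegative integer since $d_G(x,z)\ge 1$; by definition of $D$ we have $2^q\in D$. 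From $d_G(x,z)<2^{m_0+1}$ we get $q\le m_0$. But $D\cap\aset{2^{m_0},2^{m_0+1},\ldots,2^{m_0+k}}=\emptyset$, so in particular $2^{m_0}\notin D$, which forces $q\le m_0-1$. Hence $d_G(x,z)<2^{q+1}\le 2^{m_0}$, so $(x,z)\in R$, as desired.

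I do not expect a real obstacle here; the only points requiring a little care are that rounding down to a power of $2$ is well-defined for the distances in question (which holds because all nonzero terminal distances are $\ge 1$ after normalization), and that the argument invokes only the fact that the single scale $2^{m_0}$ is missing from $D$ — the full width-$(k+1)$ gap around $m_0$ is not needed for this claim (it will instead be used later to guarantee that the resulting balls are well separated).
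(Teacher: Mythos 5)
Your proof is correct and follows essentially the same route as the paper: reflexivity and symmetry from the metric axioms, and transitivity by combining the triangle inequality bound $d_G(x,z)<2^{m_0+1}$ with the gap $2^{m_0}\notin D$. Your version is simply a more explicit unpacking of the paper's terse last step (introducing $q=\lfloor\log_2 d_G(x,z)\rfloor$ and arguing $q\le m_0-1$), and your closing observation — that only the absence of the single scale $2^{m_0}$ from $D$ is used here, not the full width-$(k+1)$ gap — is accurate.
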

\begin{proof}
Reflexivity and symmetry of $R$ follow directly from the definition of a metric. To see that $R$ is transitive, let $x,y,z \in T$, and assume $(x,y),(y,z) \in R$. Therefore $d_G(x,y) < 2^{m_0}$ and $d_G(y,z)<2^{m_0}$. By the triangle inequality, $d_G(x,z)<2^{m_0+1}$. Since $D \cap \{2^{m_0},\ldots,2^{m_0+k}\} = \emptyset$, $d_G(x,z)<2^{m_0}$, and therefore $(x,z)\in R$.
\end{proof}
For every equivalence class $U \in T\raisebox{-.1em}{/}\raisebox{-.4em}{R}$, we pick an arbitrary $u \in U$, and define $\hat{U} = B_G(u,2^{m_0})$.
\begin{claim}\label{c:sepBalls}
${\cal U} \eqdef \{\hat{U}\}_{U \in T/_R}$ is a partial partition of $V$. Moreover, for every $U \in T\raisebox{-.1em}{/}\raisebox{-.4em}{R}$, $U \subseteq \hat{U}$, $G[\hat{U}]$ is connected and of diameter at most $2^{m_0+1}<2^{k^3}$.
\end{claim}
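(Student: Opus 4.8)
The plan is to dispatch the four assertions in increasing order of difficulty; essentially only the pairwise disjointness of the balls $\hat U$ uses anything beyond the triangle inequality, namely the gap $D \cap \{2^{m_0},\ldots,2^{m_0+k}\} = \emptyset$ produced above. First, the containment $U \subseteq \hat U$ is immediate: if $u \in U$ is the chosen representative and $x \in U$, then $x$ and $u$ lie in the same $R$-class, i.e.\ $d_G(u,x) < 2^{m_0}$, so $x \in B_G(u,2^{m_0}) = \hat U$; in particular each $\hat U$ is a nonempty subset of $V$.

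Next I would handle connectivity of $G[\hat U]$ and the diameter bound together, using the standard fact that metric balls in $G$ are geodesically closed: for any $x \in B_G(u,r)$, every vertex on a shortest $u$–$x$ path of $G$ is within distance $d_G(u,x)\le r$ of $u$, so that whole path lies inside $B_G(u,r)$. Hence every vertex of $\hat U$ is joined to $u$ by a path inside $\hat U$, so $G[\hat U]$ is connected; and for $x,y\in\hat U$, concatenating a shortest $x$–$u$ path with a shortest $u$–$y$ path gives a walk inside $G[\hat U]$ of length at most $2\cdot 2^{m_0}=2^{m_0+1}$, so $\diam(G[\hat U]) \le 2^{m_0+1}$. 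Finally $2^{m_0+1}\le 2^{k^3-k+1}<2^{k^3}$ since $k\ge 2$.

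The one step with real content is disjointness, which I would prove by contradiction. Suppose $U_1\ne U_2$ are distinct $R$-classes with representatives $u_1,u_2$, and $z\in\hat U_1\cap\hat U_2$. Then $d_G(u_1,u_2)\le d_G(u_1,z)+d_G(z,u_2)\le 2\cdot 2^{m_0}=2^{m_0+1}$. Since $u_1,u_2\in T$, the power of $2$ obtained by rounding $d_G(u_1,u_2)$ down, say $2^m$, lies in $D$, and $2^m\le d_G(u_1,u_2)<2^{m+1}$ forces $m\le m_0+1$. The gap $D\cap\{2^{m_0},2^{m_0+1},\ldots\}=\emptyset$ then excludes $m\in\{m_0,m_0+1\}$, so $m\le m_0-1$ and therefore $d_G(u_1,u_2)<2^{m+1}\le 2^{m_0}$, i.e.\ $(u_1,u_2)\in R$ — contradicting $U_1\ne U_2$. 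Hence the $\hat U$ are pairwise disjoint subsets of $V$, i.e.\ ${\cal U}$ is a partial partition.

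I do not expect a genuine obstacle. The two points to watch are: (a) ``diameter'' must be read inside the induced subgraph $G[\hat U]$, which is why one needs geodesic-closedness of balls and not merely the triangle inequality; and (b) the disjointness argument must rule out $m=m_0$ \emph{and} $m=m_0+1$, so it relies on the gap in $D$ having width at least two — the same structural feature already exploited in Claim~\ref{c:equivR}.
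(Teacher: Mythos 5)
Your proof is correct and takes essentially the same approach as the paper: both arguments rest on the gap $D\cap\{2^{m_0},\ldots,2^{m_0+k}\}=\emptyset$ together with the triangle inequality, the paper reasoning forward from $(u,u')\notin R$ to conclude $d_G(u,u')\ge 2^{m_0+k+1}$, while you reason by contradiction from an assumed intersection point to conclude $(u_1,u_2)\in R$. Your explicit appeal to geodesic closedness of balls for connectivity and diameter is just an unpacking of what the paper dismisses as following ``by the definition of a ball.''
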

\begin{proof}
Let $U \in T\raisebox{-.1em}{/}\raisebox{-.4em}{R}$. Let $u \in U$ be such that $\hat{U}=B_G(u,2^{m_0})$. For every $x \in U$, by the definition of $R$, $d(x,u) < 2^{m_0}$, and thus $x \in \hat{U}$. Therefore $U \subseteq \hat{U}$. By the definition of a ball, $G[\hat{U}]$ is connected and of diameter at most $2^{m_0+1}<2^{k^3}$.
To see that ${\cal U}$ is a partial partition of $V$, take $U' \in T\raisebox{-.1em}{/}\raisebox{-.4em}{R}$ such that $U \ne U'$, and let $u' \in U'$ be such that $\hat{U'}=B_G(u',2^{m_0})$. Since $(u,u') \notin R$, $d_G(u,u') \ge 2^{m_0}$, and since 
$D \cap \{2^{m_0},\ldots,2^{m_0+k}\} = \emptyset$, $d_G(u,u') \ge 2^{m_0+k+1}$, thus $\hat{U} \cap \hat{U'} = \emptyset$.
\end{proof}

\subsection{Detailed Algorithm}\label{sec:general-alg}

\begin{algorithm}[t]
\caption{Partitioning $V$ - The General Case}
\label{alg:gen}
\begin{algorithmic}[1]
\REQUIRE $G = (V,E,w),\; T =\{t_1,\ldots,t_k\}\subseteq V$
\ENSURE A partition $\{V_1,V_2,\ldots,V_k\}$ of $V$.
\STATE rescale the edge weights so that the minimal terminal distance is $1$.
\IF {$\calD \eqdef \max_{u,v \in T}d_G(u,v) \le 2^{k^3}$}
\STATE run Algorithm~\ref{alg:part}, and return its output.
\ELSE
\STATE define $R$ and ${\cal U}$ as above.
\FORALL{$\hat{U} \in {\cal U}$}
\STATE run Algorithm~\ref{alg:part} independently on $G[\hat{U}]$.
\STATE contract $\hat{U}$ to a single ``super-terminal'', maintaining edge weights of all remaining edges. \label{al:super}
\ENDFOR
\STATE denote the resulting graph $\tilde{G}$. 
\STATE run Algorithm~\ref{alg:gen} recursively on $\tilde{G}$ with the set of super-terminals.
\FORALL {super-terminals $u \in \tilde{G}$}
\STATE let $u_1,\ldots,u_r$ be the terminals contracted to $u$ in line~\ref{al:super} in an arbitrary order.
\FORALL {vertices $v$ assigned to $u$ in the recursive call}
\STATE assign $v$ to its nearest terminal among $u_1,\ldots,u_r$. \\
Break ties by the ordering of $u_1,\ldots,u_r$. \COMMENT{Making sure we construct a minor.}
\ENDFOR
\ENDFOR
\ENDIF
\RETURN the resulting partition of $V$.
\end{algorithmic}
\end{algorithm}

Our algorithm for the general case of Theorem~\ref{thm:main} is given 
as Algorithm~\ref{alg:gen} (which makes calls to Algorithm~\ref{alg:part}).
It is clear that this algorithm returns a partition of $V$. 
In addition, since every level of recursion decreases the number of terminals in the graph, the depth of the recursion is at most $k$. During each level of the recursion, Algorithm~\ref{alg:part} is invoked at most $k$ times. Therefore, Algorithm~\ref{alg:part} is invoked at most $k^2$ times, each time on a set of at most $k$ terminals. Note that the result of Lemma~\ref{l:totalWeight} still applies if $k$ is only an upper bound on the number of terminals, and not the exact number of terminals. Therefore, we get that there exists $C_0>0$, such that all $O(k^2)$ times that the algorithm is invoked, it achieves a weight stretch factor of at most $C_0\log^5k$, with probability at least $1 - O(k^{-3})$. Applying a union bound, we get that with high probability, the stretch bound is obtained in all invocations of the algorithm. It remains to show that this suffices to achieve the desired stretch factor in $G$. 
\begin{lemma} \label{l:discard}
With probability at least $1 - 1/k$, on a graph with $\tau \le k$ terminals, 
Algorithm \ref{alg:gen} obtains a stretch factor of at most $C_0 \log^5 k + \log^5 k \cdot 2^{-k} \sum_{k' \le \tau}{2(k')^2} \le 2C_0 \log^5k$.
\end{lemma}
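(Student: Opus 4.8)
The plan is to prove Lemma~\ref{l:discard} by induction on the recursion depth, i.e., on the number of terminals $\tau$ in the current graph, tracking how the stretch factor accumulates across recursion levels. Let $\alpha(\tau)$ denote the worst-case stretch factor guaranteed (with the stated probability) when Algorithm~\ref{alg:gen} runs on a graph with $\tau$ terminals. When $\calD \le 2^{k^3}$ there is no recursion and by Remark~\ref{r:amp} (with $d=5$) we get $\alpha(\tau) \le C_0\log^6 k$; this is the base case. Otherwise the algorithm splits the terminals into the balls $\{\hat U\}_{U \in T/_R}$, runs Algorithm~\ref{alg:part} inside each ball, contracts each ball to a super-terminal, and recurses on $\tilde G$.

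The key step is to bound, for a fixed terminal pair $s,t$, the length of the $s$--$t$ path that the reconstruction (``stitching'') produces. Take a shortest $s$--$t$ path $P^*$ in $G$ and decompose it into the maximal subpaths lying inside a single ball $\hat U$ and the complementary portions that live in $\tilde G$ (between super-terminals). The portions inside the balls get stretched by at most the Algorithm~\ref{alg:part} factor $C_0\log^6 k$; the inter-ball portions get stretched by $\alpha(\tau')$ for some $\tau' < \tau$, since each ball contracts at least two terminals (each equivalence class of $R$ has a representative, and $|T/_R| \le k-1$ because $\calD > 2^{k^3}$ forces at least one nontrivial class). The one subtlety is the ``boundary'' cost: when we replace a maximal in-ball subpath of $P^*$ by the Algorithm~\ref{alg:part} path between its two endpoints inside $\hat U$, those endpoints may be non-terminals, so we must route them to the super-terminal and account for the diameter of $\hat U$, which by Claim~\ref{c:sepBalls} is at most $2^{m_0+1}$. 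Since the balls are $2^{m_0+k+1}$-separated (Claim~\ref{c:sepBalls}), each such diameter term is at most $2^{-k}$ times the distance between the corresponding super-terminals, and crucially at most $k$ such terms appear along $P^*$ (one per ball visited, and there are at most $k-1$ balls). Summing a geometric-type bound over the $\le k$ recursion levels, the extra terms contribute at most $\log^6 k \cdot 2^{-k}\sum_{k' \le \tau} 2(k')^2$, which for $\tau \le k$ is $o(1)$, giving $\alpha(\tau) \le C_0\log^6 k + \log^6 k \cdot 2^{-k}\sum_{k' \le \tau} 2(k')^2 \le 2C_0\log^6 k$.

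Concretely, the induction hypothesis is the displayed inequality for all $\tau' < \tau$; the inductive step combines (i) the in-ball stretch $C_0\log^6 k$ from Algorithm~\ref{alg:part}, (ii) the recursive stretch $\alpha(\tau')$ applied to the contracted path in $\tilde G$, and (iii) the additive boundary terms bounded via ball diameter over ball separation, i.e.\ $2^{m_0+1}/2^{m_0+k+1} = 2^{-k}$, multiplied by the at most $2\tau^2$ relevant endpoint pairs at this level (crudely, two endpoints per ball, $\le k$ balls, and a factor for both $s$-side and $t$-side). The probability bound comes from a union bound: Algorithm~\ref{alg:part} is invoked at most $k^2$ times over the whole recursion (depth $\le k$, at most $k$ invocations per level), each failing with probability $O(k^{-3})$, so the total failure probability is $O(k^{-1})$.

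The main obstacle I expect is (iii): carefully setting up the charging so that the additive diameter terms telescope correctly through the recursion without blowing up. One must verify that a maximal in-ball subpath of the shortest path $P^*$ really does have both endpoints at distance $\le 2^{m_0+1}$ from the chosen representative $u$ of its class (so the detour to the super-terminal is cheap), and that distinct balls visited along $P^*$ are genuinely far apart so the ratio $2^{-k}$ is legitimate and the sum over recursion levels converges. Handling the bookkeeping of ``which endpoint pairs at which level'' so that the running total is exactly $2^{-k}\sum_{k'\le\tau} 2(k')^2$ is the delicate part; everything else is a routine combination of the already-established stretch bound for Algorithm~\ref{alg:part} and the separation/diameter estimates in Claims~\ref{c:equivR} and~\ref{c:sepBalls}.
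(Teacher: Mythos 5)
Your proposal shares the paper's skeleton (induction on the number of terminals, the union bound over the $\le k^2$ invocations of Algorithm~\ref{alg:part} via Remark~\ref{r:amp}, and the ``ball diameter over ball separation'' estimate producing the $2^{-k}$ factor), but the mechanism you use to produce a short $s$--$t$ path in the final minor $G''$ has a genuine gap. You work \emph{forward} from the original shortest path $P^*$: you decompose $P^*$ into maximal in-ball and inter-ball subpaths and then assert that ``the inter-ball portions get stretched by $\alpha(\tau')$.'' The induction hypothesis, however, says nothing about pieces of any particular path in $\tilde G$; it only guarantees $d_{G',w'}(\tilde s,\tilde t)\le\alpha(\tau')\,d_{\tilde G,\tilde w}(\tilde s,\tilde t)$, and the recursively constructed minor $G'$ is free to route from $\tilde s$ to $\tilde t$ along a path bearing no relation to the contraction of $P^*$. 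Consequently there is no well-defined recipe for turning your decomposition of $P^*$ into an actual walk in $G''$, which is what needs to be bounded. Relatedly, the count ``at most $k$ diameter terms along $P^*$, one per ball visited'' is not justified: a shortest path in $G$ can enter and leave the same ball $\hat U$ several times, giving multiple maximal in-ball subpaths of $P^*$ inside one ball and hence more boundary terms than you account for. (Also, ``each ball contracts at least two terminals'' is false; the argument only yields that \emph{some} ball contains $\ge 2$ terminals, which already gives $|T/_R|\le k-1$ and is what you actually need.)

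The paper's proof runs in the opposite direction. It takes a shortest $\tilde s$--$\tilde t$ path $P'$ in the recursively constructed minor $G'$ on $\tilde G$, and expands $P'$ into a path $P''$ in $G''$ by replacing each super-terminal with a terminal-to-terminal route inside the corresponding ball and replacing each edge of $P'$ by a corresponding edge of $G''$. Since $G'$ has at most $k-1$ vertices, $P'$ has at most $k-1$ edges, so there are at most $O(k)$ expansions and edge adjustments to pay for, each costing at most $O(k)\cdot 2^{m_0}$, with no re-visit issue. The single observation $d_{\tilde G,\tilde w}(\tilde s,\tilde t)\le d_{G,w}(s,t)$ (contraction never increases distances) is what lets the induction hypothesis be applied to $\tilde d$ directly, with no need to decompose $P^*$ at all. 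This yields $w''(P'')\le w'(P')+2k^2\,2^{m_0}\le d'+2k^2\,2^{-k}\,d$, and plugging in the inductive bound for $d'$ gives exactly the running total $\sum_{k'\le\tau}2(k')^2$. If you switch to this ``expand $P'$'' direction, your remaining ingredients (union bound, $2^{-k}$ separation estimate, $\le k$ balls) assemble correctly.
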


\begin{proof}
It is enough to show that conditioned on the event that every invocation of Algorithm~\ref{alg:part} achieves a stretch factor of at most $C_0 \log^5k$, the generalized algorithm achieves the desired stretch factor.
We prove this by induction on $k$. For the case $k=2$, rescaling the weights assures that $\calD=1 \le 2^{k^3}$, and therefore Algorithm~\ref{alg:part} is applied on $G$. The result follows from the proof of Section~\ref{sec:TCM}. Assuming correctness for every $\tilde{k} < k$, we prove correctness for $k$. 
Let $s,t \in T$. If $(s,t) \in R$, then $s$ and $t$ are in the same set in ${\cal U}$, and by the conditioning, the stretch factor of the distance between $s$ and $t$ is at most $C_0 \log^5k$. This does not change in steps $4-7$ of the algorithm.
Otherwise, Let $\tilde{s},\tilde{t}$ be the terminals (or super-terminals) associated with $s$ and $t$ in $\tilde{G}$ respectively.
Denote $d=d_{G,w}(s,t)$ and $\tilde{d} = d_{\tilde{G},\tilde{w}}(\tilde{s},\tilde{t})$. Denote by $G'=(V',E',w')$ the terminal-centered minor induced by the partition returned by the recursive call, and by $G''=(V'',E'',w'')$ the terminal-centered minor induced by the partition returned in the final step of the algorithm. 
Denote $d'=d_{G',w'}(\tilde{s},\tilde{t})$. and $d''=d_{G'',w''}(s,t)$.
Let $u$ be a super terminal on a shortest path $P'$ between $\tilde{s}$ and $\tilde{t}$ in $G'$.
Let $P''$ be the path obtained from $P'$ in $G''$ in the following manner. In the place of every super-terminal $u$ in $P'$, originating in some node set $\hat{U}$, we add a path between the corresponding terminals in $\hat{U}$ (based on the terminal-centered minor constructed for $G[\hat{U}]$ in step $3$). The edges of $P'$ are also replaced with corresponding edges in $G''$.

Recall that in $G'$, the weight of every edge is the distance between its endpoints in $\tilde{G}$ (by the definition of a terminal-centered minor). In $G''$ the weight of every edge is the distance between its endpoints in $G$. Therefore the weight 
$P'$ contained at most $k-1$ edges. In $G''$ the weight of each such edge increases by at most $k 2^{m_0}$. In addition, every expansion of a super-terminal adds at most $k 2^{m_0}$ to the path. Therefore, $w''(P'') \le w'(P') + 2k^22^{m_0} \le d' + d \cdot 2k^22^{-k}$.
By the induction hypothesis $$d' \le \left(C_0 \log^5 k + \log^5 k \cdot 2^{-k} \sum_{k' \le k-1}{(k')^2}\right)\tilde{d}.$$ 
Since $\tilde{d} \le d$, we get that 
\ifprocs
$d''$ is at most
\begin{equation*}
\left(C_0 \log^5 k + \log^5 k \cdot 2^{-k} \sum_{k' \le k}{2(k')^2}\right)d.
\end{equation*}
\else
\begin{equation*}
\begin{split}
d'' &\le \left(C_0 \log^5 k + \log^5 k \cdot 2^{-k} \sum_{k' \le k-1}{2(k')^2}\right)d + d \cdot 2k^22^{-k} \\
&\le \left(C_0 \log^5 k + \log^5 k \cdot 2^{-k} \sum_{k' \le k}{2(k')^2}\right)d.
\end{split}
\end{equation*}
\fi
This completes the proof of Lemma~\ref{l:discard}
(and in fact also of Theorem~\ref{thm:main}).
\end{proof}
\fi

\paragraph{Acknowledgments.}
We thank the anonymous reviewers for useful comments, 
and particularly for suggesting to improve the proof of Lemma~\ref{l:radii}
using a tail bound for sum of exponential random variables, 
which saves a factor of $\log k$ in our main result, Theorem~\ref{thm:main}.

\bibliographystyle{alphaurlinit}
\bibliography{robi}
\end{document}